\theoremstyle{definition}
\newtheorem{remark}{Remark}
\newtheorem{assumption}{Assumption}
\Crefname{assumption}{Assumption}{Assumptions}
\Crefname{section}{Section}{Sections}
\newcommand{\X}{\mathcal X}
\newcommand{\Yobs}{Y}
\newcommand{\Ypot}{Y^*}
\newcommand{\cvar}{\op{CVaR}}
\newcommand{\cvarat}[1]{\cvar_{#1}}
\newcommand{\ate}{\bar\tau}
\newcommand{\ite}{\delta}
\newcommand{\catef}{\tau}
\newcommand{\cate}{\catef(X)}
\newcommand{\ostar}{\textcircled{$\star$}}
\newcommand{\tmu}{\check\mu}
\newcommand{\tprop}{\check e}
\newcommand{\tcate}{\check\catef}
\newcommand{\tbeta}{\check\beta}
\newcommand{\smu}{\tilde\mu}
\newcommand{\sprop}{\tilde e}
\newcommand{\scate}{\tilde\catef}
\newcommand{\sbeta}{\tilde\beta}
\let\citealp\citep
\begin{document}

\title{Treatment Effect Risk: Bounds and Inference}

\author{Nathan Kallus}
\email{kallus@cornell.edu}
\affiliation{%
  \institution{Netflix and Cornell University}
}

\renewcommand{\shortauthors}{Nathan Kallus}

\begin{abstract}Since the average treatment effect (ATE) measures the change in social welfare, even if positive, there is a risk of negative effect on, say, some 10\% of the population. Assessing such risk is difficult, however, because any one individual treatment effect (ITE) is never observed, so the 10\% worst-affected cannot be identified, while distributional treatment effects only compare the first deciles within each treatment group, which does not correspond to any 10\%-subpopulation. In this paper we consider how to nonetheless assess this important risk measure, formalized as the conditional value at risk (CVaR) of the ITE-distribution. We leverage the availability of pre-treatment covariates and characterize the tightest-possible upper and lower bounds on ITE-CVaR given by the covariate-conditional average treatment effect (CATE) function. We then proceed to study how to estimate these bounds efficiently from data and construct confidence intervals. This is challenging even in randomized experiments as it requires understanding the distribution of the unknown CATE function, which can be very complex if we use rich covariates so as to best control for heterogeneity. We develop a debiasing method that overcomes this and prove it enjoys favorable statistical properties even when CATE and other nuisances are estimated by black-box machine learning or even inconsistently. Studying a hypothetical change to French job-search counseling services, our bounds and inference demonstrate a small social benefit entails a negative impact on a substantial subpopulation.\end{abstract}

\maketitle

\section{Introduction}\label{sec:intro}

Policymakers and project managers regularly conduct randomized experiments (``A/B tests'') to assess potential changes to policy or product. A key metric is the \emph{average treatment effect (ATE)}, the difference in the population-average outcome when everyone or no one is treated. 
ATEs are easily estimated by differences in the sample-average outcome within treatment groups, barring interference. 
Estimation from observational data is also possible under appropriate assumptions, \eg, unconfoundedness \citep{imbens2015causal}.
Identifying an individual's outcome with their utility -- as we will throughout this paper --
the ATE is the difference in social welfare in these two counterfactual scenarios. 
By linearity, this coincides with the population-average of each \emph{individual}'s treatment effect, the difference in their own utility in the two counterfactual scenarios. 

It is widely recognized, however, that treatment effects can vary widely between individuals \citep{heckman1997making,crump2008nonparametric}. Thus, even if the ATE is positive, there is a \emph{risk} that many individuals are harmed by the proposed change. Crucially, \emph{distributional} treatment effects (DTEs), which compare the two counterfactual utility distributions beyond their means, \emph{cannot} capture this risk. Indeed, \citet{imbens2009recent} note ``quantile effects are defined as differences between quantiles of the two marginal potential outcome distributions, and not as quantiles of the unit level effect.'' They nonetheless advocate for the former because policy ``choice should be governed by preferences of the policymaker over these distributions.'' However, such rational-decision-making framing presumes a policymaker facing a choice between lotteries drawing at random from individual outcomes. Instead, concerned with equity beyond social welfare, we should worry about the individuals, not the policymaker. Hypothetically, harm to some individuals is possible even when the ``treat-all'' utility distribution first-order-dominates ``treat-none'' so that \emph{any} expected-increasing-utility-function-maximizer would choose ``treat-all.''

One way to gain further insight into heterogeneity and hence inequities is to consider conditional ATEs (CATEs) given pre-treatment covariates.
For example, if we observe a discrete sensitive attribute (\eg, race), we can simply compare the CATE in each attribute-value group.\footnote{We may still make some inferences on these even if we do not observe such attributes; see \citealp{chen2019fairness,kallus2021assessing}.} But it may not always be clear what are relevant such attributes and whether we are omitting important ones.
Given rich and continuous covariates, we can still reliably learn the CATE function by leveraging recent advances in causal machine learning \citep{slearner,xlearner,drlearner,rlearner,causaltree,causalforest}. It may still not be clear, nonetheless, whether the covariates are relevant for fairness considerations, what groups are captured in this way, and/or how to summarize the many individual predictions of complex machine-learned CATEs.

It is therefore particularly appealing to focus directly on the distribution of \emph{individual} treatment effects (ITEs), such as the average effects among the worst-affected 10\%, 20\%, \etc, corresponding to the conditional value at risk (CVaR) of this distribution.
The challenge is that no ITE can ever be observed -- the so-called Fundamental Problem of Causal Inference. Nonetheless, regardless of whether covariates are meaningful for fairness considerations, if they control for heterogeneity, CATE may predict ITE well.
%
%
%
In this paper, we leverage this to proxy these important but unidentifiable treatment-effect risk measures. Specifically, we provide the tightest-possible upper and lower bounds given by CATE on the CVaR of ITE. By construction these are functions of distributions of observables. What remains is inference from data, whether experimental or observational. Since the CATE function can be high-dimensional, especially if we use a lot of covariates to control for heterogeneity, inference is difficult and na\"ive plug-in approaches fail. We design debiased estimators and confidence intervals for our bounds that overcome this challenge by being exceedingly robust: given rough, machine-learned estimates of CATE and other nuisances, they behave as though we used perfect estimates; they remain consistent even when some nuisances are mis-estimated; and surprisingly they remain valid as bounds even when CATE is mis-estimated.
We conclude by using our tools to illustrate treatment-effect risk in a case study of job-search-assistance benefits.

\section{Problem Set Up and Definitions}

Each individual in the population is associated with two potential outcomes, $\Ypot(0),\,\Ypot(1)\in\Rl$, corresponding to individual utility under ``treat-all'' and ``treat-none,'' respectively, and baseline covariates (observable characteristics), $X\in\X$.
The ITE, ATE, and CATE are, respectively,
\begin{align*}
\ite&=\Ypot(1)-\Ypot(0),
\qquad\ate=\E[\Ypot(1)]-\E[\Ypot(0)]=\E\ite=\E\cate\\
\cate&=\E[\delta\mid X]=\mu(X,1)-\mu(X,0),\quad\text{where $\mu(X,a)=\E[\Ypot(a)\mid X]$}.
\end{align*}
%
We assume $\E\ite^2<\infty$ throughout.

Of interest is the average effect among the $(100\times\alpha)\%$-worst affected, formalized by $\cvarat\alpha(\ite)$, where for any $Z$ \citep{rockafellar2000optimization}\footnote{CVaR is sometimes defined for the right tail, corresponding to our $-\cvarat\alpha(-Z)$.}
\begin{equation}\label{eq:cvar}\cvarat\alpha(Z)=\sup_\beta\prns{\beta+\frac1\alpha\E(Z-\beta)_-},\end{equation}
where $(u)_-=u\wedge0$.
The $\sup$ is attained by $\beta$ equal the $\alpha$-quantile:
\begin{equation}\label{eq:quantile}
F_Z^{-1}(\alpha)=\inf\fbraces{\beta:F_Z(\beta)\geq\alpha},
\quad\text{where}~F_Z(z)=\Prb{Z\leq z}.
\end{equation}

Provided $F_Z(F_Z^{-1}(\alpha))=\alpha$ (\eg, $Z$ continuous), then $\cvarat\alpha(Z)=\Eb{Z\mid Z\leq F_Z^{-1}(\alpha)}$. Otherwise, $\cvarat\alpha(Z)\in[\Eb{Z\mid Z< F_Z^{-1}(\alpha)},\,\Eb{Z\mid Z\leq F_Z^{-1}(\alpha)}]$, and, unlike these two endpoints, $\cvarat\alpha(Z)$ is continuous in $\alpha$ and coherent \citep{artzner1999coherent}. It is therefore the \emph{correct} generalization of ``average of the $(100\times\alpha)\%$-lowest values'' when ambiguous due to discontinuities.


We consider data from a randomized experiment or observational study. Each individual is associated with a treatment $A\in\{0,1\}$, and we observe the \emph{factual} outcome $\Yobs=\Ypot(A)$ (never $\Ypot(1-A)$). The data is $(X_i,A_i,\Yobs_i)\sim(X,A,Y)$, $1\leq i\leq n$.
We assume unconfoundedness throughout: $\Ypot(a)\indep A\mid X$.\footnote{And $\Yobs=\Ypot(A)$ assumes non-interference \citep{rubin1986comment}.}
Randomized experiments (our focus) ensure this by design (often with $X\indep A$). 
Our results nonetheless extend to observational settings assuming unconfoundedness.
Under unconfoundedness, ATE and CATE are identifiable, \ie, are functions of the $(X,A,\Yobs)$-distribution: $\mu(X,a)=\E[\Yobs\mid X,A=a]$, $\cate=\mu(X,1)-\mu(X,0)$, $\ate=\E\cate$ ($=\E[\Yobs\mid A=1]-\E[\Yobs\mid A=0]$ if $X\indep A$).
Define also
the propensity score $e(X)=\Prb{A=1\mid X}$ and marginal-outcome regression $\bar\mu(X)=\E[\Yobs\mid X]=e(X)\mu(X,1)+(1-e(X))\mu(X,0)$.

We now illustrate treatment-effect risk and its \emph{un}identifiability, which motivates us to consider the tightest-possible \emph{identifiable} bounds (\cref{sec:bounds}) and inference thereon (\cref{sec:inference}).

\begin{example}[Simple Example]\label{ex:simple}
Suppose
$$\begin{pmatrix}\Ypot(0)\\\Ypot(1)\end{pmatrix}\sim\mathcal N\prns{\begin{pmatrix}\mu(0)\\\mu(1)\end{pmatrix},\,\begin{pmatrix}1&\rho\\\rho&1\end{pmatrix}},~\mu(1)\geq\mu(0),~\rho\in[-1,1].$$
If $\ate=\mu(1)-\mu(0)>0$, the $\Ypot(1)$-distribution \emph{first-order-dominates} $\Ypot(0)$. If $\mu(1)=\mu(0)$, the distributions are \emph{indistinguishable}.
However, the ITE-distribution depends on $\rho$: $\ite\sim\mathcal N(\mu(1)-\mu(0),\sqrt{2-2\rho})$, 
$\cvarat{0.1}(\ite)=\ate-1.75 \sqrt{2-2\rho}$. 
The unidentifiability of $\cvarat{0.1}(\ite)$ follows because the $(A,\Yobs)$-distribution is fixed given just $\mu(0),\mu(1),\Prb{A=1}$ while $\cvarat{0.1}(\ite)$ varies with $\rho$. 
\end{example}

\begin{remark}[Covariate-conditional policies]
Treat (\ie, rollout to) all or none is often the choice faced by project managers, but given covariates we can learn covariate-conditional treatment policies \citep{kallus2018balanced,athey2017efficient,qian2011performance,zhao2012estimating,kallus2021minimax,
kitagawa2018should}. Learning aside, treating only when $\cate>0$ ensures all covariate-defined groups have nonnegative group-average effects.\footnote{However, even this ideal can induce disparate impacts \citep{kallus2019assessing}.}
Personalizing on all available covariates
is however generally infeasible due to 
operational, non-stationarity, and/or ethical/reputational concerns.
Nonetheless, given any policy $\pi:\X\to\{0,1\}$, we may simply redefine ITE as $Y(\pi(X))-Y(0)$ and our results still apply. This is especially relevant when $\pi$ personalizes on some covariates and the rest explain heterogeneity conditionally thereon.
\end{remark}

\begin{remark}[Risk of observed vs unobserved variables]
CVaR is an example of coherent risk measures \citep{artzner1999coherent}, which are used to assess distributions beyond expectations and are equivalent to distributionally-robust worst-case expectations \citep{ruszczynski2006optimization}. For example, 
CVaR is the worst-case expectation among distributions with Radon-Nikodym derivative to the given distribution bounded by $1/\alpha$.
Other distributional divergences can also define ambiguity sets \citep[\eg,][]{ben2013robust,bertsimas2018robust,esfahani2018data}.
Alternative approaches limit the \emph{complexity} of subpopulations \citep{NEURIPS2020_07fc15c9,kearns2018preventing}.
In both finance \citep{krokhmal2002portfolio}, distributionally-robust supervised learning \citep{bagnell2005robust}, demographics-free fair learning \citep{NEURIPS2020_07fc15c9},
and CVaR-DTEs \citep{kallus2019localized},
the variable whose risk is of interest is \emph{always observed}. \Eg,
model loss on each training example is observed.
In contrast, we consider risk of an \emph{unobserved variable}, hence we study bounds in \cref{sec:bounds}. 
For inference, we are uniquely concerned with risk of an \emph{unknown function}, hence we develop learning-robust methods in \cref{sec:inference}.
\end{remark}


\section{Bounds}\label{sec:bounds}


\subsection{Upper Bound: The CATE-CVaR}

An upper bound on $\cvarat\alpha(\ite)$ is crucial: if negative or substantially below ATE, the change poses certifiable risk or inequity to an $(100\times\alpha)\%$-subpopulation.

\begin{theorem}[Upper Bound by CATE-CVaR]\label{thm:cvarbound}
\begin{equation}\label{eq:cvarbound}
\cvarat\alpha(\ite)\leq\cvarat\alpha(\cate).
\end{equation}
Moreover, given any $X$-distribution and integrable $\catef:\X\to\Rl$, some $(X,\ite)$-distribution
has the given $X$-marginal, $\cate=\E[\ite\mid X]$, and \cref{eq:cvarbound} holding with equality.
\end{theorem}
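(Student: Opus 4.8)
The plan is to derive the inequality \cref{eq:cvarbound} directly from the variational representation \cref{eq:cvar} of CVaR, and then to certify tightness by a degenerate construction in which the ITE is a deterministic function of $X$.

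For the inequality, the crucial structural fact is that the integrand $u\mapsto(u)_-=u\wedge0$ appearing in \cref{eq:cvar} is concave, being the pointwise minimum of the two affine maps $u\mapsto u$ and $u\mapsto0$. Fixing any $\beta\in\Rl$ and applying the conditional Jensen inequality to this concave map, together with $\E[\ite\mid X]=\cate$, gives
\[\E[(\ite-\beta)_-\mid X]\leq\prns{\E[\ite-\beta\mid X]}_-=(\cate-\beta)_-.\]
Taking total expectations yields $\E(\ite-\beta)_-\leq\E(\cate-\beta)_-$, hence $\beta+\frac1\alpha\E(\ite-\beta)_-\leq\beta+\frac1\alpha\E(\cate-\beta)_-$ for every $\beta$. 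Taking the supremum over $\beta$ on both sides and invoking \cref{eq:cvar} delivers \cref{eq:cvarbound}. All expectations are finite because $|(u)_-|\leq|u|$ bounds the integrands by $|\ite|+|\beta|$ and $|\cate|+|\beta|$, which are integrable given the standing moment assumption and the integrability of $\catef$.

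For the tightness claim I would construct the joint law of $(X,\ite)$ by drawing $X$ from the prescribed marginal and then setting $\ite:=\catef(X)$ deterministically. This has the required $X$-marginal; it satisfies $\E[\ite\mid X]=\catef(X)=\cate$; and, since $\ite$ and $\cate$ are now the same random variable, they share a distribution, so $\cvarat\alpha(\ite)=\cvarat\alpha(\cate)$ and \cref{eq:cvarbound} holds with equality. If one prefers to display underlying potential outcomes realizing this ITE law, take $\Ypot(0)\equiv0$ and $\Ypot(1):=\catef(X)$.

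The main difficulty is conceptual rather than computational: the proof hinges on choosing the variational form \cref{eq:cvar} over the quantile/conditional-expectation characterization, because $(\cdot)_-$ is concave and Jensen's inequality therefore passes through the conditioning $\ite\mapsto\E[\ite\mid X]$ uniformly in $\beta$, whereas quantiles behave poorly under such averaging. Once this representation is fixed, the remaining care is merely to note that the inequality holds termwise in $\beta$ before the supremum is taken, so no appeal to attainment of the maximizing $\beta$ (the $\alpha$-quantile of \cref{eq:quantile}) is required.
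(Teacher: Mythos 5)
Your proof is correct and follows essentially the same route as the paper's: iterated expectations plus conditional Jensen's inequality applied to the concave map $u\mapsto(u)_-$ inside the variational representation \cref{eq:cvar}, with the supremum over $\beta$ taken after the termwise bound, and tightness certified by the degenerate coupling $\ite=\cate$. No gaps.
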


Since $\cate$ represents our \emph{best guess} for $\ite$ (in squared error), imputing the unknown $\ite$ with $\cate$ seems reasonable.
\Cref{thm:cvarbound} shows this in fact provides an upper bound.\footnote{\label{footnote:coherentrisk}\Cref{eq:cvarbound} extends to any coherent risk by writing $\ite=\cate+(\ite-\cate)$ and using sub-additivity.}
If $\cate$ is continuous, $\cvarat\alpha(\cate)=\Efb{\delta\mid \cate\leq F_{\cate}^{-1}(\alpha)}$,
and \cref{eq:cvarbound} is intuitive: 
$\cvarat\alpha(\ite)$ is worst average effect among \emph{all} $(100\times\alpha)\%$-subpopulations, while
$\cvarat\alpha(\cate)$ only among $X$-defined subpopulations.
This bound is also tight: given just $\cate$, it cannot be improved.\footnote{The bound need not be tight given the $(X,A,\Yobs)$-distribution, which characterizes more than the mean of the $(\delta\mid X)$-distribution, as described by the Fr\'echet-Hoeffding bounds. We focus on best bounds given just by CATE, which is the common tool to understand effect heterogeneity in practice.}

\Cref{thm:cvarbound} implies an ordering:
\begin{equation}\label{eq:ordering}\cvarat{\alpha_1}(\ite)\leq\cvarat{\alpha_2}(\ite)\leq\cvarat{\alpha_2}(\cate)\leq\cvarat{\alpha_3}(\cate)\leq\ate~~~\forall~0<\alpha_1\leq\alpha_2\leq\alpha_3\leq1.\end{equation}

\begin{remark}[CVaR as summary of CATE]\label{remark:summary}
Aside from being a bound, $\cvarat\alpha(\cate)$ is of independent interest as a summary of effect heterogeneity along meaningful covariates $X$ of explicit equity concern. 
When $X$ is more than a few discrete groups, understanding the many facets of estimated heterogeneity is challenging, both interpretationally and statistically. We could test for $X$-heterogeneity \citep{crump2008nonparametric,sawilowsky1990nonparametric,gail1985testing,davison1992treatment}.\footnote{There are also tests for heterogeneity \emph{not} explained by $X$ \citep{ding2019decomposing,ding2016randomization}. These, like us, leverage bounds on unidentifiable quantities.} 
\Eg, omnibus test $H_0:0\in\argmin_\gamma\E(\cate-\ate-\gamma^\top (X-\E X))^2$ \citep{chernozhukov2018generic}.
This, however, may detect minor heterogeneity in small subpopulations, may not assess magnitude or direction, and may be inappropriate if we expect heterogeneity. In contrast, $\cvarat\alpha(\cate)$ is a simple, meaningful summary of $\cate$.
Inference, however, is a challenge. We tackle this in \cref{sec:inference}.
%
\end{remark}

\begin{remark}[Inter-quantile averages of CATE]\label{remark:gate}
CVaR of CATE can in fact permit us to summarize average effects in the middle, not just the tails. Consider any $0<\alpha<\alpha'<1$. Provided that $F_{\cate}(F_{\cate}^{-1}(\alpha))=\alpha$, $F_{\cate}(F_{\cate}^{-1}(\alpha'))=\alpha'$ (\eg, $\cate$ is continuous), we have that
\begin{equation}\label{eq:interquantile}
\Eb{\Ypot(1)-\Ypot(0)\mid F_{\cate}^{-1}(\alpha)<\cate\leq F_{\cate}^{-1}(\alpha')}
=
\frac{\alpha'\cvarat{\alpha'}(\cate)-\alpha\cvarat{\alpha}(\cate)}{\alpha'-\alpha}.
\end{equation}
\Cref{eq:interquantile} is the average effect among individuals with CATE between the $\alpha$- and $\alpha'$-quantiles. A similar but different quantity is considered in \citet{chernozhukov2018generic}: the average effect among individuals in inter-quantile ranges of an \emph{estimate} of CATE fit on a split sample, rather than the true CATE. They consider averaging this over splits, but that average still need not correspond to \cref{eq:interquantile}, and this approach is \emph{not} robust to errors in the CATE estimate, meaning these errors will propagate to non-negligible terms in the estimate and its variance. In contrast, by leveraging the unique optimization structure of CVaR, in \cref{sec:inference} we provide an estimator that \emph{is} robust to such errors, allowing us to estimate the CVaR of the true CATE, rather than a split-sample-estimated CATE.
By writing \cref{eq:interquantile} using CVaR, we can then leverage these results to get robust estimates for inter-quantile averages, as we will explain in \cref{remark:comparing}.
\end{remark}







\begin{remark}[\emph{Who} is negatively affected?]
Suppose we find $\cvarat\alpha(\cate)<0$ while $\ate>0$, where $\alpha$ is ``substantial'' -- the social-welfare benefit of the proposal is borne by some substantial negatively-impacted subpopulation.
While that may already cool enthusiasm for the proposal, we may wonder \emph{who} are the harmed individuals, \eg, to help design a new, better treatment.

Assuming continuity, $\cvarat\alpha(\cate)$ is the ATE among individuals with $\cate\leq F_{\cate}^{-1}(\alpha)$ -- an \emph{identifiable} group. A question is interpretation.
This is easy if $\cate$ is linear or tree (or estimated using such models, which still gives a bound per \cref{thm:doublyvalid}). 
We can also consider summaries of this group, \eg, fraction belonging to sensitive groups, or learn simpler models to explain membership \citep{lakkaraju2019faithful,ribeiro2016should}.
%
Alternatively, given we detect substantial inequities,
%
we can \emph{separately} investigate which variables negatively modulate treatment effect by, \eg, studying
$\argmin_\gamma\E(\cate-\ate-\gamma^\top X)^2$ \citep{drlearner,chernozhukov2018generic}.
\end{remark}

\subsection{Lower Bounds under Limited Residual Heterogeneity Range}\label{sec:lowerboundbounded}

Much as we try to best control for heterogeneity, disparate effect-predictiveness of covariates may mean some negative ITEs are averaged out and hidden while others are singled out. A remedy when concerned about disproportionate predictiveness among sensitive groups (\eg, race) would be to include these (or proxies) within $X$. But, we may always worry about missing something. 
A lower bound can provide assurances about what the upper bound may be missing.


This depends on how much residual heterogeneity remains.
Our first set of lower bounds limit the range of residual heterogeneity, \ie, almost-sure bounds on $\ite-\cate$, while our second set of lower bounds limit its variance, \ie, bounds on $\op{Var}(\ite\mid X)=\E(\ite-\cate)^2$.

\begin{theorem}\label{thm:cvarlb2}
Suppose $\abs{\cate-\ite}\leq b$. Then
\begin{equation}\label{eq:cvarlb2}
\cvarat\alpha(\ite)\geq\sup_\beta\prns{\beta+\frac1{2\alpha}\E[(\cate-b-\beta)_-]+\frac1{2\alpha}\E[(\cate+b-\beta)_-]}.
\end{equation}
Moreover, given any $X$-distribution and integrable $\catef:\X\to\Rl$, some $(X,\ite)$-distribution has the given $X$-marginal, $\cate=\E[\ite\mid X]$, $\abs{\cate-\ite}\leq b$, and \cref{eq:cvarlb2} holding with equality.
\end{theorem}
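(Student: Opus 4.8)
The plan is to reduce \cref{eq:cvarlb2} to a single pointwise-in-$\beta$ inequality and then exploit the concavity of the CVaR integrand. Recall $\cvarat\alpha(\ite)=\sup_\beta\prns{\beta+\frac1\alpha\E(\ite-\beta)_-}$. Since taking a supremum preserves inequalities (if $f(\beta)\ge g(\beta)$ for all $\beta$ then $\sup_\beta f\ge\sup_\beta g$), it suffices to show that for every fixed $\beta$,
\begin{equation*}
\E[(\ite-\beta)_-]\ge\tfrac12\E[(\cate-b-\beta)_-]+\tfrac12\E[(\cate+b-\beta)_-].
\end{equation*}
Multiplying by $1/\alpha$, adding $\beta$, and taking $\sup_\beta$ of both sides then reproduces \cref{eq:cvarlb2}, with the left-hand supremum equal to $\cvarat\alpha(\ite)$ by definition.

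The key step is a conditional chord argument. Fix $\beta$ and condition on $X$. The map $u\mapsto(u-\beta)_-=\min(u-\beta,0)$ is concave, being a minimum of two affine functions, so on the interval $[\cate-b,\cate+b]$ it lies \emph{above} its secant line $\ell$ joining the two endpoints. Because $\abs{\cate-\ite}\le b$ means $\ite$ is supported on exactly this interval, and $\E[\ite\mid X]=\cate$ is its midpoint, I get
\begin{equation*}
\E[(\ite-\beta)_-\mid X]\ge\E[\ell(\ite)\mid X]=\ell(\E[\ite\mid X])=\ell(\cate)=\tfrac12(\cate-b-\beta)_-+\tfrac12(\cate+b-\beta)_-,
\end{equation*}
where the middle equality uses that $\ell$ is affine and the last uses that the value of a secant at the midpoint equals the average of its endpoint values. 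Taking the expectation over $X$ yields the pointwise inequality above; this is a conditional Jensen-type bound in the ``hard'' direction, made usable by the range restriction.

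For the achievability claim I would exhibit the extremal two-point law attaining equality: given any $X$-distribution and integrable $\catef$, set, conditionally on $X$, $\ite=\cate-b$ and $\ite=\cate+b$ each with probability $1/2$. This has the prescribed $X$-marginal, satisfies $\E[\ite\mid X]=\cate$ and $\abs{\cate-\ite}=b\le b$, and—being supported precisely on the endpoints where $(u-\beta)_-$ meets its secant—turns the conditional inequality into an \emph{equality for every} $\beta$. Consequently the two $\beta$-objectives coincide identically in $\beta$, so their suprema are equal and \cref{eq:cvarlb2} holds with equality; integrability of $\catef$ guarantees $\ite$ is integrable and all expectations are finite.

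The only genuine content is recognizing the concavity/extremal-point structure; once $(u-\beta)_-$ is seen as concave, the bound is immediate. The one point to handle with care is that the same two-point law attains equality \emph{simultaneously} for all $\beta$ rather than merely at the optimizing $\beta$, which is exactly what permits equating the outer suprema without a separate analysis of the quantile or of which $\beta$ is optimal. Here this is automatic because equality is pointwise in $\beta$, so I expect no real obstacle beyond verifying the endpoint distribution is admissible.
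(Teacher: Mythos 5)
Your proof is correct, but it takes a genuinely different and more elementary route than the paper. The paper bounds $\cvarat\alpha(\ite)$ from below by an $\inf$ over all admissible joint laws, swaps $\inf$ and $\sup$ by weak minimax duality, and then evaluates the inner conditional infimum as a semi-infinite linear program over measures on $[-b,b]$ with a mean constraint, whose (weak) dual is computed explicitly to give $\tfrac12\left[(\cate-b-\beta)_-+(\cate+b-\beta)_-\right]$. You instead prove the pointwise-in-$\beta$ inequality directly for the \emph{actual} distribution via a chord argument: $(u-\beta)_-$ is concave, hence lies above its secant on $[\cate-b,\cate+b]$, and evaluating the affine secant at the conditional mean $\cate$ (the midpoint) gives the average of the endpoint values. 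This is the classical Edmundson--Madansky bound, and it makes the extremal two-point structure transparent; it also dispenses with the minimax swap entirely, since no optimization over distributions is needed for the lower bound. What the paper's heavier machinery buys is a uniform template: the same $\inf$-$\sup$-plus-duality scaffold is reused for \cref{thm:cvarlb3}, where the constraint is a conditional variance bound and a direct chord argument is no longer available (there one genuinely needs strong duality for the convex semi-infinite program). Your tightness construction is identical to the paper's, and your observation that equality holds pointwise in $\beta$ (so the suprema trivially coincide) is exactly the right justification.
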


The right-hand side of \cref{eq:cvarlb2} is the $\alpha$-CVaR of the equal-mixture distribution of $\cate-b$ and $\cate+b$. It reduces to $\cvarat\alpha(\cate)$ when $b=0$ (equivalent to $\delta=\cate$). 
When $\alpha=1$, it becomes $\ate$ for any $b\geq0$ (as necessary for tightness).
The lower bound is established via weak semi-infinite duality and its tightness by exhibiting the equal-mixture distribution.

Since $(\cate\pm b-\beta)_-\geq (\cate-\beta)_--b$, \cref{eq:cvarlb2} upper bounds $\cvarat\alpha(\cate)-b$. This simpler bound is tight if we only assume a one-sided-bounded range.

\begin{theorem}\label{thm:cvarlb1}
Suppose $\cate-\ite\leq b$. Then
\begin{equation}\label{eq:cvarlb1}
\cvarat\alpha(\ite)\geq\cvarat\alpha(\cate)-b.
\end{equation}
Moreover, for $\alpha<1$, given any $\varepsilon>0$, $X$-distribution, and integrable $\catef:\X\to\Rl$, some $(X,\ite)$-distribution has the given $X$-marginal, $\cate=\E[\ite\mid X]$, $\cate-\ite\leq b$, and \cref{eq:cvarlb1} holding with equality up to $\varepsilon$-error.
\end{theorem}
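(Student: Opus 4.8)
\emph{The inequality.} I would read off \cref{eq:cvarlb1} directly from the variational formula \cref{eq:cvar}, using only that $u\mapsto(u)_-=u\wedge0$ is nondecreasing. The hypothesis $\cate-\ite\le b$ is the pointwise bound $\ite\ge\cate-b$, so $(\ite-\beta)_-\ge(\cate-b-\beta)_-$ for every $\beta$; taking expectations, adding $\beta$, and maximizing over $\beta$ gives
\[
\cvarat\alpha(\ite)\ge\sup_\beta\prns{\beta+\tfrac1\alpha\E(\cate-b-\beta)_-}=\cvarat\alpha(\cate)-b,
\]
where the last equality is the substitution $\beta\mapsto\beta+b$ inside the supremum. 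Equivalently, this is just monotonicity and translation-equivariance of the coherent measure $\cvarat\alpha$ applied to $\ite\ge\cate-b$. This direction is immediate and is not the crux.

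\emph{Tightness.} For the ``moreover'' claim I would exhibit, for each target $\varepsilon>0$, a conditional law of $\ite$ given $X$ that pins all the action at the lower envelope $\cate-b$. Fix a small $\gamma\in(0,1)$ and, conditionally on $X$, set
\[
\ite=\cate-b~\text{w.p.}~1-\gamma,\qquad \ite=\cate+\tfrac{(1-\gamma)b}{\gamma}~\text{w.p.}~\gamma.
\]
This leaves the $X$-marginal untouched; a one-line check gives $\E[\ite\mid X]=\cate$, so $\catef$ is indeed the CATE; and both atoms satisfy $\cate-\ite\le b$ (with equality at the lower atom and strict slack at the upper one). The point is that the compensating upper atom, needed to restore the conditional mean, sits at height $\cate+(1-\gamma)b/\gamma$, which is pushed to $+\infty$ as $\gamma\downarrow0$.

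\emph{Why it works, and the obstacle.} As $\gamma\downarrow0$ the upper atom carries vanishing probability and escapes any fixed left tail, so the lower $\alpha$-fraction of $\ite$ concentrates on the values $\{\cate-b\}$; concretely, plugging the two-point law into \cref{eq:cvar} gives $\E(\ite-\beta)_-=(1-\gamma)\E(\cate-b-\beta)_-+\gamma\E(\cate+\tfrac{(1-\gamma)b}\gamma-\beta)_-$, and I would argue the second term is negligible so that the supremum converges to $\cvarat\alpha(\cate-b)=\cvarat\alpha(\cate)-b$; choosing $\gamma$ small enough then yields equality up to $\varepsilon$. The main obstacle is exactly this second term: for very negative $\cate$ the upper atom need not clear a given threshold $\beta$, so I would control $\gamma\E(\cate+\tfrac{(1-\gamma)b}\gamma-\beta)_-$ by restricting the supremum to a bounded range of $\beta$ (valid since the maximizer lies near the $\alpha$-quantile of $\cate-b$, finite by integrability of $\catef$) and invoking uniform integrability of $\catef$ on the vanishing-probability event where the atom falls short. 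This argument also explains the $\varepsilon$: the mean constraint $\E[\ite\mid X]=\cate$ forbids taking $\gamma=0$ (that would make the CATE $\cate-b$, not $\cate$), so exact equality is unattainable by this route and only an $\varepsilon$-approximation is claimed.
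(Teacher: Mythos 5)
Your proof is correct and matches the paper's: the inequality is obtained by the same monotonicity-plus-translation argument, and your two-point conditional law is exactly the paper's construction under the substitution $q=1-\gamma$ (lower atom $\cate-b$ with mass $q$, upper atom $\cate+qb/(1-q)$ with mass $1-q$). The only divergence is the final limit: the paper simply drops the nonpositive upper-atom term to get the clean bound $\cvarat\alpha(\ite)\leq\cvarat{\alpha/q}(\cate)-b$ and concludes by continuity of $q\mapsto\cvarat{\alpha/q}(\cate)$ as $q\to1$ (this is where $\alpha<1$ enters), whereas you estimate the upper-atom term directly; that also works, and in fact plain integrability of $\catef$ already gives the $O(\gamma)$ bound uniformly over a bounded $\beta$-range (since the term is at most $\gamma(|\beta|+\E|\cate|)$ in magnitude), so the uniform-integrability appeal is unnecessary, though you do still need the $\beta$-localization step you describe.
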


The lower bound is immediate and its tightness given by exhibiting a skewed two-point-mass distribution.
For $\alpha=1$, \cref{eq:cvarlb1} simply reads $\ate\geq\ate-b$, but for \emph{any} $\alpha<1$, \cref{eq:cvarlb1} is actually \emph{tight}.



\subsection{Lower Bounds under Limited Residual Heterogeneity Variance}\label{sec:lowerboundvariance}

Limiting residual heterogeneity within a range may be implausible, or plausible only with large constants, yielding a weak bound. We next explore the implication of the residual ITE-variance after controlling for $X$, which we can bound given observables.

\begin{theorem}\label{thm:cvarlb3}
Suppose $\op{Var}(\ite\mid X)\leq \bar\sigma^2(X)$ for some integrable $\bar\sigma^2:\X\to\Rl_+$. Then
\begin{equation}\label{eq:cvarlb3}
\cvarat\alpha(\ite)\geq\sup_\beta\prns{\beta+\frac1{2\alpha}\Eb{\cate-\beta-\sqrt{(\cate-\beta)^2+\bar\sigma^2(X)}}}.
\end{equation}
Moreover, given any $\varepsilon>0$, $X$-distribution, and integrable $\catef:\X\to\Rl$, some $(X,\ite)$-distribution has the given $X$-marginal, $\cate=\E[\ite\mid X]$, $\op{Var}(\ite\mid X)\leq \bar\sigma^2(X)$, and \cref{eq:cvarlb3} holding with equality up to $\varepsilon$-error.
\end{theorem}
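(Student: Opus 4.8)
The plan is to prove the inequality \cref{eq:cvarlb3} from the variational form of CVaR together with a single conditional second-moment estimate, and then to certify tightness by exhibiting an explicit two-point conditional law.

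For the lower bound, I fix an arbitrary scalar $\beta$ and work conditionally on $X$. Writing $(u)_-=\tfrac12(u-\abs{u})$ gives $\Eb{(\ite-\beta)_-\mid X}=\tfrac12\prns{(\cate-\beta)-\Eb{\abs{\ite-\beta}\mid X}}$. The one estimate needed is conditional Cauchy--Schwarz (Jensen), $\Eb{\abs{\ite-\beta}\mid X}\le\sqrt{\Eb{(\ite-\beta)^2\mid X}}=\sqrt{(\cate-\beta)^2+\op{Var}(\ite\mid X)}\le\sqrt{(\cate-\beta)^2+\bar\sigma^2(X)}$, using the hypothesis in the last step. Substituting, taking expectations, and inserting into $\cvarat\alpha(\ite)\ge\beta+\tfrac1\alpha\E(\ite-\beta)_-$ from \cref{eq:cvar} yields the claimed bound at this $\beta$; taking the supremum over $\beta$ finishes the direction.

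For tightness, I fix a maximizer $\beta^*$ of the right-hand side of \cref{eq:cvarlb3} when one exists, and otherwise a $\beta^*$ within the $\varepsilon$-tolerance (this is precisely where the ``up to $\varepsilon$'' is used, to absorb non-attainment of the outer supremum). I then define the law of $\ite$ given $X$ to be the two-point distribution on $\beta^*\pm c(X)$ with $c(X)=\sqrt{(\cate-\beta^*)^2+\bar\sigma^2(X)}$, assigning the upper point mass $p(X)$ chosen so that $\Eb{\ite\mid X}=\cate$. A short computation gives $2p(X)-1=(\cate-\beta^*)/c(X)$, which lies in $[-1,1]$ since $c(X)\ge\abs{\cate-\beta^*}$, so $p(X)\in[0,1]$; it also forces the conditional variance to equal exactly $\bar\sigma^2(X)$, so the constraint holds with equality. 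Because $\abs{\ite-\beta^*}=c(X)$ is conditionally constant, the Cauchy--Schwarz step above is an equality, so for this law the bound's summand is attained at $\beta^*$.

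The main obstacle is to show that the supremum defining $\cvarat\alpha(\ite)$ for the constructed law is attained at this same $\beta^*$, so that the primal CVaR equals the constructed bound rather than merely exceeding it. The plan is to write out the first-order stationarity condition for $\beta^*$ in the right-hand side of \cref{eq:cvarlb3} and observe that it reads $\Eb{\tfrac12\prns{1-(\cate-\beta^*)/c(X)}}=\alpha$. Under the two-point law the event $\ite<\beta^*$ is exactly $\ite=\beta^*-c(X)$, whose conditional probability is $\tfrac12\prns{1-(\cate-\beta^*)/c(X)}$; hence the stationarity condition is precisely $\Prb{\ite<\beta^*}=\alpha$, identifying $\beta^*$ as an $\alpha$-quantile of $\ite$. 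Consequently the CVaR supremum in \cref{eq:cvar} is attained at $\beta^*$ and equals the right-hand side of \cref{eq:cvarlb3}, with the residual $\varepsilon$ covering non-attainment and degenerate cases such as $\bar\sigma^2(X)=0$.
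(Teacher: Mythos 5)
Your argument is correct, and it takes a genuinely different route from the paper's. The paper frames the problem as the distributionally robust program $\inf_{\mathbb P'\in\mathcal P}\sup_\beta(\beta+\frac1\alpha\E_{\mathbb P'}(\ite-\beta)_-)$ over the moment-constrained ambiguity set, invokes Sion's minimax theorem to swap $\inf$ and $\sup$, and then solves the inner per-$X$ moment problem by strong semi-infinite duality, explicitly computing a three-variable dual whose value is $\frac12(-m-\sqrt{m^2+s})$; this yields the bound and its tightness in one computation. You instead prove the inequality with no duality at all: the identity $(u)_-=\frac12(u-\abs{u})$ plus conditional Cauchy--Schwarz, $\Eb{\abs{\ite-\beta}\mid X}\leq\sqrt{(\cate-\beta)^2+\op{Var}(\ite\mid X)}$, gives the pointwise-in-$\beta$ bound directly, which is appreciably more elementary and self-contained. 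For tightness you exhibit the two-point conditional law on $\beta^*\pm\sqrt{(\cate-\beta^*)^2+\bar\sigma^2(X)}$ --- which is exactly the primal optimizer lurking behind the paper's dual (the two support points correspond to the two quadratic pieces meeting at the dual's kink) --- and you correctly identify the one nontrivial extra step your route requires that the paper gets for free: verifying that the CVaR supremum for the constructed law is attained at the same $\beta^*$, via the observation that first-order stationarity of the right-hand side of \cref{eq:cvarlb3} is precisely the $\alpha$-quantile condition for that law. Your sketch handles this correctly for $\alpha<1$ (where concavity and coercivity guarantee an exact maximizer); the only point stated loosely is $\Prb{\ite<\beta^*}=\alpha$, which in degenerate cases (e.g., $\bar\sigma^2(X)=0$ on a positive-probability set) should be replaced by the sandwich $\Prb{\ite<\beta^*}\leq\alpha\leq\Prb{\ite\leq\beta^*}$ obtained from $0$ lying in the superdifferential --- but that sandwich does hold for your construction, so the quantile identification and hence exact equality go through; the $\varepsilon$ is only genuinely needed for $\alpha=1$, where any admissible law already gives $\ate$. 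In short: your approach buys elementarity and an explicit worst-case distribution; the paper's buys a template that mechanically extends to other moment constraints and makes tightness automatic.
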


The proof of \cref{thm:cvarlb3} leverages strong duality for convex semi-infinite optimization.
Note 
\cref{eq:cvarlb3} equals $\cvarat\alpha(\cate)$ whenever $\bar\sigma^2(X)=0$ and $\ate$ whenever $\alpha=1$. 
Since $\abs{\ite-\cate}\leq b\implies\op{Var}(\ite\mid X)\leq b^2$, plugging $\bar\sigma^2(X)=b^2$ into \cref{eq:cvarlb3} must be looser than \cref{eq:cvarlb2} by tightness. Triangle inequality verifies this directly:
$\sum_{\pm}(\cate\pm b-\beta)_-=\cate-\beta-\frac12\sum_{\pm}\abs{\cate\pm b-\beta}\geq \cate-\beta-\sqrt{(\cate-\beta)^2+b^2}$.

A residual-variance bound is both more plausible and easier to calibrate than an absolute bound. Letting $\rho(X)=\op{Corr}(Y(0),Y(1)\mid X)\in[-1,1]$, we have
\begin{align}\label{eq:vardecomp}
\op{Var}(\ite\mid X)=\op{Var}(\Yobs\mid X,A=0)+\op{Var}(\Yobs\mid X,A=1)-2\rho(X)\op{Var}^{1/2}(\Yobs\mid X,A=0)\op{Var}^{1/2}(\Yobs\mid X,A=1),
\end{align}
where all terms but $\rho(X)$ are identifiable.
Thus, postulating different potential-outcome correlations, we obtain different bounds.
\Cref{eq:vardecomp} is maximized for $\rho(X)=-1$, which is tight, as all correlations are realizable. Thus, plugging $\bar\sigma^2(X)=(\op{Var}^{1/2}(\Yobs\mid X,A=0)+\op{Var}^{1/2}(\Yobs\mid X,A=1))^2$ into \cref{eq:cvarlb3} yields a tight lower bound on ITE-CVaR, given conditional expectations and variances. We may obtain better bounds if we postulate larger $\rho(X)$.

\Cref{thm:cvarlb3} also implies a simpler but looser bound.
\begin{corollary}\label{cor:cvarlb4}
\begin{align}
\label{eq:cvarlb4a}
0\leq\cvarat\alpha(\cate)-&\cvarat\alpha(\ite)\leq\frac{1}{2\alpha}\Eb{\op{Var}^{1/2}(\ite\mid X)}\\
\label{eq:cvarlb4b}
&\leq\frac{1}{2\alpha}\Eb{\op{Var}^{1/2}(\Yobs\mid X,A=0)+\op{Var}^{1/2}(\Yobs\mid X,A=1)}\\
\label{eq:cvarlb4c}
&
\leq\frac{1}{2\alpha}\sqrt{\Eb{(\Yobs-\mu(X,A))^2\mid A=0}}+\frac{1}{2\alpha}\sqrt{\Eb{(\Yobs-\mu(X,A))^2\mid A=1}}
.
\end{align}
\end{corollary}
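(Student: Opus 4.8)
The plan is to establish the four linked inequalities in \cref{cor:cvarlb4} in order, front-loading the only nontrivial one. The leftmost inequality $0\le\cvarat\alpha(\cate)-\cvarat\alpha(\ite)$ is exactly \cref{thm:cvarbound}, so nothing new is needed there.

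For the first upper bound \cref{eq:cvarlb4a} I would apply \cref{thm:cvarlb3} with the particular choice $\bar\sigma^2(X)=\op{Var}(\ite\mid X)$, which is integrable under $\E\ite^2<\infty$ and satisfies the hypothesis with equality. Writing $g(\beta)=\beta+\frac{1}{2\alpha}\Eb{\cate-\beta-\sqrt{(\cate-\beta)^2+\op{Var}(\ite\mid X)}}$, \cref{thm:cvarlb3} gives $\cvarat\alpha(\ite)\ge\sup_\beta g(\beta)$. I would then recast the CATE-CVaR in the same variational form using $(u)_-=\tfrac12(u-\abs{u})$, namely $\cvarat\alpha(\cate)=\sup_\beta h(\beta)$ with $h(\beta)=\beta+\frac{1}{2\alpha}\Eb{\cate-\beta-\abs{\cate-\beta}}$. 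The crux is to compare the two suprema: using the elementary fact $\sup_\beta h-\sup_\beta g\le\sup_\beta(h-g)$ (evaluate $h=g+(h-g)$ and take suprema, with $\sup g$ finite) and then bounding the integrand uniformly in $\beta$ via $\sqrt{a^2+v}-\abs{a}\le\sqrt{v}$ (equivalently $\sqrt{a^2+v}\le\abs{a}+\sqrt{v}$), so that $h(\beta)-g(\beta)\le\frac{1}{2\alpha}\Eb{\op{Var}^{1/2}(\ite\mid X)}$ with a bound free of $\beta$. This is exactly \cref{eq:cvarlb4a}, and I expect this step to be the \textbf{main obstacle}, since it is the only place that requires passing through both variational representations and controlling the gap uniformly in $\beta$ rather than arguing pointwise.

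The second upper bound \cref{eq:cvarlb4b} is pointwise in $X$ and then integrated. Starting from the decomposition \cref{eq:vardecomp} and using $\rho(X)\ge-1$, I would bound $\op{Var}(\ite\mid X)\le\prns{\op{Var}^{1/2}(\Yobs\mid X,A=0)+\op{Var}^{1/2}(\Yobs\mid X,A=1)}^2$, take square roots, and apply monotonicity of $\E$ over the marginal $X$-distribution.

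Finally, for \cref{eq:cvarlb4c} I would argue arm by arm. Concavity of the square root and Jensen's inequality give $\Eb{\op{Var}^{1/2}(\Yobs\mid X,A=a)}\le\sqrt{\Eb{\op{Var}(\Yobs\mid X,A=a)}}$. It then remains to identify the inner quantity: since $\mu(X,a)=\E[\Yobs\mid X,A=a]$ we have $\op{Var}(\Yobs\mid X,A=a)=\Eb{(\Yobs-\mu(X,A))^2\mid X,A=a}$, and taking the outer average reproduces $\Eb{(\Yobs-\mu(X,A))^2\mid A=a}$ precisely when the averaging $X$-measure coincides, which holds in the randomized setting $X\indep A$ that is the paper's focus (in general one would reweight by the propensity). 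Chaining the four displays delivers the stated inequalities; I would flag this measure-matching as the one caveat to state explicitly, though it is routine relative to the supremum comparison above.
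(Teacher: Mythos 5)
Your argument matches the paper's proof essentially step for step: the first inequality is \cref{thm:cvarbound}; the second instantiates \cref{thm:cvarlb3} at $\bar\sigma^2(X)=\op{Var}(\ite\mid X)$ and uses $\sqrt{a^2+v}\leq\abs{a}+\sqrt{v}$ (your uniform-in-$\beta$ comparison of the two suprema is exactly the paper's pointwise triangle-inequality bound on the integrand followed by taking the supremum, since the slack term is constant in $\beta$); the third is the Cauchy--Schwarz/$\rho\geq-1$ bound; and the fourth is Jensen plus iterated expectations. Your explicit caveat that the last step requires the marginal $X$-distribution to agree with the $X\mid A=a$ distribution is a fair point the paper leaves implicit (it holds in the randomized setting $X\indep A$ that the paper targets), but it does not alter the route.
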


\Cref{eq:cvarlb4a} more transparently bounds the slack in \cref{eq:cvarbound} in terms of residual effect variance. However, it is not tight, as can be seen for $\alpha=1$.
\Cref{eq:cvarlb4c} is even looser but appealing as it avoids
$\op{Var}(\Yobs\mid X,A)$, depending only on the root-mean-squared error of regressing $Y$ on $X$ for each $A\in\{0,1\}$ (\ie, the numerator of nonparametric $R^2$).

\section{Inference}\label{sec:inference}

We next turn to estimating the bounds developed in \cref{sec:bounds} and constructing confidence intervals.
Recall our data $(X_i,A_i,\Yobs_i)\sim(X,A,Y)$, $1\leq i\leq n$, may be experimental or observational. 
%
%
The only relevant technical difference between these two cases is whether propensity, $e(X)=\Prb{A=1\mid X}$, is known or not. 
While it matters not here, note that $e(X)$ is usually constant in experiments ($A\indep X$). In observational settings $e(X)$ may be estimated.

We focus here on inference on CATE-CVaR.
We provide analogous procedures for the lower bounds of \cref{thm:cvarlb1,thm:cvarlb2,thm:cvarlb3,cor:cvarlb4} in \cref{apx:lbest}.
Fix $\alpha$. Our inferential target is
$$\Psi=\cvarat\alpha(\cate)=\beta^*+\frac1{\alpha}\E(\cate-\beta^*)_-,\quad\text{where}~\beta^*=F_{\cate}^{-1}(\alpha)=\inf\fbraces{\beta:\Prb{\cate\leq\beta}\geq\alpha}.$$
Since $\cate$ is not directly observed, the first step is fitting it. Fortunately, recent advances in causal machine learning provide excellent tools for this \citep{slearner,xlearner,drlearner,rlearner,causaltree,causalforest}.
Given an estimate $\hat\catef$, we might consider a plug-in approach: $\hat\Psi^\text{plug-in}=\sup_\beta\fprns{\beta+\frac1{n\alpha}\sum_{i=1}^n(\hat\catef(X_i)-\beta)_-}$.
Unfortunately, the statistical behavior of $\hat\Psi^\text{plug-in}$ depends heavily on that of $\hat\catef$: if $\hat\catef$ converges slowly and/or has non-negligible bias, as occurs when fit by flexible machine-learning methods, both estimation rates and valid inference may be imperiled for $\hat\Psi^\text{plug-in}$.

\begin{algorithm}[t!]
\caption{Point estimate and confidence interval for $\cvarat\alpha(\cate)$}\label{alg:est}
\begin{algorithmic}[1]
\STATEx\textbf{Input:} Level $\alpha\in(0,1)$, data $\{(X_i,A_i,\Yobs_i):i=1,\dots,n\}$, number of folds $K$, $e,\mu,\tau$-estimators
\FOR{$k=1,\dots,K$}
\STATE{Estimate $\hat e^{(k)},\hat\mu^{(k)},\hat\tau^{(k)}$ using data $\{(X_i,A_i,\Yobs_i):i\not\equiv k-1~\text{(mod $K$)}\}$}
\STATE{Set $\hat\beta^{(k)}=\inf\fbraces{\beta:\sum_{i\not\equiv k-1~\text{(mod $K$)}}\fprns{\findic{\hat\tau^{(k)}(X_i)\leq \beta}-\alpha}\geq0}$\label{alg:est betastep}}
\STATE{\textbf{for}~$i\equiv k-1~\text{(mod $K$)}$~\textbf{do}~set $\phi_{i}=\phi(X_i,A_i,Y_i;\hat e^{(k)},\hat \mu^{(k)},\hat \tau^{(k)},\hat \beta^{(k)})
$\label{alg:est phistep}}
\ENDFOR
\STATE{Set $\hat\Psi=\frac1n\sum_{i=1}^n\phi_{i}$, $\hat{\op{se}}=\sqrt{\frac1{n(n-1)}\sum_{i=1}^n(\phi_{i}-\hat\Psi)^2}$\label{alg:est psistep}}
\STATE{Return $\hat\Psi$ as point estimate and $[\hat\Psi\pm \Phi^{-1}((1+\gamma)/2)\hat{\op{se}}]$ as $\gamma$-confidence intervals}
\end{algorithmic}
\end{algorithm}


Instead, we develop a debiasing approach that is \emph{insensitive} to CATE-estimation, accommodating both misspecified parametric models and flexible-but-imprecise machine-learning CATE-estimators. The main challenge is estimating $\beta^*$, which cannot be expressed by an estimating equation in $X,Y(0),Y(1)$, so its efficient/orthogonal estimation is unclear, unlike the case of quantile/CVaR treatment effects \citep{kallus2019localized,belloni2017program,firpo2007efficient}.
Fortunately, we care only about $\Psi$, not $\beta^*$, and special optimization structure in $\Psi$ gives robustness to perturbations. so even rough 
estimates suffice.
Our approach is therefore unique: we treat both $\catef$ and $\beta^*$ as nuisance parameters, together with $e,\mu$, and ensure simultaneous orthogonality to all four nuisances.

\Cref{alg:est} summarizes our procedure. It proceeds by approximating the sample average of $\Psi=\E\phi(X,A,Y,e,\mu,\catef,\beta^*)$, where, we define
\begin{equation}\label{eq:phi}
\phi(X,A,\Yobs;\tprop,\tmu,\tcate,\tbeta)=\tbeta+\frac1\alpha\indic{\tcate(X)\leq\tbeta}\prns{\tmu(X,1)-\tmu(X,0)+\frac{A-\tprop(X)}{\tprop(X)(1-\tprop(X))}\prns{\Yobs-\tmu(X,A)}-\tbeta}.
\end{equation}
We first estimate the unknown $(e,\mu,\tau,\beta^*)$. We do so using ``cross-fitting'' over $K$ even folds so that nuisance estimates are independent of samples where applied
\citep{schick1986,doubleML,zheng2011cross}.\footnote{We may avoid cross-fitting and fit nuisances once on the whole sample if we assume estimates belong to a Donsker class with probability tending to 1; we omit this option for brevity.}
As we discuss in detail in \cref{sec:cateest}, we treat $\catef$ as a separate nuisance even though $\catef(x)=\mu(x,1)-\mu(x,0)$. For one, this enables the use of specialized CATE-learners. We also treat $\beta^*$ as a separate nuisance (not as a parameter as in \citealp{kallus2019localized}) and fit it as the quantile of $\hat\catef(X)$ in the out-of-fold data. As simple regressions, $e$ and $\mu$ 
can be fit by parametric regression or standard machine-learning methods such as random forests, gradient boosting, neural networks, \etc.

\begin{remark}[Comparing different levels and inter-quantile averages]\label{remark:comparing}
To assess disparities, we may want to compare $\cvarat{\alpha}(\cate)$ to ATE (equivalently, $\cvarat1(\cate)$).
To get good confidence intervals on $\cvarat{\alpha'}(\cate)-\cvarat{\alpha}(\cate)$, we can replace $\phi_i$ in \cref{alg:est phistep} of \cref{alg:est} with the difference of $\phi_i$'s for $\alpha'$ and $\alpha$ (using the same nuisances except $\hat\beta^{(k)}$). Setting $\alpha'=1$, this will, in particular, correctly yield smaller confidence intervals on $\ate-\cvarat{\alpha}(\cate)$ for $\alpha$ near $1$. 
Similarly, if we want confidence intervals on inter-quantile average effects as in \cref{remark:gate}, then per \cref{eq:interquantile} we may simply replace $\phi_i$ in \cref{alg:est phistep} of \cref{alg:est} with the difference of $\phi_i$'s for $\alpha'$ and $\alpha$, weighted by $\frac{\alpha'}{\alpha'-\alpha}$ and $\frac{\alpha}{\alpha'-\alpha}$, respectively.
We may also consider covariances of $\phi_i$'s corresponding to many $\alpha$-levels for constructing simultaneous intervals.
\end{remark}

\begin{remark}[Partial-identification intervals]\label{remark:pi}
While \cref{alg:est} focuses on CATE-CVaR, which upper bounds ITE-CVaR, in \cref{apx:lbest} we provide inference procedures for lower bounds on ITE-CVaR. These can be combined to construct intervals containing ITE-CVaR with probability $\gamma$. By union bound, we can simply combine the one-sided $(1+\gamma)/2$-confidence intervals for the lower and upper bounds.
But coverage may be conservative ($>\gamma$) for the partial-identification interval given by the bounds. For calibrated $\gamma$-coverage (asymptotically), we must account for correlation between lower- and upper-bound estimates, given by the correlation between $\phi_i$'s for each procedure. Then, we can construct calibrated intervals following Appendix A.4 of \citet{kallus2021assessing}.
\end{remark}

\begin{remark}[Monotonicity]\label{remark:rearrangement}
While $\cvarat\alpha(\cate)$ is monotone in $\alpha$, \cref{alg:est}'s output for different $\alpha$ may not be due to estimation errors. We can post-process to ensure monotonicity using rearrangement \citep{hardy1952inequalities}, which only improves estimation and does not affect inference \citep{chernozhukov2010quantile}. We use this in \cref{sec:casestudy}.
\end{remark}

\subsection{Local Robustness and Confidence Intervals}\label{sec:localrobust}

We now establish favorable guarantees for \cref{alg:est}.
First, we show it is insensitive to slow but consistent estimation of nuisances, having first-order behavior as if we used true values. 
%

We will need some minimal regularity.

\begin{assumption}[Regularity]\label{asm:regularity}
$\bar e\leq e\leq1-\bar e$ and
$\abs{Y}\leq B$ for positive constants $\bar e,\,B>0$.\break
$F_{\cate}$ is continuously differentiable at $F^{-1}_{\cate}(\alpha)$.
\end{assumption}

The first condition ensures that the $X$-distributions of experimental groups \emph{overlap}. It is usually guaranteed in randomized experiments by setting $e(X)$ constant ($A\indep X$). In unconfounded observational studies, it is a standard assumption. 
The second condition requires bounded outcomes and is largely technical to make analysis tractable.
The third condition prohibits degeneracy of the quantile. The same is needed for asymptotic normality of sample quantiles of \emph{observed} variables. If 
$\cate$ is discrete, the condition may be replaced by $\exists \varepsilon>0:F^{-1}_{\cate}(\alpha-\varepsilon)=F^{-1}_{\cate}(\alpha+\varepsilon)$, yielding superefficient quantile estimation. The only problematic case is multiplicity of $\{\beta:F_{\cate}(\beta)=\alpha\}$,
but only finitely-many such ``bad'' $\alpha$'s exist.
Since the focus is on $X$ being rich, we focus on the continuous case and the condition in \cref{asm:regularity}.


We first show how, under \cref{asm:regularity}, estimation rates for $\hat\tau^{(k)}$ translate to rates for $\hat\beta^{(k)}$. 

\begin{lemma}\label{lemma:betalemma}
Suppose \cref{asm:regularity} holds.
Then, for each $k=1,\dots,K$, $\hat\beta^{(k)}$ in \cref{alg:est betastep} of \cref{alg:est} satisfies
$$
\fabs{\hat\beta^{(k)}-\beta^*}=O_p(n^{-1/2}\vee \fmagd{\hat\tau^{(k)}-\tau}_r^{\frac{r}{r+1}})\quad\forall r\in[1,\infty].
$$
\end{lemma}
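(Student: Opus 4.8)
Fix $k$ and condition on the fold used to fit $g:=\hat\tau^{(k)}$, so that $g$ may be treated as a fixed function. The plan is to route the deviation of the empirical $\alpha$-quantile $\hat\beta^{(k)}$ of $g$ through the population CDF of the \emph{true} CATE. Writing $\hat F_g(\beta)=\frac1m\sum_i\indic{g(X_i)\le\beta}$ for the empirical CDF over the $m=\Theta(n)$ training points, and $F_g(\beta)=\Prb{g(X)\le\beta}$, $F_\tau(\beta)=\Prb{\tau(X)\le\beta}$ for the two population CDFs, I would use that $\hat\beta^{(k)}=\inf\{\beta:\hat F_g(\beta)\ge\alpha\}$ to reduce the claim to bounding $|F_\tau(\hat\beta^{(k)})-\alpha|$ and then inverting $F_\tau$ near $\beta^*$. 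Under \cref{asm:regularity}, $F_\tau$ is continuously differentiable at $\beta^*=F^{-1}_\tau(\alpha)$ with $f_\tau(\beta^*)>0$ (non-degeneracy of the quantile), so $F_\tau$ is locally bi-Lipschitz there and $|\hat\beta^{(k)}-\beta^*|\asymp|F_\tau(\hat\beta^{(k)})-\alpha|$ once $\hat\beta^{(k)}$ lies in a fixed neighborhood of $\beta^*$; it therefore suffices to evaluate $F_\tau$ and the approximation error at $\beta^*\pm\epsilon$ for $\epsilon$ of the claimed order.

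The crux, and where the exponent $r/(r+1)$ is produced, is a change-of-measure bound on $F_\tau-F_g$. The indicators $\indic{\tau(X)\le\beta}$ and $\indic{g(X)\le\beta}$ disagree only when $\tau(X)$ and $g(X)$ straddle $\beta$, in which case $\beta$ lies between them so that $|\tau(X)-\beta|\le|g(X)-\tau(X)|$; hence $|\indic{\tau(X)\le\beta}-\indic{g(X)\le\beta}|\le\indic{|\tau(X)-\beta|\le|g(X)-\tau(X)|}$. Taking expectations and splitting at a free threshold $t>0$ gives
\[
|F_\tau(\beta)-F_g(\beta)|\le\Prb{|\tau(X)-\beta|\le t}+\Prb{|g(X)-\tau(X)|>t}.
\]
For $\beta$ near $\beta^*$ the first term is at most $Ct$ by the bounded density of $\tau(X)$ there, and the second is at most $\|g-\tau\|_r^r/t^r$ by Markov (vanishing once $t\ge\|g-\tau\|_\infty$ when $r=\infty$). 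Balancing the two by taking $t\asymp\|g-\tau\|_r^{r/(r+1)}$ then yields $|F_\tau(\beta)-F_g(\beta)|\le C'\|\hat\tau^{(k)}-\tau\|_r^{r/(r+1)}$ uniformly over a neighborhood of $\beta^*$, with the $r=\infty$ case reading $C'\|\hat\tau^{(k)}-\tau\|_\infty$ (the limit of the exponent).

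I would then assemble the pieces by the standard two-sided quantile argument. With $\epsilon_n=M(n^{-1/2}\vee\|\hat\tau^{(k)}-\tau\|_r^{r/(r+1)})$, on $\{\hat\beta^{(k)}>\beta^*+\epsilon_n\}$ one has $\hat F_g(\beta^*+\epsilon_n)<\alpha$, whereas $F_\tau(\beta^*+\epsilon_n)\ge\alpha+c\epsilon_n$ for $c=f_\tau(\beta^*)/2$; subtracting the two error bounds above shows $\hat F_g(\beta^*+\epsilon_n)\ge\alpha+c\epsilon_n-C'\|\hat\tau^{(k)}-\tau\|_r^{r/(r+1)}-D_n$, which exceeds $\alpha$ for $M$ large off the event $\{D_n\ge c'n^{-1/2}\}$, a contradiction. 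The symmetric computation at $\beta^*-\epsilon_n$ gives the matching lower bound, so $|\hat\beta^{(k)}-\beta^*|=O_p(\epsilon_n)$ provided $D_n:=\sup_\beta|\hat F_g(\beta)-F_g(\beta)|=O_p(n^{-1/2})$. I expect the control of $D_n$ to be the main obstacle: because $\hat\beta^{(k)}$ is computed on the \emph{same} fold that produced $g$, the values $g(X_i)$ are in-sample and plain DKW does not apply. I would resolve this either by an internal split of the training fold (fitting $g$ and forming $\hat\beta^{(k)}$ on disjoint halves) or, in the spirit of the paper's Donsker remark, by assuming $\hat\tau^{(k)}$ lies w.p.\ $\to1$ in a fixed class of controlled uniform entropy, making $\{x\mapsto\indic{g(x)\le\beta}\}$ a VC class and $D_n=O_p(n^{-1/2})$ uniformly over the realized $g$. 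Since the change-of-measure bound is distribution-free in $g$, once $D_n$ is controlled the estimate holds even when $\hat\tau^{(k)}$ is badly misspecified.
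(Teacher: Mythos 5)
Your core mechanism is the right one and coincides with the paper's: the exponent $r/(r+1)$ comes from balancing a local density bound for $\cate$ near $\beta^*$ against a Markov bound on $\Prb{|\hat\tau^{(k)}(X)-\cate|>t}$, which is exactly the role of the paper's slack $\delta=\fmagd{\hat\tau^{(k)}-\tau}_r^{r/(r+1)}$. The difference lies in how the empirical fluctuation is organized, and that is where your argument does not close as stated. You decompose $\hat F_g-F_\tau$ as $(\hat F_g-F_g)+(F_g-F_\tau)$, which forces you to control $D_n=\sup_\beta|\hat F_g(\beta)-F_g(\beta)|$ for the data-dependent class $\{x\mapsto\indic{g(x)\le\beta}\}$ evaluated on the very points used to fit $g$. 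As you correctly note, DKW does not apply there, and your two remedies --- an internal split of the training fold, or a uniform-entropy/Donsker hypothesis on the realized $\hat\tau^{(k)}$ --- respectively modify \cref{alg:est betastep} and add an assumption the lemma does not make. So the proposal, as written, proves a weaker statement than \cref{lemma:betalemma}.

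The paper's proof shows this obstacle is an artifact of your decomposition. It never compares $\hat F_g$ to $F_g$: it compares the empirical quantile of $\hat\tau^{(k)}(X_i)$ to the empirical quantile of $\tau(X_i)$ over the \emph{same} points, which is a deterministic statement given the data. For $r=\infty$ this is just the fact that the two empirical $\alpha$-quantiles differ by at most $\sup_{i}|\hat\tau^{(k)}(X_i)-\tau(X_i)|$; for $r<\infty$ it is the quantile subadditivity $\hat Q_\alpha(\hat\tau^{(k)})\le \hat Q_{\alpha+\delta}(\tau)+\hat Q_{1-\delta}(\hat\tau^{(k)}-\tau)$ --- a union bound applied to the empirical measure --- with the second term controlled by Markov's inequality, again with respect to the empirical measure. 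The only genuinely stochastic step is then $\hat Q_{\alpha\pm\delta}(\tau)-Q_\alpha(\tau)=O_p(\delta)+O_p(n^{-1/2})$, which concerns the fixed function $\tau$ and follows from \cref{asm:regularity} and standard sample-quantile asymptotics; no entropy condition on $\hat\tau^{(k)}$ is needed. If you rearrange your argument to compare $\hat F_g$ with $\hat F_\tau$ (both empirical CDFs over the same sample) using your straddling inequality applied pointwise to the sample, and only afterwards invoke DKW or the delta method for $\hat F_\tau-F_\tau$, your proof goes through without the extra split or Donsker assumption. (One residual point shared by both routes for $r<\infty$: the empirical Markov step yields the in-sample $L^r$ error of $\hat\tau^{(k)}$, which must still be identified with the population norm $\fmagd{\hat\tau^{(k)}-\tau}_r$ appearing in the statement; for $r=\infty$ this is immediate.)
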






We now show robust oracle-like behavior for $\hat\Psi$.

\begin{theorem}\label{thm:asympnormal}
Suppose \cref{asm:regularity} holds and that for $k=1,\dots,K$,
$\fmagd{\hat e^{(k)}-e}_2=o_p(1)$,
$\fmagd{\hat\mu^{(k)}-\mu}_2=o_p(1)$,
$\fmagd{\hat e^{(k)}-e}_2\fmagd{\hat\mu^{(k)}-\mu}_2=o_p(n^{-\frac{1}{2}})$,
$\fmagd{\hat\tau^{(k)}-\tau}_\infty=o_p(n^{-\frac{1}{4}})$,
$\fPrb{\fmagd{\hat\mu^{(k)}}_\infty\leq B}\to1$,
and
$\fPrb{\bar e\leq 
\hat e^{(k)}
\leq1-\bar e}\to1$.
Then $\hat\Psi,\,\hat{\op{se}}$ in \cref{alg:est psistep} of \cref{alg:est} satisfy
\begin{align*}
&\hat\Psi=\frac1n\sum_{i=1}^n\phi(X,A,Y;e,\mu,\tau,\beta^*)+o_p(n^{-1/2})=\Psi+O_p(n^{-1/2}),\\
&\fPrb{\Psi\in[\hat\Psi\pm \Phi^{-1}((1+\gamma)/2)\hat{\op{se}}]}\to\gamma~~\forall \gamma.
\end{align*}
\end{theorem}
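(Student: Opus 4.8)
The plan is a cross-fitted debiased-machine-learning argument that exploits the simultaneous Neyman-orthogonality of the score $\phi$ of \cref{eq:phi} to all four nuisances $\eta_0=(e,\mu,\catef,\beta^*)$. Writing $Z=(X,A,\Yobs)$, letting $I_k$ denote the $k$-th evaluation fold and $\hat\eta^{(k)}=(\hat e^{(k)},\hat\mu^{(k)},\hat\tau^{(k)},\hat\beta^{(k)})$ its out-of-fold estimates, I would begin from the exact decomposition
\begin{equation*}
\hat\Psi-\Psi=\frac1n\sum_{i=1}^n\prns{\phi(Z_i;\eta_0)-\Psi}+R_1+R_2,
\end{equation*}
where $R_1=\frac1n\sum_k\sum_{i\in I_k}\prns{\phi(Z_i;\hat\eta^{(k)})-\phi(Z_i;\eta_0)-\E[\phi(Z;\hat\eta^{(k)})-\phi(Z;\eta_0)\mid\hat\eta^{(k)}]}$ is a centered empirical-process term and $R_2=\frac1n\sum_k\abs{I_k}\,\E[\phi(Z;\hat\eta^{(k)})-\phi(Z;\eta_0)\mid\hat\eta^{(k)}]$ is the plug-in bias. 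Since the AIPW pseudo-outcome in $\phi$ has conditional mean $\cate$, we get $\E\phi(Z;\eta_0)=\beta^*+\frac1\alpha\E(\cate-\beta^*)_-=\Psi$, and $\phi(Z;\eta_0)$ is bounded under \cref{asm:regularity}, so the leading sum is $\Psi+O_p(n^{-1/2})$ and the CLT gives $\sqrt n\,\frac1n\sum_i\prns{\phi(Z_i;\eta_0)-\Psi}\Rightarrow\mathcal N(0,\sigma^2)$ with $\sigma^2=\op{Var}(\phi(Z;\eta_0))$. It then remains to show $R_1,R_2=o_p(n^{-1/2})$.

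For $R_1$ I would condition on the training data defining $\hat\eta^{(k)}$; by cross-fitting these estimates are independent of $\{Z_i:i\in I_k\}$, so each fold's contribution is a centered average of conditionally i.i.d.\ terms with conditional variance at most $\abs{I_k}^{-1}\E[(\phi(Z;\hat\eta^{(k)})-\phi(Z;\eta_0))^2\mid\hat\eta^{(k)}]$. It thus suffices to establish the conditional $L^2$-continuity $\E[(\phi(Z;\hat\eta^{(k)})-\phi(Z;\eta_0))^2\mid\hat\eta^{(k)}]\xrightarrow{p}0$, after which Chebyshev yields $R_1=o_p(n^{-1/2})$. Under \cref{asm:regularity} ($\abs{\Yobs}\le B$, $\bar e\le e\le1-\bar e$) together with $\fPrb{\fmagd{\hat\mu^{(k)}}_\infty\le B}\to1$ and $\fPrb{\bar e\le\hat e^{(k)}\le1-\bar e}\to1$, the smooth pieces of $\phi$ are controlled by $\fmagd{\hat e^{(k)}-e}_2,\fmagd{\hat\mu^{(k)}-\mu}_2=o_p(1)$. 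The only delicate piece is the indicator: $\indic{\hat\tau^{(k)}(X)\le\hat\beta^{(k)}}$ disagrees with $\indic{\cate\le\beta^*}$ only on the event $\{\abs{\cate-\beta^*}\le\fmagd{\hat\tau^{(k)}-\tau}_\infty+\abs{\hat\beta^{(k)}-\beta^*}\}$, whose probability tends to $0$ because $F_{\cate}$ is continuous at $\beta^*$ and both $\fmagd{\hat\tau^{(k)}-\tau}_\infty$ and (via \cref{lemma:betalemma}) $\abs{\hat\beta^{(k)}-\beta^*}$ are $o_p(1)$; boundedness of the multiplied factor then forces the indicator's $L^2$-contribution to vanish.

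The crux is $R_2$, which I would handle by the exact telescoping $\phi(\hat\eta^{(k)})-\phi(\eta_0)=[\phi(\hat e^{(k)},\hat\mu^{(k)},\hat\tau^{(k)},\hat\beta^{(k)})-\phi(e,\mu,\hat\tau^{(k)},\hat\beta^{(k)})]+[\phi(e,\mu,\hat\tau^{(k)},\hat\beta^{(k)})-\phi(e,\mu,\catef,\hat\beta^{(k)})]+[\phi(e,\mu,\catef,\hat\beta^{(k)})-\phi(e,\mu,\catef,\beta^*)]$, so each bracket perturbs only one group of nuisances and its conditional expectation is bounded separately. \emph{(i) AIPW bracket.} With the indicator frozen at $\indic{\hat\tau^{(k)}(X)\le\hat\beta^{(k)}}$, the difference is $\frac1\alpha$ times the doubly-robust remainder, whose conditional mean is $O(\abs{\hat\mu^{(k)}-\mu}\,\abs{\hat e^{(k)}-e})$ pointwise; by Cauchy--Schwarz its expectation is $O(\fmagd{\hat e^{(k)}-e}_2\fmagd{\hat\mu^{(k)}-\mu}_2)=o_p(n^{-1/2})$ by hypothesis. \emph{(ii) Quantile bracket.} Because $\beta^*=F_{\cate}^{-1}(\alpha)$ maximizes $g(\beta):=\beta+\frac1\alpha\E(\cate-\beta)_-$, the envelope identity $g'(\beta^*)=1-\frac1\alpha\Prb{\cate<\beta^*}=0$ and the $C^1$ density at $\beta^*$ give $\E[\text{bracket}]=g(\hat\beta^{(k)})-g(\beta^*)=O(\abs{\hat\beta^{(k)}-\beta^*}^2)$, which by \cref{lemma:betalemma} at $r=\infty$ and $\fmagd{\hat\tau^{(k)}-\tau}_\infty=o_p(n^{-1/4})$ is $o_p(n^{-1/2})$. \emph{(iii) CATE bracket.} With true $e,\mu$ the pseudo-outcome has conditional mean $\cate$, so the bracket's expectation is $\frac1\alpha\E[(\indic{\hat\tau^{(k)}(X)\le\hat\beta^{(k)}}-\indic{\cate\le\hat\beta^{(k)}})(\cate-\hat\beta^{(k)})]$; the indicators disagree only where $\abs{\cate-\hat\beta^{(k)}}\le\fmagd{\hat\tau^{(k)}-\tau}_\infty$, and there the multiplier is of the same order, so the local density near $\beta^*$ bounds this by $O(\fmagd{\hat\tau^{(k)}-\tau}_\infty^2)=o_p(n^{-1/2})$. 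The main obstacle is exactly (ii)--(iii) and the indicator step of $R_1$: the score is non-smooth in $(\catef,\beta)$, so rather than Taylor-expanding the indicator one must use the $C^1$-smoothness of the law of $\cate$ near $\beta^*$ (\cref{asm:regularity}) to certify that both the threshold-perturbation bias and the $L^2$-fluctuation are genuinely second-order, while \cref{lemma:betalemma} supplies the coupling that converts the $n^{-1/4}$-rate on $\hat\tau^{(k)}$ into an $o_p(n^{-1/2})$ rate on $\abs{\hat\beta^{(k)}-\beta^*}^2$.

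Finally, for the confidence interval I would prove $n\,\hat{\op{se}}^2=\frac1{n-1}\sum_i(\phi_i-\hat\Psi)^2\xrightarrow{p}\sigma^2$ by the same conditional $L^2$-continuity argument applied to the squared plug-in scores (their fold-wise averages converge to $\E[\phi(Z;\eta_0)^2]$, and $\hat\Psi\xrightarrow{p}\Psi$), so that $\sqrt n\,\hat{\op{se}}\xrightarrow{p}\sigma$. Combining $\sqrt n(\hat\Psi-\Psi)\Rightarrow\mathcal N(0,\sigma^2)$ from the first display with Slutsky's theorem then yields $\fPrb{\Psi\in[\hat\Psi\pm\Phi^{-1}((1+\gamma)/2)\,\hat{\op{se}}]}\to\gamma$ for every $\gamma$, completing the proof.
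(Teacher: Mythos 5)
Your proposal is correct and follows essentially the same route as the paper: the same cross-fitted decomposition into an oracle i.i.d.\ average, a centered empirical-process term handled by conditional Chebyshev plus $L^2$-continuity of $\phi$ in the nuisances, and a bias term controlled by a term-by-term perturbation bound that exploits the doubly-robust product structure in $(e,\mu)$, the density of $\cate$ near $\beta^*$ for the indicator, and the first-order optimality of $\beta^*$, with \cref{lemma:betalemma} converting the $o_p(n^{-1/4})$ rate on $\hat\tau^{(k)}$ into $o_p(n^{-1/2})$ for $(\hat\beta^{(k)}-\beta^*)^2$. The only difference is organizational: the paper packages your brackets (i)--(iii) and the $L^2$-continuity bound into the standalone \cref{lemma:EphiErr} (with a slightly different telescoping order over the four nuisances), whereas you argue them inline.
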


The rate assumptions on $e$ and $\mu$ are lax:
it suffices to have $o_p(n^{-1/4})$-rates on both or 
no rate on $\mu$ at all if $e$ is known.
This parallels standard conditions in double-machine-learning ATE-estimation, achievable by a variety of machine-learning methods \citep{doubleML}.
We explore the condition on $\catef$ in \cref{sec:cateest}.

\subsection{Double Robustness and Double Validity}\label{sec:caterobust}

\Cref{thm:asympnormal} guarantees good performance if all nuisances are estimated slowly, but still consistently. 
But even if nuisances are inconsistent, we perform well.
%
%
%

First, we establish a 
property mirroring doubly-robust ATE-estimation \citep{RRZDoubleRobust}:
even if $e$ or $\mu$ are inconsistent, we remain consistent, provided $\catef$ is consistently estimated, albeit slowly.

\begin{theorem}[Double robustness]\label{thm:doublyrobust}
Fix any $\sprop,\smu$ with
$\bar e\leq \sprop\leq 1-\bar e$, $\fmagd{\smu}_\infty\leq B$.
Let $r_n\to0$ be a deterministic sequence.
Suppose \cref{asm:regularity} holds and that for $k=1,\dots,K$,
$\fmagd{\hat e^{(k)}-\sprop}_2=o_p(1)$,
$\fmagd{\hat\mu^{(k)}-\smu}_2=o_p(1)$,
$\fmagd{\hat\tau^{(k)}-\tau}_\infty=O_p(r^{1/2}_n)$,
$\fPrb{\fmagd{\hat\mu^{(k)}}_\infty\leq B}\to1$,
$\fPrb{\bar e\leq 
\hat e^{(k)}
\leq1-\bar e}\to1$,
and
$$\text{either}\quad\fmagd{\hat e^{(k)}-e}_2=O_p(r_n)\quad\text{or}\quad\fmagd{\hat\mu^{(k)}-\mu}_2=O_p(r_n).$$
Then $\hat\Psi$ in \cref{alg:est psistep} of \cref{alg:est} satisfies:
$$
\hat\Psi=\Psi+O_p(r_n\vee n^{-1/2}).
$$
\end{theorem}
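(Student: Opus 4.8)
The plan is to use the cross-fitting decomposition into an empirical-process term and a bias term, and then to exploit two structural features: the double-robust (product-bias) form of the AIPW pseudo-outcome buried inside $\phi$, and the first-order stationarity of the CVaR program at $\beta^*$. Write $\Phi(\tprop,\tmu,\tcate,\tbeta)=\E\phi(X,A,\Yobs;\tprop,\tmu,\tcate,\tbeta)$ for the population mean at fixed nuisances and $\Phi^{(k)}=\Phi(\hat e^{(k)},\hat\mu^{(k)},\hat\tau^{(k)},\hat\beta^{(k)})$. Since cross-fitting makes the fold-$k$ nuisances independent of the fold-$k$ data, I would condition on them and write, with $I_k$ the $k$-th fold and $n_k=\abs{I_k}$,
\begin{equation*}
\hat\Psi-\Psi=\frac1n\sum_{k=1}^K\sum_{i\in I_k}\prns{\phi_i-\Phi^{(k)}}+\sum_{k=1}^K\frac{n_k}{n}\prns{\Phi^{(k)}-\Psi}.
\end{equation*}
Conditionally on the out-of-fold data, the first (empirical-process) term is within each fold an average of i.i.d.\ mean-zero summands; under \cref{asm:regularity} and the stated high-probability bounds ($\fmagd{\hat\mu^{(k)}}_\infty\le B$, $\bar e\le\hat e^{(k)}\le1-\bar e$, $\abs{Y}\le B$, and boundedness of $\hat\beta^{(k)}$ as a quantile of the uniformly bounded $\hat\tau^{(k)}$) each $\phi_i$ is bounded, so the conditional variance is $O(1)$ and a conditional Chebyshev bound gives this term $=O_p(n^{-1/2})$.

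The heart of the argument is the second (bias) term, i.e.\ bounding $\Phi^{(k)}-\Psi$ for the realized, effectively fixed nuisances. I would split it in two. First, $(e,\mu)$ enter $\Phi$ only through the conditional mean $g(X;\tprop,\tmu)=\E[\,\cdot\mid X]$ of the AIPW pseudo-outcome $\tmu(X,1)-\tmu(X,0)+\frac{A-\tprop(X)}{\tprop(X)(1-\tprop(X))}(\Yobs-\tmu(X,A))$, for which a direct calculation yields the product-bias identity
\begin{equation*}
g(X;\tprop,\tmu)-\cate=\prns{\tprop(X)-e(X)}\prns{\frac{\tmu(X,1)-\mu(X,1)}{\tprop(X)}+\frac{\tmu(X,0)-\mu(X,0)}{1-\tprop(X)}}.
\end{equation*}
By Cauchy--Schwarz and overlap this gives $\E\fabs{g(X;\hat e^{(k)},\hat\mu^{(k)})-\cate}=O\big(\fmagd{\hat e^{(k)}-e}_2\,\fmagd{\hat\mu^{(k)}-\mu}_2\big)$; under the either/or hypothesis one factor is $O_p(r_n)$ while the other is $O_p(1)$ (bounded, since $\sprop,\smu$ are bounded and overlap holds), so $\E\fabs{g-\cate}=O_p(r_n)$. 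As the indicator in $\phi$ is bounded by one, this controls $\Phi^{(k)}-\Phi_0(\hat\tau^{(k)},\hat\beta^{(k)})$, where $\Phi_0(\tcate,\tbeta)=\tbeta+\frac1\alpha\E[\indic{\tcate(X)\le\tbeta}(\cate-\tbeta)]$ replaces $g$ by the true $\cate$.

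Second, I would bound $\Phi_0(\hat\tau^{(k)},\hat\beta^{(k)})-\Psi$ by passing through the ``true-indicator'' objective $\tilde\Phi_0(\tbeta)=\tbeta+\frac1\alpha\E(\cate-\tbeta)_-$, which is exactly the CVaR program and is stationary at $\tbeta=\beta^*$ (its derivative $1-\alpha^{-1}F_{\cate}(\beta)$ vanishes there since $F_{\cate}(\beta^*)=\alpha$). Continuous differentiability of $F_{\cate}$ at $\beta^*$ (\cref{asm:regularity}) gives $\tilde\Phi_0(\hat\beta^{(k)})-\Psi=O((\hat\beta^{(k)}-\beta^*)^2)$, and \cref{lemma:betalemma} with $r=\infty$ gives $\fabs{\hat\beta^{(k)}-\beta^*}=O_p(n^{-1/2}\vee\fmagd{\hat\tau^{(k)}-\catef}_\infty)=O_p(n^{-1/2}\vee r_n^{1/2})$, so this piece is $O_p(r_n\vee n^{-1/2})$. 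The residual
\begin{equation*}
\Phi_0(\hat\tau^{(k)},\hat\beta^{(k)})-\tilde\Phi_0(\hat\beta^{(k)})=\frac1\alpha\E\big[(\indic{\hat\tau^{(k)}(X)\le\hat\beta^{(k)}}-\indic{\cate\le\hat\beta^{(k)}})(\cate-\hat\beta^{(k)})\big]
\end{equation*}
is supported on the event that $\cate$ and $\hat\tau^{(k)}(X)$ straddle $\hat\beta^{(k)}$; there $\fabs{\cate-\hat\beta^{(k)}}\le\fmagd{\hat\tau^{(k)}-\catef}_\infty$, while the probability of this event is $O\big(\fmagd{\hat\tau^{(k)}-\catef}_\infty+\fabs{\hat\beta^{(k)}-\beta^*}\big)$ by the bounded density of $\cate$ near $\beta^*$. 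The product is $O_p(r_n^{1/2})\cdot O_p(r_n^{1/2}\vee n^{-1/2})=O_p(r_n\vee n^{-1/2})$. Combining, the bias term is $O_p(r_n\vee n^{-1/2})$, which with the empirical-process term yields the claim.

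I expect the straddling term in the last paragraph to be the main obstacle: it requires controlling simultaneously the \emph{magnitude} of the integrand and the small \emph{probability} of the straddling event, and it is exactly this double smallness---each factor $O_p(r_n^{1/2})$---that permits $\catef$ to be learned at only the slow $r_n^{1/2}$ rate while keeping an $O_p(r_n)$ remainder. The product-bias step is otherwise routine (classical AIPW double robustness); the one point needing care there is verifying that the ``wrong'' factor stays $O_p(1)$ rather than diverging, which uses boundedness of $\sprop,\smu$ together with overlap.
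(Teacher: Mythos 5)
Your proof is correct and follows essentially the same route as the paper's: the cross-fitting decomposition with a conditional Chebyshev bound on the empirical-process term, the AIPW product-bias identity handling the either/or condition on $(e,\mu)$, second-order insensitivity in $\beta$ from stationarity of the CVaR objective combined with \cref{lemma:betalemma}, and the double-smallness (magnitude times probability) bound for the indicator mismatch are exactly the contents of the paper's \cref{lemma:EphiErr2} as invoked in its one-line proof. The only organizational difference is that you compare $\Phi^{(k)}$ directly to $\Psi$ via the conditional-mean identity rather than routing through the anchor $\E\phi(X,A,Y;\sprop,\smu,\tau,\beta^*)=\Psi$, which incidentally shows that the $o_p(1)$ convergence of $\hat e^{(k)},\hat\mu^{(k)}$ to the limits $\sprop,\smu$ is not actually needed for this particular conclusion.
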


\Cref{thm:doublyrobust} is particularly strong in experiments ($e$ known): we can get away with $\hat\mu^{(k)}=0$. We need only estimate CATE at $o_p(n^{-1/4})$-rates to ensure $O_p(n^{-1/2})$-consistency.

It would appear we must consistently estimate CATE to have hope of estimating its CVaR. 
While true, we next show that \emph{even} if we mis-estimate CATE \emph{and also} one of $e,\mu$, we \emph{still} get an upper bound on CATE-CVaR (hence on ITE-CVaR). This appears to be the second finding of a double-validity property since being first documented
in sensitivity analysis \citep{dorn2021doubly}.

We first establish the population-level bound behavior and then state the implication for estimation.
\begin{lemma}\label{lemma:popdoublevalid}
Fix any $\scate:\X\to\Rl$. Let $\sbeta=F_{\scate(X)}^{-1}(\alpha)$.
Suppose \cref{asm:regularity} holds with $\catef$ replaced with $\scate$. Then:
\begin{equation}\label{eq:popvalid}
\cvarat\alpha(\cate)\leq
\sbeta+\frac1\alpha\Efb{\findic{\scate(X)\leq\sbeta}(\cate-\sbeta)}.
\end{equation}
\end{lemma}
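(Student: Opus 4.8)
The plan is to bound $\cvarat\alpha(\cate)$ directly from the variational definition \eqref{eq:cvar}, exploiting the elementary pointwise fact that $(u)_-=\min(u,0)\le w\,u$ for every weight $w\in[0,1]$ (immediate by checking the two signs of $u$). Taking the \emph{fixed} threshold weight $w(X)=\indic{\scate(X)\le\sbeta}\in[0,1]$ and integrating gives, for \emph{every} $\beta$, the bound $\E(\cate-\beta)_-\le\E[\indic{\scate(X)\le\sbeta}(\cate-\beta)]$. Substituting into \eqref{eq:cvar}, the CVaR objective at each $\beta$ is dominated by the affine function
\[
g(\beta)=\beta\left(1-\tfrac1\alpha\Prb{\scate(X)\le\sbeta}\right)+\tfrac1\alpha\,\E[\indic{\scate(X)\le\sbeta}\,\cate],
\]
so that $\cvarat\alpha(\cate)=\sup_\beta\left(\beta+\tfrac1\alpha\E(\cate-\beta)_-\right)\le\sup_\beta g(\beta)$. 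Conceptually this is nothing but the feasibility half of CVaR's left-tail duality: $\xi(X)=\tfrac1\alpha\indic{\scate(X)\le\sbeta}$ is a candidate likelihood ratio with $0\le\xi\le1/\alpha$, and any admissible reweighting upper bounds the worst-case expectation that CVaR computes.

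The key steps, in order, are then: (i) the pointwise inequality above; (ii) specializing the weight to the plug-in quantile indicator $\indic{\scate(X)\le\sbeta}$ to obtain $g(\beta)$ valid at every $\beta$; (iii) taking the supremum over $\beta$; and (iv) invoking \cref{asm:regularity} with $\scate$ in place of $\catef$, whose continuous differentiability of $F_{\scate(X)}$ at its $\alpha$-quantile $\sbeta$ forces $\Prb{\scate(X)\le\sbeta}=F_{\scate(X)}(\sbeta)=\alpha$. This no-atom identity makes the coefficient of $\beta$ in $g$ vanish, so $g$ is the \emph{constant} $\tfrac1\alpha\E[\indic{\scate(X)\le\sbeta}\cate]$ and $\sup_\beta g(\beta)$ equals this constant. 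Finally a one-line expansion of the right-hand side of \eqref{eq:popvalid},
\[
\sbeta+\tfrac1\alpha\E[\indic{\scate(X)\le\sbeta}(\cate-\sbeta)]=\sbeta+\tfrac1\alpha\E[\indic{\scate(X)\le\sbeta}\cate]-\tfrac{\sbeta}{\alpha}\Prb{\scate(X)\le\sbeta},
\]
again uses $\Prb{\scate(X)\le\sbeta}=\alpha$ to cancel $-\sbeta\,\Prb{\scate(X)\le\sbeta}/\alpha$ against $\sbeta$, leaving exactly the same constant and completing the proof.

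The single substantive step is the identity $\Prb{\scate(X)\le\sbeta}=\alpha$; everything else is the pointwise inequality and bookkeeping. It is worth emphasizing \emph{why} this is precisely the right hypothesis: were $\scate(X)$ to place an atom at $\sbeta$, the coefficient of $\beta$ in $g$ would be nonzero and $\sup_\beta g=+\infty$, rendering the bound vacuous — exactly the discrete-quantile degeneracy that \cref{asm:regularity} rules out. I therefore expect this continuity requirement to be the only delicate point, while the robustness content of the lemma is transparent from the argument: beyond quantile-continuity of $\scate(X)$, \emph{no} property of $\scate$ is used, so the upper bound survives an arbitrary, possibly badly misspecified, CATE surrogate.
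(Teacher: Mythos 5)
Your proof is correct and is essentially the paper's argument: the paper invokes the dual (worst-case reweighting) formulation of CVaR and observes that $Z=\frac1\alpha\indic{\scate(X)\le\sbeta}$ is a feasible likelihood ratio, while you re-derive exactly that weak-duality half from the primal variational formula via the pointwise inequality $(u)_-\le w\,u$ for $w\in[0,1]$. Both proofs turn on the same single substantive identity, $\Prb{\scate(X)\le\sbeta}=\alpha$, supplied by \cref{asm:regularity} applied to $\scate$, so no revision is needed.
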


\begin{theorem}[Double validity]\label{thm:doublyvalid}
Fix any $\sprop,\smu,\scate$ with
$\bar e\leq \sprop\leq 1-\bar e$, $\fmagd{\smu}_\infty\leq B$, $\fmagd{\scate}_\infty\leq 2B$.
Let $r_n\to0$ be a deterministic sequence.
Suppose \cref{asm:regularity} holds with $\catef$ replaced with $\scate$ and that for $k=1,\dots,K$,
$\fmagd{\hat e^{(k)}-\sprop}_2=o_p(1)$,
$\fmagd{\hat\mu^{(k)}-\smu}_2=o_p(1)$,
$\fmagd{\hat\tau^{(k)}-\scate}_\infty=O_p(r_n)$,
$\fPrb{\fmagd{\hat\mu^{(k)}}_\infty\leq B}\to1$,
$\fPrb{\bar e\leq 
\hat e^{(k)}
\leq1-\bar e}\to1$,
and
$$\text{either}\quad\fmagd{\hat e^{(k)}-e}_2=O_p(r_n)\quad\text{or}\quad\fmagd{\hat\mu^{(k)}-\mu}_2=O_p(r_n).$$
Then $\hat\Psi$ in \cref{alg:est psistep} of \cref{alg:est} satisfies:
$$
\hat\Psi\geq\Psi-O_p(r_n\vee n^{-1/2}).
$$
\end{theorem}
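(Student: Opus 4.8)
The plan is to combine the population inequality of \cref{lemma:popdoublevalid} with an estimation-error analysis that mirrors almost verbatim the proof of \cref{thm:doublyrobust}, the one essential difference being that the CATE nuisance now converges to the \emph{wrong} limit $\scate$ rather than to $\catef$. Writing $\Gamma(\beta,g)=\beta+\frac1\alpha\Efb{\findic{g(X)\leq\beta}(\cate-\beta)}$ for the population objective evaluated at the \emph{true} $\cate$ but a generic threshold $\beta$ and index $g$, I would establish two facts: the frozen-nuisance population functional $\bar\Psi:=\E[\phi(X,A,Y;\sprop,\smu,\scate,\sbeta)]$ equals $\Gamma(\sbeta,\scate)\geq\Psi$, and $\hat\Psi=\Gamma(\sbeta,\scate)+O_p(r_n\vee n^{-1/2})$. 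Chaining these gives $\hat\Psi\geq\Psi-O_p(r_n\vee n^{-1/2})$.

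For the population inequality, note the conditions ``$\fmagd{\hat e^{(k)}-\sprop}_2=o_p(1)$, $\fmagd{\hat\mu^{(k)}-\smu}_2=o_p(1)$, and either $\fmagd{\hat e^{(k)}-e}_2=O_p(r_n)$ or $\fmagd{\hat\mu^{(k)}-\mu}_2=O_p(r_n)$'' force at least one limit to be correct, i.e.\ $\sprop=e$ or $\smu=\mu$ almost surely. Taking the conditional expectation of the augmented-inverse-propensity bracket in \cref{eq:phi} and using the standard double-robustness identity, either correct limit makes the conditional mean of that bracket equal the true $\cate$. Hence $\bar\Psi=\sbeta+\frac1\alpha\Efb{\findic{\scate(X)\leq\sbeta}(\cate-\sbeta)}=\Gamma(\sbeta,\scate)$, which is exactly the right-hand side of \cref{eq:popvalid}; \cref{lemma:popdoublevalid} (whose hypotheses hold with $\catef$ replaced by $\scate$) then yields $\bar\Psi\geq\Psi$.

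For the estimation error I would follow the proof of \cref{thm:doublyrobust}. Cross-fitting makes the in-fold summands conditionally i.i.d.\ with bounded range under \cref{asm:regularity}, so $\hat\Psi$ equals its conditional mean up to an $O_p(n^{-1/2})$ empirical-process term. For the conditional mean I would take three steps: (i) a double-robustness step replacing the conditional mean of the AIPW bracket by $\cate$, with error at most $\fmagd{\hat e^{(k)}-e}_2\sum_{a}\fmagd{\hat\mu^{(k)}(\cdot,a)-\mu(\cdot,a)}_2$ by Cauchy--Schwarz and $\hat e^{(k)}$ bounded away from $0,1$, which is $o_p(r_n)$ since one factor is $O_p(r_n)$ and the other $o_p(1)$; (ii) a CATE-perturbation step bounding $\Gamma(\hat\beta^{(k)},\hat\tau^{(k)})-\Gamma(\hat\beta^{(k)},\scate)$ by the probability of the indicator-mismatch region $\{\,\fabs{\scate(X)-\hat\beta^{(k)}}\leq\fmagd{\hat\tau^{(k)}-\scate}_\infty\,\}$, which is $O_p(r_n)$ because $F_{\scate(X)}$ has bounded density near $\sbeta$ and $\cate-\hat\beta^{(k)}$ is bounded; and (iii) a threshold-perturbation step for $\Gamma(\hat\beta^{(k)},\scate)-\Gamma(\sbeta,\scate)$, controlled via \cref{lemma:betalemma} with $\catef$ replaced by $\scate$ and $r=\infty$, giving $\fabs{\hat\beta^{(k)}-\sbeta}=O_p(n^{-1/2}\vee r_n)$.

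The main obstacle, and the genuinely new feature relative to \cref{thm:doublyrobust}, is step (iii). When CATE is correctly specified, $\Gamma(\,\cdot\,,\catef)$ is stationary at $\beta^*$: its $\beta$-derivative $\frac1\alpha F_{\cate}'(\beta^*)\prns{\E[\cate\mid\cate=\beta^*]-\beta^*}$ vanishes because $\E[\cate\mid\cate=\beta^*]=\beta^*$, so the threshold error is second-order. Under misspecification this Neyman orthogonality in $\beta$ \emph{fails}: the derivative $\frac1\alpha F_{\scate(X)}'(\sbeta)\prns{\E[\cate\mid\scate(X)=\sbeta]-\sbeta}$ need not vanish, producing a genuinely first-order term of size $O_p(\fabs{\hat\beta^{(k)}-\sbeta})=O_p(r_n\vee n^{-1/2})$. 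The resolution is that this rate coincides exactly with the error budget claimed by the theorem, and since the conclusion is a \emph{one-sided} bound obtained by adding the population inequality $\bar\Psi\geq\Psi$, we never need this term to vanish faster or to carry a definite sign---boundedness of its magnitude suffices. Assembling (i)--(iii) with the empirical-process term gives $\hat\Psi=\Gamma(\sbeta,\scate)+O_p(r_n\vee n^{-1/2})$, and combining with $\bar\Psi=\Gamma(\sbeta,\scate)\geq\Psi$ completes the proof.
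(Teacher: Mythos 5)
Your proposal is correct and follows essentially the same route as the paper's proof: a cross-fitting decomposition as in the asymptotic-normality theorem, perturbation bounds for the frozen-limit functional (the content of the paper's Lemma~\ref{lemma:EphiErr2}, whose $\kappa=0$ case encodes exactly the loss of second-order behavior in the $\tau$- and $\beta$-directions that you identify as the key new feature, compensated by the stronger rate $\fmagd{\hat\tau^{(k)}-\scate}_\infty=O_p(r_n)$), Lemma~\ref{lemma:betalemma} applied to $\scate$ for the threshold, and Lemma~\ref{lemma:popdoublevalid} for the one-sided population inequality. The only blemish is in your step (i): when $\smu\neq\mu$ the factor $\fmagd{\hat\mu^{(k)}-\mu}_2$ is merely $O(1)$ rather than $o_p(1)$, so the cross term is $O_p(r_n)$ rather than $o_p(r_n)$ --- still within the claimed error budget, so the conclusion is unaffected.
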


\Cref{thm:doublyvalid} guarantees extensive robustness and suggests a practical, blackbox-free approach in experimental settings: set $\hat\mu^{(k)}=0$ and use simple \emph{misspecified} parametric models (\eg, linear) for CATE-estimation, and we still estimate a valid ITE-CVaR bound at fast $O_p(n^{-1/2})$-rates.

\subsection{CATE-Estimation and Rates}\label{sec:cateest}

\Cref{alg:est} accepts separate learners for \emph{both} $\mu$ \emph{and} $\catef$. So, while $\catef(x)=\mu(x,1)-\mu(x,0)$, we need \emph{not} have $\hat\catef^{(k)}(X)=\hat\mu^{(k)}(x,1)-\hat\mu^{(k)}(x,0)$, and in fact we should not. Recent work advocates and provides specialized methods for \emph{directly} estimating CATE \citep{causaltree,causalforest,xlearner,slearner,rlearner,drlearner}.

This is important because \cref{alg:est} uses the $\mu$- and $\tau$-estimates differently and, correspondingly, our theoretical results impose different assumptions on each. 
The $\tau$-estimate is used for approximating the event $\indic{\tau(X)\leq\beta^*}$, which is crucial for targeting CVaR correctly.
In contrast, the $\mu$-estimate is just used in order to estimate a weighted-average treatment effect, given the weights $\indic{\tau(X)\leq\beta^*}$, and is therefore interchangeable with propensity.

We next review different options for CATE-estimation and how these ensure the conditions of \cref{thm:asympnormal,thm:doublyrobust,thm:doublyvalid}. We emphasize that these need not be understood as exhaustive list of which learners to use: 
practically, the nuisance-estimation rates are high-level assumptions that generally say one may safely plug-in black-box machine-learning estimators to \cref{alg:est}: no restrictions are made but rates (no metric-entropy conditions), estimators can be flexible/nonparametric in that rates can be much slower than ``parametric" $O_p(n^{-1/2})$-rates, and results are exceedingly robust to inconsistent estimation.



\subsubsection{Experimental settings}\label{sec:pseudoexp}

A major issue with CATE-estimation by differencing outcome regressions is that effect signals are easily lost. CATE is generally simpler and less variable than baseline mean outcomes, $\mu(X,0),\mu(X,1)$. For example, many variables often help predict outcomes, but few modulate the treatment effect. It is therefore imperative to learn CATE directly. 

In experimental settings ($e$ known) we can construct a pseudo-outcome $\Delta=\frac{A-e(X)}{e(X)(1-e(X))}\Yobs$ and, since $\cate=\Eb{\Delta\mid X}$, learn CATE by regressing $\Delta$ on $X$, using any supervised-learning method.
One case that theoretically ensures $\fmagd{\hat \tau^{(k)}-\tau}_\infty=o_p(n^{-1/4})$ is when $\catef(x)$ is more-than-$d/2$-smooth in $x\in\R d$ \citep[Theorem 1]{stoneglobal}.
Another option is $\catef(x)$ linear with $o(\sqrt{n}/\log d)$-nonzero coefficients \citep{belloni2017program}.
Note this works \emph{regardless} of $\mu$ being nice.

Or, we may avoid black-box models (and cross-fitting) altogether by using simple linear regression of $\Delta$ on $X$ to obtain a valid bound per \cref{thm:doublyvalid}.


To satisfy the other conditions,
for \cref{thm:doublyrobust,thm:doublyvalid} we can set $\mu=0$,
and for \cref{thm:asympnormal} we need only estimate $\mu$ consistently without rate. We can either estimate $\mu$ directly or only estimate $\bar\mu(X)=\Eb{Y\mid X}$ and set $\hat\mu^{(k)}(X,A)=\hat{\bar\mu}^{(k)}(X)+(A-e(X))\hat\tau^{(k)}(X)$.
Consistency for either is immediate from
$\E\Yobs^2<\infty$ \citep
{gyorfi2002distribution}.


\subsubsection{Observational settings}

When $e$ is unknown, the pseudo-outcome-construction needs refinement.
One option is DR-leaner \citep{drlearner}:
regress $\Delta=\hat\mu(X,1)-\hat\mu(X,0)+\frac{A-\hat e(X)}{\hat e(X)(1-\hat e(X))}(\Yobs-\hat\mu(X,A))$ on $X$, where $\hat e,\hat \mu$ are appropriately cross-fitted. 
Another is R-learner \citep{rlearner}:
let $\hat\catef$ minimize the average of $\prns{\Yobs-\hat{\bar\mu}(X)-(A-\hat e(X))\hat\catef(X)}^2$, where $\hat e,\hat{\bar\mu}$ are appropriately cross-fitted.
\citet[Corollary 3]{drlearner} provides rates for local-polynomial R-learners: if $e(x)$ is $s_e$-smooth in $x\in\R d$, $\bar\mu(x)$ $s_\mu$-smooth, and $\tau(x)$ more-than-$d/2$-smooth, then we obtain $o_p(n^{-1/4})$-rate pointwise error, provided $s_e\geq s_\mu,\,\frac{s_e+s_\mu}{2}>\frac{d}{8}$. To convert pointwise-error bounds to sup-norm-error bounds, we may follow the discretization approach of \citet{stoneglobal}, incurring only logarithms.
Or, we can simply use linear R- or DR-learners and get a valid bound per \cref{thm:doublyvalid}.


\section{Case Study}\label{sec:casestudy}

\begin{figure}[t!]\centering%
\begin{minipage}{0.333\textwidth}
\centering
\includegraphics[width=\linewidth]{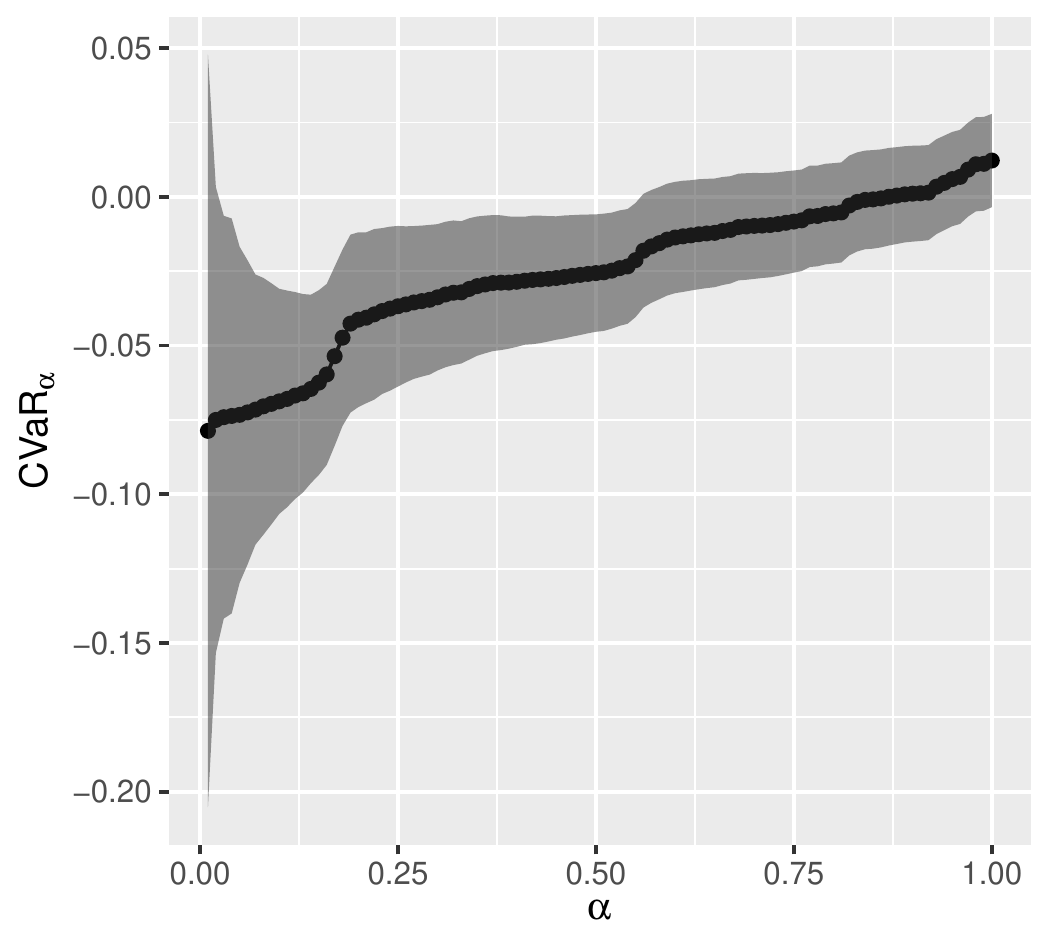}
\vspace{-1.5em}\caption{$\cvarat\alpha(\cate)$}\label{fig:job_cvar}
\end{minipage}\hfill%
\begin{minipage}{0.333\textwidth}
\centering
\includegraphics[width=\linewidth]{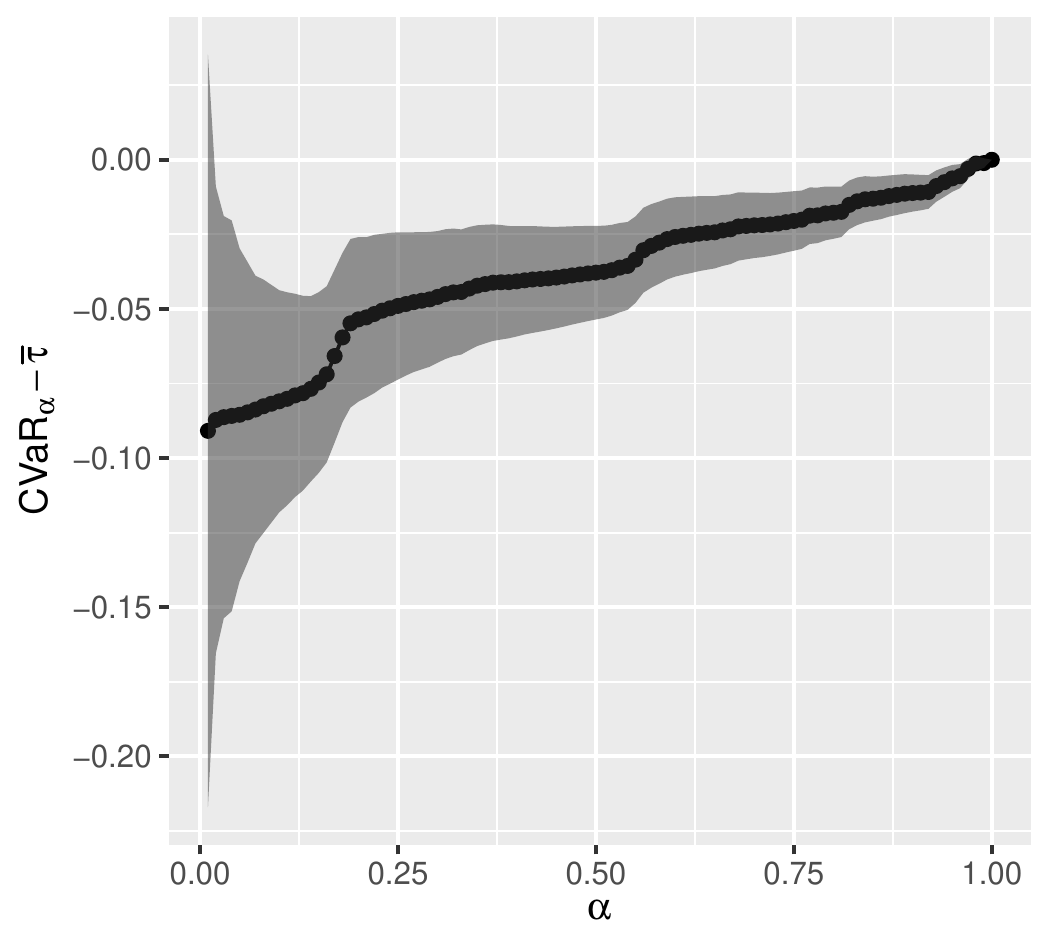}
\vspace{-1.5em}\caption{$\cvarat\alpha(\cate)-\ate$}\label{fig:job_cvarvsate}
\end{minipage}\hfill%
\begin{minipage}{0.333\textwidth}
\centering
\includegraphics[width=\linewidth]{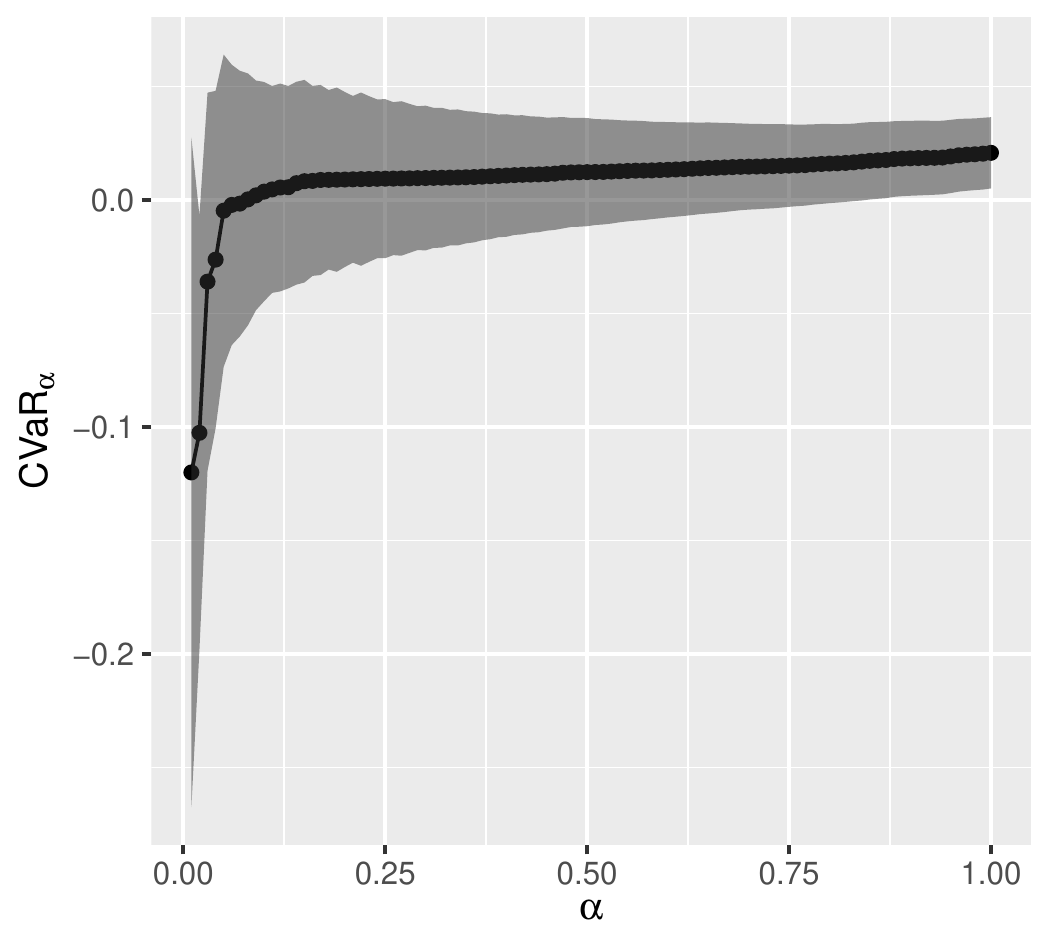}
\vspace{-1.5em}\caption{$\cvarat\alpha(\catef_1(X_1))$}\label{fig:job_cvar_bad}
\end{minipage}\\[0.75\baselineskip]%
\begin{minipage}{0.5\textwidth}
\centering
\includegraphics[width=\linewidth]{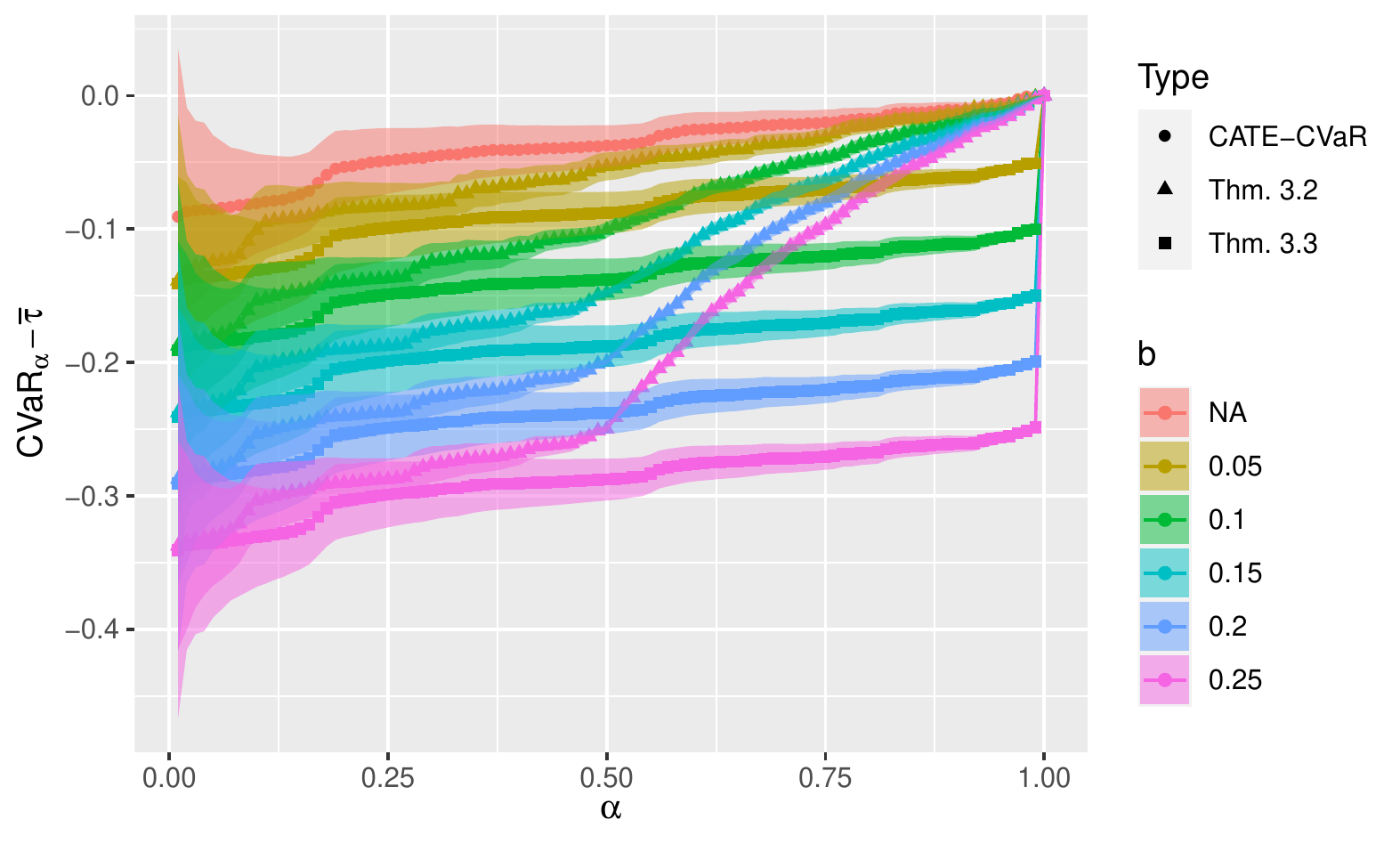}
\vspace{-1.5em}\caption{Bounds based on\break residual-heterogeneity range}\label{fig:job_bounded_bounds}
\end{minipage}\hfill%
\begin{minipage}{0.5\textwidth}
\centering
\includegraphics[width=\linewidth]{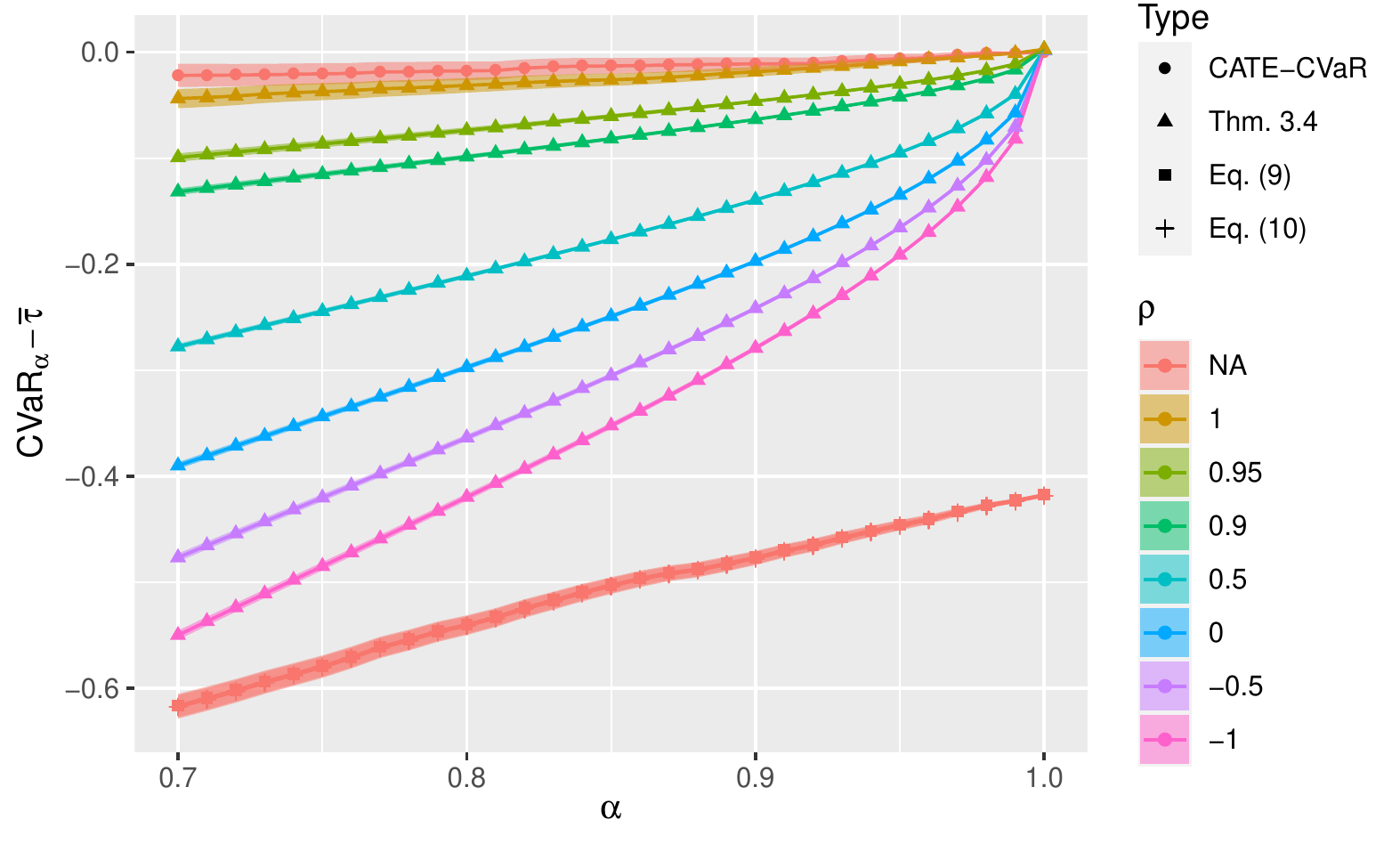}
\vspace{-1.5em}\caption{Bounds based on\break residual-heterogeneity variance}\label{fig:job_condvar_bounds}
\end{minipage}
\end{figure}

We now demonstrate our bounds and inference.\footnote{Replication code is available at \url{https://github.com/CausalML/TreatmentEffectRisk}.} 
While we consider a program-evaluation example,
we believe our results are also particularly relevant to A/B testing on online platforms, where, after testing, product innovations are usually either scrapped/reworked or broadly rolled out, and where ATEs are often small, creating an opportunity for many users to be negatively impacted despite positive average effects. Little such data is public, however.

\subsection{Background and Setup}

\citet{behaghel2014private} analyze a large-scale randomized experiment comparing assistance programs offered to French unemployed individuals.
They compare three arms: individuals in the ``control'' arm receive the standard services of the Public Employment Services, in ``public'' receive an intensive counseling program run by a public agency, and in ``private'' a similar program run by private agencies.

We consider a hypothetical scenario where the private-run counseling program  ($A=0$) is currently being offered to the unemployed and we consider the change to a public-run program  ($A=1$).\footnote{Some individuals assigned to the additional counseling refused it. We nonetheless restrict our attention to intent-to-treat interventions, considering hypothetically making available either the public-run or private-run counseling to unemployed individuals, who may decline it.}
We take reemployment within six months as our (binary) outcome.

The ATE is $1.22$ percentage points (90\%-CI $[-0.35, 2.8]$), a $4.9\%$ increase in reemployment.
This suggests a positive/neutral effect, so a policymaker might hypothetically consider this an acceptable policy change, \eg, if the public-run program provided cost savings.\footnote{\citet[section IV]{behaghel2014private} discuss why public-run programs fare better.}
%

To apply our methodology, we consider all pre-treatment covariates in table~2 of \citet{behaghel2014private}, except we treat as numeric (rather than dichotomize) age, number children, years experience, salary target, assignment timing, and number unemployment spells. Other variables quantify education, employment level and type, gender, martial status, national origin, region, unemployment reason, and long-term-unemployment risk. 
The propensity is constant.
As recommended in \cref{sec:pseudoexp}, we fit CATE using a pseudo-outcome linear regression. We estimate $\mu$
using cross-fitted gradient-boosting machines.

\subsection{Upper bounds}

\Cref{fig:job_cvar} presents inference on CATE-CVaR using \cref{alg:est} for $\alpha\in\{0.01,0.02,\dots,1\}$. The line represents our point estimate, after rearrangement as recommended in \cref{remark:rearrangement},\footnote{We present the figure without rearrangement in \cref{apx:casestudy}.} and the shaded region represents point-wise 90\%-confidence intervals. 
Note uncertainty grows for smaller $\alpha$.

We see that the ATE-estimate (right-most point) is positive with an interval containing zero. 
We find, however, that some 56\%-sized $X$-defined-subpopulation has a negative effect at 90\%-confidence.\footnote{Since outcome is binary, the \emph{largest} fraction that can have a negative effect is $(50\times(1-\ate))\%$, so either $\ate<0$ or at most half may be negatively affected. The ATE interval indeed contains zero with confidence only 90\%.}
This strongly suggests that the change, if enacted could materially negatively impact a large portion of the population, despite the positive/neutral ATE.
Thus, considering treatment effect \emph{risk} provides a crucial metric not reflected in the ATE.
This risk is also \emph{not} reflected in DTEs: the binary potential-outcome distributions are \emph{fully} specified by just $\E[Y(0)],\,\E[Y(1)]$.\footnote{In particular, the $\alpha$-quantile DTE is uselessly \emph{zero} for \emph{all} $\alpha\in[0,1]\backslash\{1-\E[Y(0)],1-\E[Y(1)]\}$ and the $\alpha$-CVaR DTE is $\frac1\alpha(\E[Y(1)]-1+\alpha)_+-\frac1\alpha(\E[Y(0)]-1+\alpha)_+$, which is not even monotonic. For illustration we plot it in \cref{apx:casestudy}.}




In \cref{fig:job_cvarvsate} we focus on comparing CATE-CVaR to ATE following \cref{remark:comparing}. The only difference to \cref{fig:job_cvar} is a slight vertical shift and that confidence intervals (correctly) shrink to a point as $\alpha\to1$, enabling more confident conclusions comparing subpopulations to the population.


In \cref{fig:job_cvar_bad} we consider CATE-CVaR when we capture less heterogeneity,
using only age, high-school dropout, African national origin, and Paris-region resident as covariates ($X_1$). This detects no significant risk.




\subsection{Lower bounds}

While the upper bounds show a significant subpopulation can be negatively harmed, being only bounds, it may be the subpopulation can be harmed even more or an even larger subpopulation can be harmed. Lower bounds help us understand how much greater the risk might be.

In \cref{fig:job_bounded_bounds} we consider our lower bounds (vs ATE) when limiting the residual-heterogeneity range given by Theorems \ref{thm:cvarlb2} (two-sided range) and \ref{thm:cvarlb1} (one-sided range).

Since it may be hard to justify and calibrate a limited range, in \cref{fig:job_condvar_bounds} we consider lower bounds given by \cref{thm:cvarlb3,cor:cvarlb4} by limiting residual-heterogeneity variance. For the former, we fit $\op{Var}(Y\mid A,X)$ using gradient-boosting machines and construct $\bar\sigma^2(X)$ per \cref{eq:vardecomp} by varying constant values of $\rho(X)=\rho\in[-1,1]$. Recall $\rho=-1$ always yields an assumption-free bound. We use the same model to estimate the right-hand side of \cref{eq:cvarlb4b}. We compute the cross-validated root-mean-squared prediction error to estimate the right-hand side of \cref{eq:cvarlb4c}.

We observe that assuming perfectly-conditionally-correlated potential outcomes yields a lower bound very close to the upper bound.
The bounds of \cref{cor:cvarlb4} appear loose; indeed they are not tight.

\section{Concluding Remarks}

We study the average effect on those worst-affected by a proposed change as a measure of its \emph{risk}, how to tightly bound it given covariates that explain some heterogeneity, and how to make robust inferences on these bounds even when this heterogeneity is roughly estimated. This provides very practical tools for assessing policy and product changes beyond their ATE and DTEs. We can safely use flexible yet biased/slow-to-converge machine learning, or we can avoid black-box models and easily get good bounds by considering only linear projections of heterogeneity .
In the hypothetical case study this detected that, what appeared to be a positive/neutral change could actually very negatively impact a substantial subpopulation. 

We focused on experimental (or, unconfounded observational) settings without interference, where risk is already unidentifiable \emph{despite} randomization. A future direction is to consider the impact of interference \citep{johari2022experimental,athey2018exact}
or confounding \citet{tan2006}, where even ATEs are unidentifiable and fairness is harder to assess \citep{kilbertus2020sensitivity,jung2020bayesian,kallus2018residual}. Interestingly, for partial identification under \citet{tan2006}' model, $X$-conditional outcome-CVaR plays a crucial role \citep{dorn2021doubly}. Another direction may be to consider other risk measures, such as given by Kullback-Leibler ambiguity sets \citep{ahmadi2012entropic}. Per \cref{footnote:coherentrisk}, the tight upper bound is still the risk measure applied to CATE, but it remains to compute lower bounds and design robust inference methods.

\begin{acks}
I thank Netflix's Dar\'io Garc\'ia Garc\'ia, Molly Jackman, Danielle Rosenberg, William Nelson, and Martin Tingley for very helpful conversations.
\end{acks}

\bibliographystyle{ACM-Reference-Format}
\bibliography{references}


\begin{thebibliography}{58}


\ifx \showCODEN    \undefined \def \showCODEN     #1{\unskip}     \fi
\ifx \showDOI      \undefined \def \showDOI       #1{#1}\fi
\ifx \showISBNx    \undefined \def \showISBNx     #1{\unskip}     \fi
\ifx \showISBNxiii \undefined \def \showISBNxiii  #1{\unskip}     \fi
\ifx \showISSN     \undefined \def \showISSN      #1{\unskip}     \fi
\ifx \showLCCN     \undefined \def \showLCCN      #1{\unskip}     \fi
\ifx \shownote     \undefined \def \shownote      #1{#1}          \fi
\ifx \showarticletitle \undefined \def \showarticletitle #1{#1}   \fi
\ifx \showURL      \undefined \def \showURL       {\relax}        \fi
\providecommand\bibfield[2]{#2}
\providecommand\bibinfo[2]{#2}
\providecommand\natexlab[1]{#1}
\providecommand\showeprint[2][]{arXiv:#2}

\bibitem[Ahmadi-Javid(2012)]%
        {ahmadi2012entropic}
\bibfield{author}{\bibinfo{person}{Amir Ahmadi-Javid}.}
  \bibinfo{year}{2012}\natexlab{}.
\newblock \showarticletitle{Entropic value-at-risk: A new coherent risk
  measure}.
\newblock \bibinfo{journal}{\emph{Journal of Optimization Theory and
  Applications}} \bibinfo{volume}{155}, \bibinfo{number}{3}
  (\bibinfo{year}{2012}), \bibinfo{pages}{1105--1123}.
\newblock


\bibitem[Artzner et~al\mbox{.}(1999)]%
        {artzner1999coherent}
\bibfield{author}{\bibinfo{person}{Philippe Artzner}, \bibinfo{person}{Freddy
  Delbaen}, \bibinfo{person}{Jean-Marc Eber}, {and} \bibinfo{person}{David
  Heath}.} \bibinfo{year}{1999}\natexlab{}.
\newblock \showarticletitle{Coherent measures of risk}.
\newblock \bibinfo{journal}{\emph{Mathematical finance}} \bibinfo{volume}{9},
  \bibinfo{number}{3} (\bibinfo{year}{1999}), \bibinfo{pages}{203--228}.
\newblock


\bibitem[Athey et~al\mbox{.}(2018)]%
        {athey2018exact}
\bibfield{author}{\bibinfo{person}{Susan Athey}, \bibinfo{person}{Dean Eckles},
  {and} \bibinfo{person}{Guido~W Imbens}.} \bibinfo{year}{2018}\natexlab{}.
\newblock \showarticletitle{Exact p-values for network interference}.
\newblock \bibinfo{journal}{\emph{J. Amer. Statist. Assoc.}}
  \bibinfo{volume}{113}, \bibinfo{number}{521} (\bibinfo{year}{2018}),
  \bibinfo{pages}{230--240}.
\newblock


\bibitem[Athey and Imbens(2016)]%
        {causaltree}
\bibfield{author}{\bibinfo{person}{Susan Athey} {and} \bibinfo{person}{Guido
  Imbens}.} \bibinfo{year}{2016}\natexlab{}.
\newblock \showarticletitle{Recursive partitioning for heterogeneous causal
  effects}.
\newblock \bibinfo{journal}{\emph{Proceedings of the National Academy of
  Sciences}} \bibinfo{volume}{113}, \bibinfo{number}{27}
  (\bibinfo{year}{2016}), \bibinfo{pages}{7353--7360}.
\newblock


\bibitem[Athey and Wager(2017)]%
        {athey2017efficient}
\bibfield{author}{\bibinfo{person}{Susan Athey} {and} \bibinfo{person}{Stefan
  Wager}.} \bibinfo{year}{2017}\natexlab{}.
\newblock \showarticletitle{Efficient policy learning}.
\newblock  (\bibinfo{year}{2017}).
\newblock


\bibitem[Bagnell(2005)]%
        {bagnell2005robust}
\bibfield{author}{\bibinfo{person}{J~Andrew Bagnell}.}
  \bibinfo{year}{2005}\natexlab{}.
\newblock \showarticletitle{Robust supervised learning}. In
  \bibinfo{booktitle}{\emph{AAAI}}.
\newblock


\bibitem[Behaghel et~al\mbox{.}(2014)]%
        {behaghel2014private}
\bibfield{author}{\bibinfo{person}{Luc Behaghel}, \bibinfo{person}{Bruno
  Cr{\'e}pon}, {and} \bibinfo{person}{Marc Gurgand}.}
  \bibinfo{year}{2014}\natexlab{}.
\newblock \showarticletitle{Private and public provision of counseling to job
  seekers: Evidence from a large controlled experiment}.
\newblock \bibinfo{journal}{\emph{American economic journal: applied
  economics}} \bibinfo{volume}{6}, \bibinfo{number}{4} (\bibinfo{year}{2014}),
  \bibinfo{pages}{142--74}.
\newblock


\bibitem[Belloni et~al\mbox{.}(2017)]%
        {belloni2017program}
\bibfield{author}{\bibinfo{person}{A. Belloni}, \bibinfo{person}{V.
  Chernozhukov}, \bibinfo{person}{I. Fernández-Val}, {and} \bibinfo{person}{C.
  Hansen}.} \bibinfo{year}{2017}\natexlab{}.
\newblock \showarticletitle{Program Evaluation and Causal Inference With
  High-Dimensional Data}.
\newblock \bibinfo{journal}{\emph{Econometrica}} \bibinfo{volume}{85},
  \bibinfo{number}{1} (\bibinfo{year}{2017}), \bibinfo{pages}{233--298}.
\newblock


\bibitem[Ben-Tal et~al\mbox{.}(2013)]%
        {ben2013robust}
\bibfield{author}{\bibinfo{person}{Aharon Ben-Tal}, \bibinfo{person}{Dick
  Den~Hertog}, \bibinfo{person}{Anja De~Waegenaere}, \bibinfo{person}{Bertrand
  Melenberg}, {and} \bibinfo{person}{Gijs Rennen}.}
  \bibinfo{year}{2013}\natexlab{}.
\newblock \showarticletitle{Robust solutions of optimization problems affected
  by uncertain probabilities}.
\newblock \bibinfo{journal}{\emph{Management Science}} \bibinfo{volume}{59},
  \bibinfo{number}{2} (\bibinfo{year}{2013}), \bibinfo{pages}{341--357}.
\newblock


\bibitem[Bertsimas et~al\mbox{.}(2018)]%
        {bertsimas2018robust}
\bibfield{author}{\bibinfo{person}{Dimitris Bertsimas}, \bibinfo{person}{Vishal
  Gupta}, {and} \bibinfo{person}{Nathan Kallus}.}
  \bibinfo{year}{2018}\natexlab{}.
\newblock \showarticletitle{Robust sample average approximation}.
\newblock \bibinfo{journal}{\emph{Mathematical Programming}}
  \bibinfo{volume}{171}, \bibinfo{number}{1} (\bibinfo{year}{2018}),
  \bibinfo{pages}{217--282}.
\newblock


\bibitem[Chen et~al\mbox{.}(2019)]%
        {chen2019fairness}
\bibfield{author}{\bibinfo{person}{Jiahao Chen}, \bibinfo{person}{Nathan
  Kallus}, \bibinfo{person}{Xiaojie Mao}, \bibinfo{person}{Geoffry Svacha},
  {and} \bibinfo{person}{Madeleine Udell}.} \bibinfo{year}{2019}\natexlab{}.
\newblock \showarticletitle{Fairness under unawareness: Assessing disparity
  when protected class is unobserved}. In \bibinfo{booktitle}{\emph{FAccT}}.
\newblock


\bibitem[Chernozhukov et~al\mbox{.}(2018a)]%
        {doubleML}
\bibfield{author}{\bibinfo{person}{Victor Chernozhukov}, \bibinfo{person}{Denis
  Chetverikov}, \bibinfo{person}{Mert Demirer}, \bibinfo{person}{Esther Duflo},
  \bibinfo{person}{Christian Hansen}, \bibinfo{person}{Whitney Newey}, {and}
  \bibinfo{person}{James Robins}.} \bibinfo{year}{2018}\natexlab{a}.
\newblock \showarticletitle{Double/debiased machine learning for treatment and
  structural parameters}.
\newblock \bibinfo{journal}{\emph{Econometrics Journal}} \bibinfo{volume}{21},
  \bibinfo{number}{1} (\bibinfo{year}{2018}), \bibinfo{pages}{C1--C68}.
\newblock


\bibitem[Chernozhukov et~al\mbox{.}(2018b)]%
        {chernozhukov2018generic}
\bibfield{author}{\bibinfo{person}{Victor Chernozhukov}, \bibinfo{person}{Mert
  Demirer}, \bibinfo{person}{Esther Duflo}, {and} \bibinfo{person}{Ivan
  Fernandez-Val}.} \bibinfo{year}{2018}\natexlab{b}.
\newblock \bibinfo{booktitle}{\emph{Generic Machine Learning Inference on
  Heterogeneous Treatment Effects in Randomized Experiments, with an
  Application to Immunization in India}}.
\newblock \bibinfo{type}{{T}echnical {R}eport}. \bibinfo{institution}{National
  Bureau of Economic Research}.
\newblock


\bibitem[Chernozhukov et~al\mbox{.}(2010)]%
        {chernozhukov2010quantile}
\bibfield{author}{\bibinfo{person}{Victor Chernozhukov},
  \bibinfo{person}{Iv{\'a}n Fern{\'a}ndez-Val}, {and} \bibinfo{person}{Alfred
  Galichon}.} \bibinfo{year}{2010}\natexlab{}.
\newblock \showarticletitle{Quantile and probability curves without crossing}.
\newblock \bibinfo{journal}{\emph{Econometrica}} \bibinfo{volume}{78},
  \bibinfo{number}{3} (\bibinfo{year}{2010}), \bibinfo{pages}{1093--1125}.
\newblock


\bibitem[Crump et~al\mbox{.}(2008)]%
        {crump2008nonparametric}
\bibfield{author}{\bibinfo{person}{Richard~K Crump}, \bibinfo{person}{V~Joseph
  Hotz}, \bibinfo{person}{Guido~W Imbens}, {and} \bibinfo{person}{Oscar~A
  Mitnik}.} \bibinfo{year}{2008}\natexlab{}.
\newblock \showarticletitle{Nonparametric tests for treatment effect
  heterogeneity}.
\newblock \bibinfo{journal}{\emph{The Review of Economics and Statistics}}
  \bibinfo{volume}{90}, \bibinfo{number}{3} (\bibinfo{year}{2008}),
  \bibinfo{pages}{389--405}.
\newblock


\bibitem[Davison(1992)]%
        {davison1992treatment}
\bibfield{author}{\bibinfo{person}{AC Davison}.}
  \bibinfo{year}{1992}\natexlab{}.
\newblock \showarticletitle{Treatment effect heterogeneity in paired data}.
\newblock \bibinfo{journal}{\emph{Biometrika}} \bibinfo{volume}{79},
  \bibinfo{number}{3} (\bibinfo{year}{1992}), \bibinfo{pages}{463--474}.
\newblock


\bibitem[Ding et~al\mbox{.}(2016)]%
        {ding2016randomization}
\bibfield{author}{\bibinfo{person}{Peng Ding}, \bibinfo{person}{Avi Feller},
  {and} \bibinfo{person}{Luke Miratrix}.} \bibinfo{year}{2016}\natexlab{}.
\newblock \showarticletitle{Randomization inference for treatment effect
  variation}.
\newblock \bibinfo{journal}{\emph{Journal of the Royal Statistical Society:
  Series B: Statistical Methodology}} (\bibinfo{year}{2016}),
  \bibinfo{pages}{655--671}.
\newblock


\bibitem[Ding et~al\mbox{.}(2019)]%
        {ding2019decomposing}
\bibfield{author}{\bibinfo{person}{Peng Ding}, \bibinfo{person}{Avi Feller},
  {and} \bibinfo{person}{Luke Miratrix}.} \bibinfo{year}{2019}\natexlab{}.
\newblock \showarticletitle{Decomposing treatment effect variation}.
\newblock \bibinfo{journal}{\emph{J. Amer. Statist. Assoc.}}
  \bibinfo{volume}{114}, \bibinfo{number}{525} (\bibinfo{year}{2019}),
  \bibinfo{pages}{304--317}.
\newblock


\bibitem[Dorn et~al\mbox{.}(2021)]%
        {dorn2021doubly}
\bibfield{author}{\bibinfo{person}{Jacob Dorn}, \bibinfo{person}{Kevin Guo},
  {and} \bibinfo{person}{Nathan Kallus}.} \bibinfo{year}{2021}\natexlab{}.
\newblock \showarticletitle{Doubly-Valid/Doubly-Sharp Sensitivity Analysis for
  Causal Inference with Unmeasured Confounding}.
\newblock  (\bibinfo{year}{2021}).
\newblock


\bibitem[Esfahani and Kuhn(2018)]%
        {esfahani2018data}
\bibfield{author}{\bibinfo{person}{Peyman~Mohajerin Esfahani} {and}
  \bibinfo{person}{Daniel Kuhn}.} \bibinfo{year}{2018}\natexlab{}.
\newblock \showarticletitle{Data-driven distributionally robust optimization
  using the Wasserstein metric: Performance guarantees and tractable
  reformulations}.
\newblock \bibinfo{journal}{\emph{Mathematical Programming}}
  \bibinfo{volume}{171}, \bibinfo{number}{1} (\bibinfo{year}{2018}),
  \bibinfo{pages}{115--166}.
\newblock


\bibitem[Firpo(2007)]%
        {firpo2007efficient}
\bibfield{author}{\bibinfo{person}{Sergio Firpo}.}
  \bibinfo{year}{2007}\natexlab{}.
\newblock \showarticletitle{Efficient semiparametric estimation of quantile
  treatment effects}.
\newblock \bibinfo{journal}{\emph{Econometrica}} \bibinfo{volume}{75},
  \bibinfo{number}{1} (\bibinfo{year}{2007}), \bibinfo{pages}{259--276}.
\newblock


\bibitem[Gail and Simon(1985)]%
        {gail1985testing}
\bibfield{author}{\bibinfo{person}{M Gail} {and} \bibinfo{person}{Richard
  Simon}.} \bibinfo{year}{1985}\natexlab{}.
\newblock \showarticletitle{Testing for qualitative interactions between
  treatment effects and patient subsets}.
\newblock \bibinfo{journal}{\emph{Biometrics}} (\bibinfo{year}{1985}),
  \bibinfo{pages}{361--372}.
\newblock


\bibitem[Gy{\"o}rfi et~al\mbox{.}(2002)]%
        {gyorfi2002distribution}
\bibfield{author}{\bibinfo{person}{L{\'a}szl{\'o} Gy{\"o}rfi},
  \bibinfo{person}{Michael Kohler}, \bibinfo{person}{Adam Krzy{\.z}ak}, {and}
  \bibinfo{person}{Harro Walk}.} \bibinfo{year}{2002}\natexlab{}.
\newblock \bibinfo{booktitle}{\emph{A distribution-free theory of nonparametric
  regression}}.
\newblock \bibinfo{publisher}{Springer}.
\newblock


\bibitem[Hardy et~al\mbox{.}(1952)]%
        {hardy1952inequalities}
\bibfield{author}{\bibinfo{person}{Godfrey~Harold Hardy},
  \bibinfo{person}{John~Edensor Littlewood}, \bibinfo{person}{George
  P{\'o}lya}, {and} \bibinfo{person}{Gy{\"o}rgy P{\'o}lya}.}
  \bibinfo{year}{1952}\natexlab{}.
\newblock \bibinfo{booktitle}{\emph{Inequalities}}.
\newblock \bibinfo{publisher}{Cambridge university press}.
\newblock


\bibitem[Heckman et~al\mbox{.}(1997)]%
        {heckman1997making}
\bibfield{author}{\bibinfo{person}{James~J Heckman}, \bibinfo{person}{Jeffrey
  Smith}, {and} \bibinfo{person}{Nancy Clements}.}
  \bibinfo{year}{1997}\natexlab{}.
\newblock \showarticletitle{Making the most out of programme evaluations and
  social experiments: Accounting for heterogeneity in programme impacts}.
\newblock \bibinfo{journal}{\emph{The Review of Economic Studies}}
  \bibinfo{volume}{64}, \bibinfo{number}{4} (\bibinfo{year}{1997}),
  \bibinfo{pages}{487--535}.
\newblock


\bibitem[Imai and Ratkovic(2013)]%
        {slearner}
\bibfield{author}{\bibinfo{person}{Kosuke Imai} {and} \bibinfo{person}{Marc
  Ratkovic}.} \bibinfo{year}{2013}\natexlab{}.
\newblock \showarticletitle{Estimating treatment effect heterogeneity in
  randomized program evaluation}.
\newblock \bibinfo{journal}{\emph{The Annals of Applied Statistics}}
  \bibinfo{volume}{7}, \bibinfo{number}{1} (\bibinfo{year}{2013}),
  \bibinfo{pages}{443--470}.
\newblock


\bibitem[Imbens and Rubin(2015)]%
        {imbens2015causal}
\bibfield{author}{\bibinfo{person}{Guido~W Imbens} {and}
  \bibinfo{person}{Donald~B Rubin}.} \bibinfo{year}{2015}\natexlab{}.
\newblock \bibinfo{booktitle}{\emph{Causal inference in statistics, social, and
  biomedical sciences}}.
\newblock \bibinfo{publisher}{Cambridge University Press}.
\newblock


\bibitem[Imbens and Wooldridge(2009)]%
        {imbens2009recent}
\bibfield{author}{\bibinfo{person}{Guido~W Imbens} {and}
  \bibinfo{person}{Jeffrey~M Wooldridge}.} \bibinfo{year}{2009}\natexlab{}.
\newblock \showarticletitle{Recent developments in the econometrics of program
  evaluation}.
\newblock \bibinfo{journal}{\emph{Journal of economic literature}}
  \bibinfo{volume}{47}, \bibinfo{number}{1} (\bibinfo{year}{2009}),
  \bibinfo{pages}{5--86}.
\newblock


\bibitem[Johari et~al\mbox{.}(2022)]%
        {johari2022experimental}
\bibfield{author}{\bibinfo{person}{Ramesh Johari}, \bibinfo{person}{Hannah Li},
  \bibinfo{person}{Inessa Liskovich}, {and} \bibinfo{person}{Gabriel~Y
  Weintraub}.} \bibinfo{year}{2022}\natexlab{}.
\newblock \showarticletitle{Experimental design in two-sided platforms: An
  analysis of bias}.
\newblock \bibinfo{journal}{\emph{Management Science}} (\bibinfo{year}{2022}).
\newblock


\bibitem[Jung et~al\mbox{.}(2020)]%
        {jung2020bayesian}
\bibfield{author}{\bibinfo{person}{Jongbin Jung}, \bibinfo{person}{Ravi
  Shroff}, \bibinfo{person}{Avi Feller}, {and} \bibinfo{person}{Sharad Goel}.}
  \bibinfo{year}{2020}\natexlab{}.
\newblock \showarticletitle{Bayesian sensitivity analysis for offline policy
  evaluation}. In \bibinfo{booktitle}{\emph{Proceedings of the AAAI/ACM
  Conference on AI, Ethics, and Society}}. \bibinfo{pages}{64--70}.
\newblock


\bibitem[Kallus(2018)]%
        {kallus2018balanced}
\bibfield{author}{\bibinfo{person}{Nathan Kallus}.}
  \bibinfo{year}{2018}\natexlab{}.
\newblock \showarticletitle{Balanced policy evaluation and learning}. In
  \bibinfo{booktitle}{\emph{NeurIPS}}.
\newblock


\bibitem[Kallus et~al\mbox{.}(2019)]%
        {kallus2019localized}
\bibfield{author}{\bibinfo{person}{Nathan Kallus}, \bibinfo{person}{Xiaojie
  Mao}, {and} \bibinfo{person}{Masatoshi Uehara}.}
  \bibinfo{year}{2019}\natexlab{}.
\newblock \showarticletitle{Localized debiased machine learning: Efficient
  inference on quantile treatment effects and beyond}.
\newblock  (\bibinfo{year}{2019}).
\newblock


\bibitem[Kallus et~al\mbox{.}(2021)]%
        {kallus2021assessing}
\bibfield{author}{\bibinfo{person}{Nathan Kallus}, \bibinfo{person}{Xiaojie
  Mao}, {and} \bibinfo{person}{Angela Zhou}.} \bibinfo{year}{2021}\natexlab{}.
\newblock \showarticletitle{Assessing algorithmic fairness with unobserved
  protected class using data combination}.
\newblock \bibinfo{journal}{\emph{Management Science}} (\bibinfo{year}{2021}).
\newblock


\bibitem[Kallus and Zhou(2018)]%
        {kallus2018residual}
\bibfield{author}{\bibinfo{person}{Nathan Kallus} {and} \bibinfo{person}{Angela
  Zhou}.} \bibinfo{year}{2018}\natexlab{}.
\newblock \showarticletitle{Residual unfairness in fair machine learning from
  prejudiced data}. In \bibinfo{booktitle}{\emph{ICML}}.
\newblock


\bibitem[Kallus and Zhou(2019)]%
        {kallus2019assessing}
\bibfield{author}{\bibinfo{person}{Nathan Kallus} {and} \bibinfo{person}{Angela
  Zhou}.} \bibinfo{year}{2019}\natexlab{}.
\newblock \showarticletitle{Assessing disparate impacts of personalized
  interventions: Identifiability and bounds}.
\newblock \bibinfo{journal}{\emph{NeurIPS}} (\bibinfo{year}{2019}).
\newblock


\bibitem[Kallus and Zhou(2021)]%
        {kallus2021minimax}
\bibfield{author}{\bibinfo{person}{Nathan Kallus} {and} \bibinfo{person}{Angela
  Zhou}.} \bibinfo{year}{2021}\natexlab{}.
\newblock \showarticletitle{Minimax-optimal policy learning under unobserved
  confounding}.
\newblock \bibinfo{journal}{\emph{Management Science}} \bibinfo{volume}{67},
  \bibinfo{number}{5} (\bibinfo{year}{2021}), \bibinfo{pages}{2870--2890}.
\newblock


\bibitem[Kearns et~al\mbox{.}(2018)]%
        {kearns2018preventing}
\bibfield{author}{\bibinfo{person}{Michael Kearns}, \bibinfo{person}{Seth
  Neel}, \bibinfo{person}{Aaron Roth}, {and} \bibinfo{person}{Zhiwei~Steven
  Wu}.} \bibinfo{year}{2018}\natexlab{}.
\newblock \showarticletitle{Preventing fairness gerrymandering: Auditing and
  learning for subgroup fairness}. In \bibinfo{booktitle}{\emph{ICML}}.
\newblock


\bibitem[Kennedy(2020)]%
        {drlearner}
\bibfield{author}{\bibinfo{person}{Edward~H Kennedy}.}
  \bibinfo{year}{2020}\natexlab{}.
\newblock \showarticletitle{Optimal doubly robust estimation of heterogeneous
  causal effects}.
\newblock  (\bibinfo{year}{2020}).
\newblock


\bibitem[Kilbertus et~al\mbox{.}(2020)]%
        {kilbertus2020sensitivity}
\bibfield{author}{\bibinfo{person}{Niki Kilbertus}, \bibinfo{person}{Philip~J
  Ball}, \bibinfo{person}{Matt~J Kusner}, \bibinfo{person}{Adrian Weller},
  {and} \bibinfo{person}{Ricardo Silva}.} \bibinfo{year}{2020}\natexlab{}.
\newblock \showarticletitle{The sensitivity of counterfactual fairness to
  unmeasured confounding}. In \bibinfo{booktitle}{\emph{UAI}}.
\newblock


\bibitem[Kitagawa and Tetenov(2018)]%
        {kitagawa2018should}
\bibfield{author}{\bibinfo{person}{Toru Kitagawa} {and}
  \bibinfo{person}{Aleksey Tetenov}.} \bibinfo{year}{2018}\natexlab{}.
\newblock \showarticletitle{Who should be treated? empirical welfare
  maximization methods for treatment choice}.
\newblock \bibinfo{journal}{\emph{Econometrica}} \bibinfo{volume}{86},
  \bibinfo{number}{2} (\bibinfo{year}{2018}), \bibinfo{pages}{591--616}.
\newblock


\bibitem[Krokhmal et~al\mbox{.}(2002)]%
        {krokhmal2002portfolio}
\bibfield{author}{\bibinfo{person}{Pavlo Krokhmal}, \bibinfo{person}{Jonas
  Palmquist}, {and} \bibinfo{person}{Stanislav Uryasev}.}
  \bibinfo{year}{2002}\natexlab{}.
\newblock \showarticletitle{Portfolio optimization with conditional
  value-at-risk objective and constraints}.
\newblock \bibinfo{journal}{\emph{Journal of risk}}  \bibinfo{volume}{4}
  (\bibinfo{year}{2002}), \bibinfo{pages}{43--68}.
\newblock


\bibitem[K{\"u}nzel et~al\mbox{.}(2019)]%
        {xlearner}
\bibfield{author}{\bibinfo{person}{S{\"o}ren~R K{\"u}nzel},
  \bibinfo{person}{Jasjeet~S Sekhon}, \bibinfo{person}{Peter~J Bickel}, {and}
  \bibinfo{person}{Bin Yu}.} \bibinfo{year}{2019}\natexlab{}.
\newblock \showarticletitle{Metalearners for estimating heterogeneous treatment
  effects using machine learning}.
\newblock \bibinfo{journal}{\emph{Proceedings of the national academy of
  sciences}} \bibinfo{volume}{116}, \bibinfo{number}{10}
  (\bibinfo{year}{2019}), \bibinfo{pages}{4156--4165}.
\newblock


\bibitem[Lahoti et~al\mbox{.}(2020)]%
        {NEURIPS2020_07fc15c9}
\bibfield{author}{\bibinfo{person}{Preethi Lahoti}, \bibinfo{person}{Alex
  Beutel}, \bibinfo{person}{Jilin Chen}, \bibinfo{person}{Kang Lee},
  \bibinfo{person}{Flavien Prost}, \bibinfo{person}{Nithum Thain},
  \bibinfo{person}{Xuezhi Wang}, {and} \bibinfo{person}{Ed Chi}.}
  \bibinfo{year}{2020}\natexlab{}.
\newblock \showarticletitle{Fairness without Demographics through Adversarially
  Reweighted Learning}. In \bibinfo{booktitle}{\emph{NeurIPS}}.
\newblock


\bibitem[Lakkaraju et~al\mbox{.}(2019)]%
        {lakkaraju2019faithful}
\bibfield{author}{\bibinfo{person}{Himabindu Lakkaraju}, \bibinfo{person}{Ece
  Kamar}, \bibinfo{person}{Rich Caruana}, {and} \bibinfo{person}{Jure
  Leskovec}.} \bibinfo{year}{2019}\natexlab{}.
\newblock \showarticletitle{Faithful and customizable explanations of black box
  models}. In \bibinfo{booktitle}{\emph{AIES}}.
\newblock


\bibitem[Nie and Wager(2021)]%
        {rlearner}
\bibfield{author}{\bibinfo{person}{Xinkun Nie} {and} \bibinfo{person}{Stefan
  Wager}.} \bibinfo{year}{2021}\natexlab{}.
\newblock \showarticletitle{Quasi-oracle estimation of heterogeneous treatment
  effects}.
\newblock \bibinfo{journal}{\emph{Biometrika}} \bibinfo{volume}{108},
  \bibinfo{number}{2} (\bibinfo{year}{2021}), \bibinfo{pages}{299--319}.
\newblock


\bibitem[Qian and Murphy(2011)]%
        {qian2011performance}
\bibfield{author}{\bibinfo{person}{Min Qian} {and} \bibinfo{person}{Susan~A
  Murphy}.} \bibinfo{year}{2011}\natexlab{}.
\newblock \showarticletitle{Performance guarantees for individualized treatment
  rules}.
\newblock \bibinfo{journal}{\emph{Annals of statistics}} \bibinfo{volume}{39},
  \bibinfo{number}{2} (\bibinfo{year}{2011}), \bibinfo{pages}{1180}.
\newblock


\bibitem[Ribeiro et~al\mbox{.}(2016)]%
        {ribeiro2016should}
\bibfield{author}{\bibinfo{person}{Marco~Tulio Ribeiro},
  \bibinfo{person}{Sameer Singh}, {and} \bibinfo{person}{Carlos Guestrin}.}
  \bibinfo{year}{2016}\natexlab{}.
\newblock \showarticletitle{``Why should i trust you?'' Explaining the
  predictions of any classifier}. In \bibinfo{booktitle}{\emph{KDD}}.
\newblock


\bibitem[Robins et~al\mbox{.}(1994)]%
        {RRZDoubleRobust}
\bibfield{author}{\bibinfo{person}{James~M Robins}, \bibinfo{person}{Andrea
  Rotnitzky}, {and} \bibinfo{person}{Lue~Ping Zhao}.}
  \bibinfo{year}{1994}\natexlab{}.
\newblock \showarticletitle{Estimation of regression-coefficients when some
  regressors are not always observed}.
\newblock \bibinfo{journal}{\emph{J. Amer. Statist. Assoc.}}
  \bibinfo{volume}{89}, \bibinfo{number}{427} (\bibinfo{year}{1994}),
  \bibinfo{pages}{846--866}.
\newblock


\bibitem[Rockafellar and Uryasev(2000)]%
        {rockafellar2000optimization}
\bibfield{author}{\bibinfo{person}{R~Tyrrell Rockafellar} {and}
  \bibinfo{person}{Stanislav Uryasev}.} \bibinfo{year}{2000}\natexlab{}.
\newblock \showarticletitle{Optimization of conditional value-at-risk}.
\newblock \bibinfo{journal}{\emph{Journal of risk}}  \bibinfo{volume}{2}
  (\bibinfo{year}{2000}), \bibinfo{pages}{21--42}.
\newblock


\bibitem[Rubin(1986)]%
        {rubin1986comment}
\bibfield{author}{\bibinfo{person}{Donald~B Rubin}.}
  \bibinfo{year}{1986}\natexlab{}.
\newblock \showarticletitle{Comment: Which ifs have causal answers}.
\newblock \bibinfo{journal}{\emph{Journal of the American statistical
  association}} \bibinfo{volume}{81}, \bibinfo{number}{396}
  (\bibinfo{year}{1986}), \bibinfo{pages}{961--962}.
\newblock


\bibitem[Ruszczy{\'n}ski and Shapiro(2006)]%
        {ruszczynski2006optimization}
\bibfield{author}{\bibinfo{person}{Andrzej Ruszczy{\'n}ski} {and}
  \bibinfo{person}{Alexander Shapiro}.} \bibinfo{year}{2006}\natexlab{}.
\newblock \showarticletitle{Optimization of convex risk functions}.
\newblock \bibinfo{journal}{\emph{Mathematics of operations research}}
  \bibinfo{volume}{31}, \bibinfo{number}{3} (\bibinfo{year}{2006}),
  \bibinfo{pages}{433--452}.
\newblock


\bibitem[Sawilowsky(1990)]%
        {sawilowsky1990nonparametric}
\bibfield{author}{\bibinfo{person}{Shlomo~S Sawilowsky}.}
  \bibinfo{year}{1990}\natexlab{}.
\newblock \showarticletitle{Nonparametric tests of interaction in experimental
  design}.
\newblock \bibinfo{journal}{\emph{Review of Educational Research}}
  \bibinfo{volume}{60}, \bibinfo{number}{1} (\bibinfo{year}{1990}),
  \bibinfo{pages}{91--126}.
\newblock


\bibitem[Schick(1986)]%
        {schick1986}
\bibfield{author}{\bibinfo{person}{Anton Schick}.}
  \bibinfo{year}{1986}\natexlab{}.
\newblock \showarticletitle{On Asymptotically Efficient Estimation in
  Semiparametric Models}.
\newblock \bibinfo{journal}{\emph{Annals of Statistics}} \bibinfo{volume}{14},
  \bibinfo{number}{3} (\bibinfo{date}{09} \bibinfo{year}{1986}),
  \bibinfo{pages}{1139--1151}.
\newblock


\bibitem[Stone(1982)]%
        {stoneglobal}
\bibfield{author}{\bibinfo{person}{Charles~J Stone}.}
  \bibinfo{year}{1982}\natexlab{}.
\newblock \showarticletitle{Optimal global rates of convergence for
  nonparametric regression}.
\newblock \bibinfo{journal}{\emph{The annals of statistics}}
  (\bibinfo{year}{1982}), \bibinfo{pages}{1040--1053}.
\newblock


\bibitem[Tan(2006)]%
        {tan2006}
\bibfield{author}{\bibinfo{person}{Zhiqiang Tan}.}
  \bibinfo{year}{2006}\natexlab{}.
\newblock \showarticletitle{A Distributional Approach for Causal Inference
  Using Propensity Scores}.
\newblock \bibinfo{journal}{\emph{J. Amer. Statist. Assoc.}}
  \bibinfo{volume}{101}, \bibinfo{number}{476} (\bibinfo{year}{2006}),
  \bibinfo{pages}{1619--1637}.
\newblock


\bibitem[Wager and Athey(2018)]%
        {causalforest}
\bibfield{author}{\bibinfo{person}{Stefan Wager} {and} \bibinfo{person}{Susan
  Athey}.} \bibinfo{year}{2018}\natexlab{}.
\newblock \showarticletitle{Estimation and inference of heterogeneous treatment
  effects using random forests}.
\newblock \bibinfo{journal}{\emph{J. Amer. Statist. Assoc.}}
  \bibinfo{volume}{113}, \bibinfo{number}{523} (\bibinfo{year}{2018}),
  \bibinfo{pages}{1228--1242}.
\newblock


\bibitem[Zhao et~al\mbox{.}(2012)]%
        {zhao2012estimating}
\bibfield{author}{\bibinfo{person}{Yingqi Zhao}, \bibinfo{person}{Donglin
  Zeng}, \bibinfo{person}{A~John Rush}, {and} \bibinfo{person}{Michael~R
  Kosorok}.} \bibinfo{year}{2012}\natexlab{}.
\newblock \showarticletitle{Estimating individualized treatment rules using
  outcome weighted learning}.
\newblock \bibinfo{journal}{\emph{J. Amer. Statist. Assoc.}}
  \bibinfo{volume}{107}, \bibinfo{number}{499} (\bibinfo{year}{2012}),
  \bibinfo{pages}{1106--1118}.
\newblock


\bibitem[Zheng and van~der Laan(2011)]%
        {zheng2011cross}
\bibfield{author}{\bibinfo{person}{Wenjing Zheng} {and} \bibinfo{person}{Mark~J
  van~der Laan}.} \bibinfo{year}{2011}\natexlab{}.
\newblock \showarticletitle{Cross-validated targeted minimum-loss-based
  estimation}.
\newblock In \bibinfo{booktitle}{\emph{Targeted Learning}}.
  \bibinfo{publisher}{Springer}, \bibinfo{pages}{459--474}.
\newblock


\end{thebibliography}

\newpage
\appendix

\makeatletter
\vbox{\@titlefont\centering Appendices}
\makeatother

\section{Inference for Lower Bounds}\label{apx:lbest}

In the main text we focused on inference for CATE-CVaR because of the primary importance of an upper bound and because it is a quantity of interest independent of being a bound by virtue of summarizing heterogeneous treatment effects. 
Here we extend our inference procedure (\cref{alg:est}) to the lower bounds of \cref{thm:cvarlb1,thm:cvarlb2,thm:cvarlb3,cor:cvarlb4}.

\subsection{Inference for the Lower Bounds of \cref{thm:cvarlb1,cor:cvarlb4}}

Consider $\Psi$ equal to the right-hand side of \cref{eq:cvarlb1}
or any of the lower bounds in \cref{cor:cvarlb4}
with $\alpha$ fixed.
This is the simplest extension: the estimand is simply the CATE-CVaR minus a constant. All we need to do is run \cref{alg:est} and subtract the constant.

\subsection{Inference for the Lower Bound of \cref{thm:cvarlb2}}

Consider $\Psi$ equal to the right-hand side of \cref{eq:cvarlb2} with $\alpha$ fixed.
This case also follows easily from our CATE-CVaR procedure since the bound is simply the CVaR of the equal-parts mixture of the distribution of CATE minus $b$ and the distribution of CATE plus $b$. Thus, we need only to make two changes to \cref{alg:est}. First, in \cref{alg:est betastep}, set $\hat\beta^{(k)}=\inf\fbraces{\beta:\sum_{i\not\equiv k-1~\text{(mod $K$)}}\fprns{\findic{\hat\tau^{(k)}(X_i)\leq \beta-b}+\findic{\hat\tau^{(k)}(X_i)\leq \beta+b}-2\alpha}\geq0}$. Second, in \cref{alg:est phistep}, we should set $\phi_i=\frac12(\phi_i^++\phi_i^-)$ where $\phi^\pm_i$ is as is currently in \cref{alg:est phistep} but with $\hat\beta^{(k)}$ changed to $\hat\beta^{(k)}\pm b$, respectively for $+$ and $-$.

\subsection{Inference for the Lower Bound of \cref{thm:cvarlb3}}

Consider $\Psi$ equal to the right-hand side of \cref{eq:cvarlb3} with $\alpha$ and $\bar\sigma^2(x)$ fixed.
This case also just requires a few changes to \cref{alg:est}. 
First, in \cref{alg:est betastep}, set $\hat\beta^{(k)}$ to the maximizer of $\hat f^{(k)}(\beta)=\beta+\frac1{2\alpha}\frac{1}{\abs{\braces{i\not\equiv k-1~\text{(mod $K$)}}}}\sum_{i\not\equiv k-1~\text{(mod $K$)}}\prns{\hat\catef^{(k)}(X_i)-\beta-\sqrt{(\hat\catef^{(k)}(X_i)-\beta)^2+\bar\sigma^2(X_i)}}$. This can be done using golden-section search.
Second, in \cref{alg:est phistep}, set 
$\phi_i=\hat\beta^{(k)}+\frac{1}{2\alpha}\prns{\hat\catef^{(k)}(X_i)-\hat\beta^{(k)}-\sqrt{(\hat\catef^{(k)}(X_i)-\hat\beta^{(k)})^2+\bar\sigma^2(X_i)}}+\frac1{2\alpha}\prns{1-\prns{\hat\catef^{(k)}(X_i)-\hat\beta^{(k)}}{\prns{(\hat\catef^{(k)}(X_i)-\hat\beta^{(k)})^2+\bar\sigma^2(X_i)}^{-1/2}}}\frac{A_i-e(X_i)}{e(X_i)(1-e(X_i))}(Y_i-\hat\mu^{(k)}(X_i,A_i))$.

If $\bar\sigma^2(x)$ is given by the hand-side of \cref{eq:vardecomp} where $\sigma^2(X,A)=\op{Var}(Y\mid X,A)$ is estimated rather than known and where $\rho(X)$ is fixed (\eg, constant), then we need to cross-fit $\hat\sigma^{2,(k)}(X,A)$ and use the resulting $\hat{\bar\sigma}^{2,(k)}(x)$ in $f^{(k)}$ above as well as add the following term to the $\phi_i$ in the previous paragraph in order to account for the additional uncertainty: $-\frac12\prns{1-\rho(X_i)\prns{\hat\sigma^{2,(k)}(X_i,0)/\hat\sigma^{2,(k)}(X_i,1)}^{2A_i-1}}\prns{(\hat\catef^{(k)}(X_i)-\hat\beta^{(k)})^2+\hat{\bar\sigma}^{2,(k)}(X_i)}^{-1/2}\prns{\frac{A_i}{\hat e^{(k)}(X_i)}+\frac{1-A_i}{1-\hat e^{(k)}(X_i)}}\break\prns{Y_i^2-\hat\sigma^{2,(k)}(X_i,A_i)-\mu^{(k)}(X_i,A_i)^2-2\mu^{(k)}(X_i,A_i)(Y_i-\mu^{(k)}(X_i,A_i))}$.

\section{Additional Results for Section~\ref{sec:casestudy}}\label{apx:casestudy}

\begin{figure}[t!]\centering%
\begin{minipage}{0.5\textwidth}
\centering
\includegraphics[width=.9\linewidth]{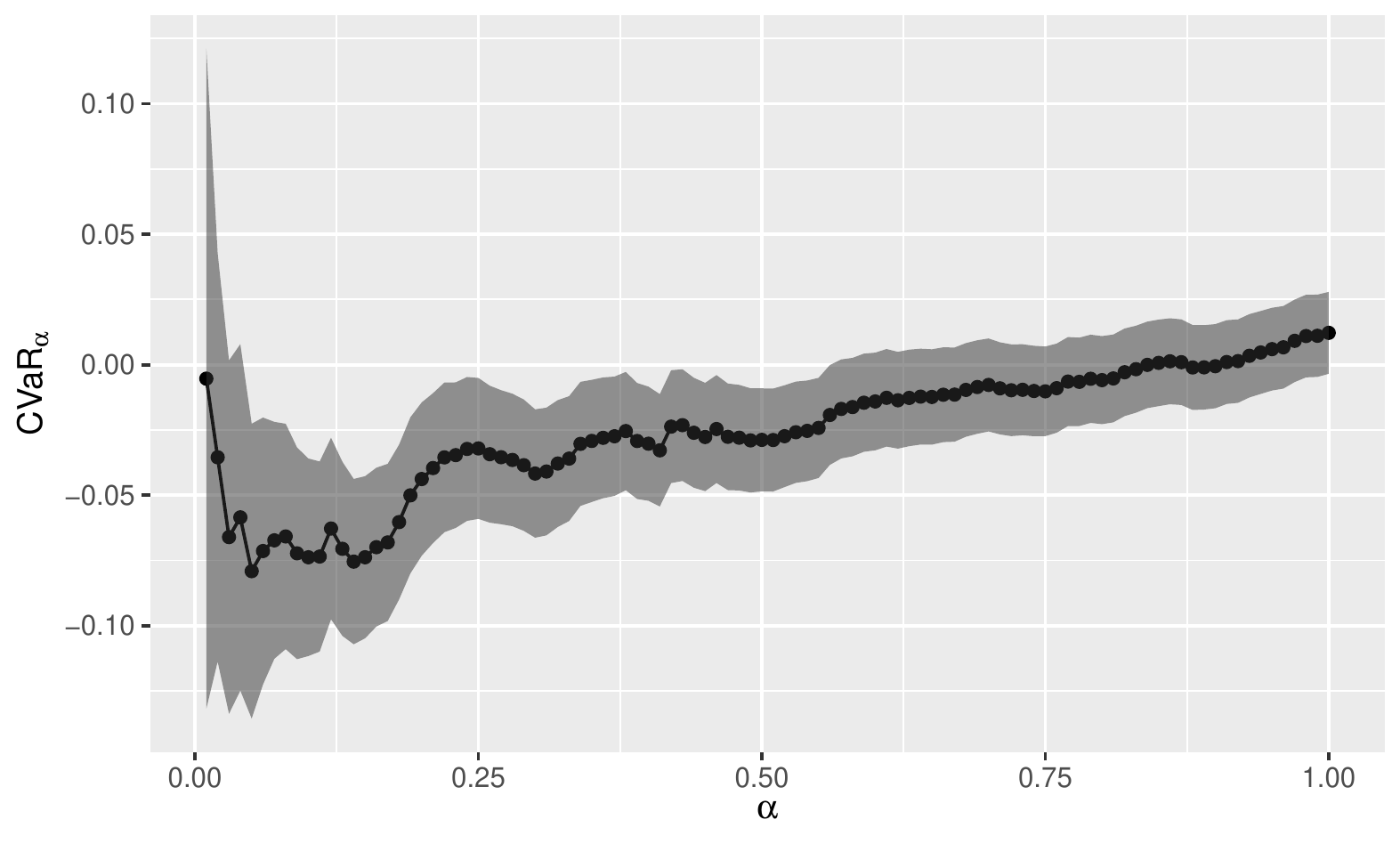}
\caption{The results from \cref{fig:job_cvar} \emph{without} rearrangement}\label{fig:job_cvar_notrearranged}
\end{minipage}\hfill%
\begin{minipage}{0.5\textwidth}
\centering
\includegraphics[width=.9\linewidth]{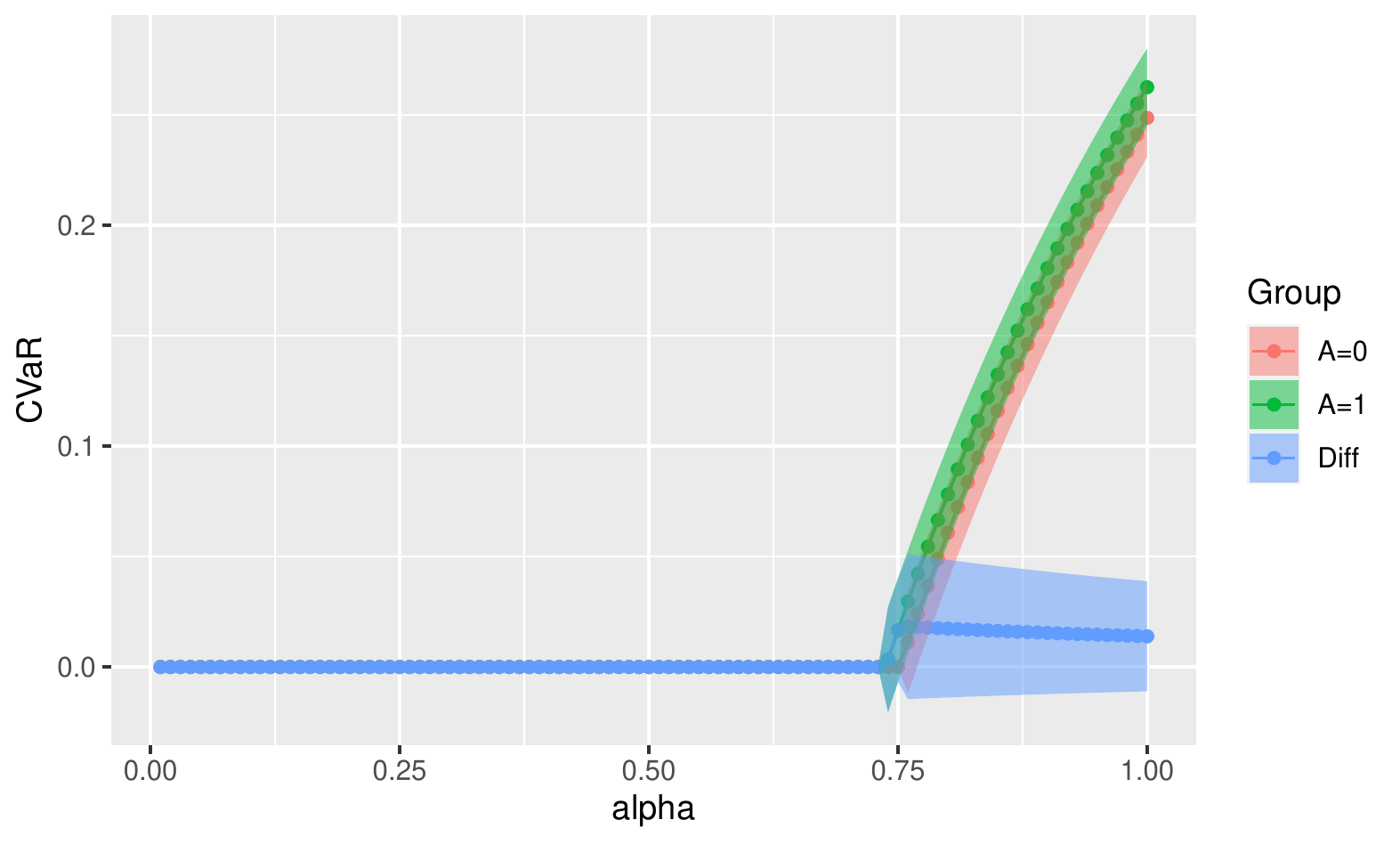}
\caption{The CVaR distributional treatment effect}\label{fig:job_cvar_te}
\end{minipage}
\end{figure}

Here we provide additional plots omitted from \cref{sec:casestudy}.

\Cref{fig:job_cvar_notrearranged} repeats the results presented in \cref{fig:job_cvar} but without applying the rearrangement post-processing suggested in \cref{remark:rearrangement}. Due to estimation error the unprocessed point estimates are not monotonic despite $\cvarat\alpha(\cate)$ being monotonic in $\alpha$. Rearrangement fixes this, making the results easier to interpret, without any loss in precision or inference, and possibly with some gains.

\Cref{fig:job_cvar_te} presents the estimated within-treatment-group CVaR and their difference, being the CVaR distributional treatment effect, that is, $\cvarat\alpha(Y(0))$, $\cvarat\alpha(Y(1))$, and $\cvarat\alpha(Y(1))-\cvarat\alpha(Y(0))$. As can be seen, the result is highly uninformative.

\section{Proofs for Section~\ref{sec:bounds}}

\subsection{Proof of \cref{thm:cvarbound}}

\begin{proof}
By iterated expectations and Jensen's inequality
\begin{align*}
\cvarat\alpha(\ite)
&= \sup_\beta\prns{\beta+\frac1\alpha\E[\E[(\ite-\beta)_-\mid X]]}\\
&\leq \sup_\beta\prns{\beta+\frac1\alpha\E(\cate-\beta)_-}
\\
&= \cvarat\alpha(\cate),
\end{align*}
which yields the first statement.

The second statement follows by setting $\ite=\cate$ and noting integrability ensures the $\cvar$ exists.
\end{proof}

\subsection{Proof of \cref{thm:cvarlb2}}

\begin{proof}
Let $\mathcal P$ be the set of joint distributions $\mathbb P'$ on $(X,\ite)$ having the same $X$-marginal and $\Eb{\ite\mid X}$ as $\mathbb P$ and satisfying $\abs{\cate-\ite}\leq b$ $\mathbb P'$-a.s. Then
\begin{align*}
\cvarat\alpha(\ite)
&\geq
\inf_{\mathbb P'\in\mathcal P}\sup_\beta\prns{\beta+\frac1\alpha\E_{\mathbb P'}[(\ite-\beta)_-]}
\\
&\geq
\sup_\beta\inf_{\mathbb P'\in\mathcal P}\prns{\beta+\frac1\alpha\E_{\mathbb P'}[(\ite-\beta)_-]}
\\
&=
\sup_\beta~\biggl(\beta+\frac1\alpha\E\biggl[
\inf_{\mathbb P'\in\mathcal P}\E_{\mathbb P'}[(\ite-\beta)_-\mid X]
\biggr]\biggr).
\end{align*}
In the first line we used the fact that $\mathbb P\in\mathcal P$. In the second we used weak duality. And, in the third we used iterated expectations and the fact that the restrictions in $\mathcal P$ factor over $\X$.

We proceed to lower bound the inner infimum:
\begin{align*}
\inf_{\mathbb P'\in\mathcal P}\E_{\mathbb P'}[(\ite-\beta)_-\mid X]
&=
\inf_{\substack{\nu~\text{measure on $[-b,b]$}\\\int z d\nu(z)=0\\\int 1 d\nu(z)=1\\\nu\succeq0}}\int(z+\cate-\beta)_-d\nu(z)
\\
&\geq
\sup_{p,\,q\;:\;pz+q\leq(z+\cate-\beta)_-\,\forall z\in[-b,b]}q
\\
&=
\sup_{p}\inf_{z\in[-b,b]}~(z+\cate-\beta)_--pz
\\
&=
\sup_{p}~(((-b+\cate-\beta)_-+bp)\wedge((b+\cate-\beta)_--bp)).
\\
&=
\frac{(-b+\cate-\beta)_-+(b+\cate-\beta)_-}{2}.
\end{align*}
In the first line we wrote the optimization problem as a semi-infinite linear optimization problem. In the second line we used weak duality. In the third line we used that the largest lower bound is the infimum. In the fourth line we used the a concave function is minimized on the boundary. In fifth line we noted that, since the objective in $p$ is convex with two linear pieces and goes to $-\infty$ as either $p\to+\infty$ or $p\to-\infty$, we have that the maximum occurs at the discontinuity point where the two linear pieces meet. The first statement follows by combining.

The second statement is proven by taking 
$\Prb{\ite=\cate-b\mid X}=\frac12$, $\Prb{\ite=\cate+b\mid X}=\frac12$ and noting integrability ensures the $\cvar$ exists.
\end{proof}

\subsection{Proof of \cref{thm:cvarlb1}}

\begin{proof}
By assumption and by change of variables $\gamma=\beta+b$,
\begin{align*}
\cvarat\alpha(\ite)
&\geq \sup_\beta\prns{\beta+\frac1\alpha\E[(\cate-b-\beta)_-]}\\
&= \sup_\gamma\prns{\gamma-b+\frac1\alpha\E(\cate-\gamma)_-}
\\
&= \cvarat\alpha(\cate)-b,
\end{align*}
which yields the first statement.

Fix any $q\in(\alpha,1)$.
Let $\Prb{\ite=\cate-b\mid X}=q$, $\Prb{\ite=\cate+\frac{qb}{1-q}\mid X}=1-q$.
Then, $\Eb{\ite\mid X}=\cate$ and
\begin{align*}
\cvarat\alpha(\ite)&=
\sup_\beta\prns{\beta+\frac1\alpha\Eb{q(\cate-b-\beta)_-+(1-q)\prns{\cate+\frac{qb}{1-q}-\beta}_-}}\\
&\leq
\sup_\beta\prns{\beta+\frac1\alpha\E[q(\cate-b-\beta)_-]}\\
&=
\cvarat{\alpha/q}(\cate)-b.
\end{align*}
Integrability ensures all these expectations exist.
Note that $\cvarat{\alpha/q}(\cate)$ is concave in $q$ as it is the supremum of affine functions in $q$. Therefore, it must be continuous in $q$. The statement is concluded by taking $q\to1$.
\end{proof}

\subsection{Proof of \cref{thm:cvarlb3}}

\begin{proof}
For any given distribution on $\X$ and a square-integrable function $\catef:\X\to\Rl$,
let $\mathcal P$ be the set of joint distributions $\mathbb P'$ on $(X,\ite)$ having the given $X$-marginal and $\Eb{\ite\mid X}=\cate$ as $\mathbb P$ and satisfying $\op{Var}(\ite\mid X)\leq \bar\sigma^2(X)$ $\mathbb P'$-a.s. We will proceed to prove both statements by directly evaluating the program
\begin{equation}\label{eq:cvarminimax}
\inf_{\mathbb P'\in\mathcal P}\sup_\beta\prns{\beta+\frac1\alpha\E_{\mathbb P'}[(\ite-\beta)_-]},
\end{equation}
and showing it is equal to the right-hand side of \cref{eq:cvarlb3}.

Note that the objective of \cref{eq:cvarminimax} is linear in $\mathbb P'$ and concave in $\beta$.
Moreover, since every $\mathbb P'\in\mathcal P$ has $\E_{\mathbb P'}\ite^2\leq (\E\cate)^2+\op{Var}(\cate)+\E\bar\sigma^2(X)<\infty$ by assumption of integrability, we have that $\mathcal P'$ is compact.
Therefore, by Sion's minimax theorem, we have
\begin{align*}
\inf_{\mathbb P'\in\mathcal P}\sup_\beta\prns{\beta+\frac1\alpha\E_{\mathbb P'}[(\ite-\beta)_-]}&=\sup_\beta\inf_{\mathbb P'\in\mathcal P}\prns{\beta+\frac1\alpha\E_{\mathbb P'}[(\ite-\beta)_-]}
\\
&=\sup_\beta~\biggl(\beta+\frac1\alpha\E\biggl[
\inf_{\mathbb P'\in\mathcal P}\E_{\mathbb P'}[(\ite-\beta)_-\mid X]
\biggr]\biggr),
\end{align*}
where in the second equality we used iterated expectations and the fact that the restrictions in $\mathcal P$ factor over $\X$.

We proceed to compute the inner infimum. Fix $\beta$ and $X$, and set $m=\beta-\tau(X),\,s=\bar\sigma^2(X)$. The inner infimum is equal to the following semi-infinite optimization problem over a signed measure as the decision variable with a linear objective and two linear equality constraints and one convex-quadratic constraint:
\begin{align*}
\ostar~=~\inf_{\nu}\quad&\int(z-m)_-d\nu(z)\\
\text{s.t.}\quad&\int zd\nu(z)=0\\
&\int 1d\nu(z)=1\\
&\int z^2d\nu(z)\leq s\\
&\nu\succeq0.
\end{align*}

Using $\nu$ being a Dirac at zero as a Slater point, strong duality for semi-infinite optimization gives an equivalent optimization problem in three scalar decision variables and a continuum of constraints:
\begin{align*}
\ostar~=~\sup_{p,q,r}\quad&q-sr\\
\text{s.t.}\quad&pz+q-z^2\leq(z-m)_-~~\forall z\\
&r\geq0.
\end{align*}

Using that the largest lower bound is the infimum and that if we choose $r=0$ in the outer supremum then $z\to\infty$ achieves $\infty$ in the inner infimum, we obtain
$$
\ostar~=~\sup_{p\in\Rl,\,r\in\Rl_{++}}\inf_z\quad(z-m)_--pz+(z^2-s)r.$$

We now compute the inner infimum. The objective is the minimum of two convex quadratics. Therefore, the infimum is equal to the minimum of the infimum of each quadratic by itself. We obtain
$$
\ostar~=~\sup_{p\in\Rl,\,r\in\Rl_{++}}\quad\prns{\frac{-p^2+2p-4r(m+rs)-1}{4r}}\wedge\prns{\frac{-p^2-rs}{4r}}.
$$
Fix $r\in\Rl_{++}$ and consider the supremum in $p$ alone. The objective is the minimum of two strictly concave quadratics with the \emph{same} quadratic part and \emph{different} linear parts. Therefore, the maximum occurs at the \emph{single} discontinuity point where the two quadratics meet. We obtain
$$
\ostar~=~\sup_{r\in\Rl_{++}}\quad-\frac{(1+4mr)^2}{16r}-sr.
$$

This objective is differentiable, convex for $r$ positive, and approaches $-\infty$ both as $r\downarrow0$ and as $r\uparrow\infty$. Hence, the maximum occurs at a critical point in the positive half-line. This critical point is at $r=1/(4\sqrt{m^2+s})$.
We obtain
$$
\ostar~=~\frac12(-m-\sqrt{m^2+s}),
$$
completing the proof.
\end{proof}

\subsection{Proof of \cref{cor:cvarlb4}}

\begin{proof}
The first inequality follows from \cref{thm:cvarbound}. The second inequality follows by applying \cref{thm:cvarlb3} with $\bar\sigma^2(X)=\op{Var}(\ite\mid X)$ and observing that, by triangle inequality,
\begin{align*}
\Eb{\cate-\beta-\sqrt{(\cate-\beta)^2+\bar\sigma^2(X)}}
&\geq
\Eb{\cate-\beta-\abs{\cate-\beta}-\bar\sigma(X)}
=
\Eb{2(\cate-\beta)_--\bar\sigma(X)}.
\end{align*}
The third inequality follows by Cauchy-Schwarz, and the fourth by Jensen's and iterated expectation.
\end{proof}

\section{Proofs for Section~\ref{sec:localrobust}}

\subsection{Preliminary lemma}

\begin{lemma}\label{lemma:EphiErr}
Suppose \cref{asm:regularity} holds.
Then, there exists constants $c_1>0,c_2>0,c_3>0$ such that
for any $\alpha\in(0,1]$ and any $\tprop,\tmu,\tcate,\tbeta$ with 
$\bar e\leq\tprop\leq 1-\bar e$, 
$\|\tmu\|_\infty\leq B$,
$\|\tau-\tcate\|_{\infty}\leq c_1$, 
and $\fabs{\beta^*-\tbeta}\leq c_1$,
we have
\begin{align*}
\abs{\E[\phi(X,A,\Yobs;\tprop,\tmu,\tcate,\tbeta)]
-
\E[\phi(X,A,\Yobs; e,\mu,\tau,\beta^*)]}
&\leq \frac{c_2}{\alpha}
\prns{\magd{e-\tprop}_2\magd{\mu-\tmu}_2+\magd{\tau-\tcate}_\infty^2+(\beta^*-\tbeta)^2},
\\
\magd{\phi(X,A,\Yobs;\tprop,\tmu,\tcate,\tbeta)
-
\phi(X,A,\Yobs; e,\mu,\tau,\beta^*)}_2&\leq\frac{c_3}{\alpha}
\prns{\magd{e-\tprop}_2+\magd{\mu-\tmu}_2+\magd{\tau-\tcate}_\infty+\fabs{\beta^*-\tbeta}}
.
\end{align*}
\end{lemma}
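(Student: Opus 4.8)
The plan is to reduce everything to the doubly-robust structure of the AIPW score
\[
g(X,A,\Yobs;\tprop,\tmu)=\tmu(X,1)-\tmu(X,0)+\frac{A-\tprop(X)}{\tprop(X)(1-\tprop(X))}\prns{\Yobs-\tmu(X,A)},
\]
so that $\phi=\tbeta+\frac1\alpha\indic{\tcate(X)\le\tbeta}(g-\tbeta)$. The computation I would do first, by conditioning on $X$ and invoking unconfoundedness, is the mixed-bias identity $\E[g\mid X]-\cate=(e-\tprop)\prns{\frac{\mu(X,1)-\tmu(X,1)}{\tprop}+\frac{\mu(X,0)-\tmu(X,0)}{1-\tprop}}$, whose magnitude is at most $\frac1{\bar e}\abs{e-\tprop}\prns{\abs{\mu(X,1)-\tmu(X,1)}+\abs{\mu(X,0)-\tmu(X,0)}}$ on the admissible propensity range. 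The same conditioning shows $\E[\phi(\cdots;e,\mu,\catef,\beta^*)]=\Psi=\cvarat\alpha(\cate)$.

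For the first (bias) inequality I would use iterated expectations to replace $g$ by $\cate+R(X)$ inside the $X$-measurable indicator, where $R(X)=\E[g\mid X]-\cate$, and split $\E[\phi(\cdots;\tprop,\tmu,\tcate,\tbeta)]-\Psi$ into three pieces. The nuisance piece $\frac1\alpha\E[\indic{\tcate\le\tbeta}R(X)]$ is handled by the identity above and Cauchy--Schwarz, producing $\frac{c}{\alpha}\magd{e-\tprop}_2\magd{\mu-\tmu}_2$. The threshold piece $\frac1\alpha\E[(\indic{\tcate\le\tbeta}-\indic{\cate\le\tbeta})(\cate-\tbeta)]$ is where the orthogonality structure buys a \emph{square}: on the disagreement event $\cate$ and $\tcate$ straddle $\tbeta$, so both $\abs{\cate-\tbeta}\le\magd{\catef-\tcate}_\infty$ and the event sits in a $\cate$-band of probability at most $2L\magd{\catef-\tcate}_\infty$, where $L$ bounds the density of $\cate$ near $\beta^*$ (finite by \cref{asm:regularity} once $c_1$ is small); multiplying the two gives $\frac{c}{\alpha}\magd{\catef-\tcate}_\infty^2$. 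The quantile piece is $G(\tbeta)-G(\beta^*)$ for $G(\beta)=\beta+\frac1\alpha\E[(\cate-\beta)\wedge0]$, and since $G$ is concave with $G'(\beta^*)=1-\frac1\alpha F_{\cate}(\beta^*)=0$ and $\abs{G''}\le L/\alpha$ near $\beta^*$, a second-order Taylor step gives $\frac{c}{\alpha}(\beta^*-\tbeta)^2$. Summing yields the first claim, with $\alpha$-free constants since every $1/\alpha$ has been pulled out front.

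For the second ($L_2$) inequality I would argue pathwise rather than in expectation. Writing $I=\indic{\tcate\le\tbeta}$, $I_0=\indic{\cate\le\beta^*}$, and $g,g_0$ for the perturbed and true scores, the algebraic identity
\[
\phi-\phi_0=(\tbeta-\beta^*)\prns{1-I_0/\alpha}+\frac1\alpha I(g-g_0)+\frac1\alpha(I-I_0)(g_0-\tbeta)
\]
splits the difference into three summands to bound in $L_2$. The first is at most $\frac2\alpha\abs{\beta^*-\tbeta}$ because $\abs{1-I_0/\alpha}\le2/\alpha$. The second uses that $g$ is Lipschitz in $(\tprop,\tmu)$ on the admissible set---the inverse-propensity weight and its $\tprop$-derivative are bounded there and $\tmu$ enters linearly with $\abs\Yobs,\magd{\tmu}_\infty\le B$---so $\frac1\alpha\magd{I(g-g_0)}_2\le\frac{c}{\alpha}\prns{\magd{e-\tprop}_2+\magd{\mu-\tmu}_2}$.

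The main obstacle is the third summand, $\frac1\alpha(I-I_0)(g_0-\tbeta)$, because here $g_0$ is the \emph{random} score and, unlike in the bias bound, cannot be replaced by its conditional mean. Since $g_0-\tbeta$ is bounded by a constant depending on $B,\bar e$ (as $\tbeta$ stays near $\beta^*$), I would reduce to $\frac{c}{\alpha}\magd{I-I_0}_2=\frac{c}{\alpha}\sqrt{\Prb{I\ne I_0}}$ and reuse the band argument: $\{I\ne I_0\}$ forces $\cate$ into an interval about $\beta^*$ of width at most $C\prns{\magd{\catef-\tcate}_\infty+\abs{\beta^*-\tbeta}}$, giving $\Prb{I\ne I_0}\le2L\prns{\magd{\catef-\tcate}_\infty+\abs{\beta^*-\tbeta}}$. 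This produces a \emph{square-root} modulus in the $\catef$- and $\beta$-errors---the intrinsic $L_2$ behavior of an indicator/quantile term---so the delicate point is whether the stated \emph{linear} rate is recovered by also using that the indicator gap is capped at $1$ (making the two forms agree up to constants once the errors exceed a fixed threshold), or whether the square-root form is what is ultimately consumed downstream. Either way the estimate certifies $\magd{\phi-\phi_0}_2\to0$ under the rate hypotheses of \cref{thm:asympnormal}, which is all the subsequent cross-fitting step requires.
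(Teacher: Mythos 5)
Your proof is correct and follows essentially the same route as the paper's: a telescoping decomposition over the four nuisances, Cauchy--Schwarz on the mixed-bias (product) term in $(\tprop,\tmu)$, a density-band argument for the indicator perturbation that yields the square in the first inequality, and a second-order Taylor expansion at $\beta^*$ where the first derivative of $\beta\mapsto\beta+\frac1\alpha\E(\cate-\beta)_-$ vanishes. The one point you leave open deserves a definite answer: the third summand of your $L_2$ bound really is of order $\sqrt{\Pr(I\neq I_0)}$, hence $(\magd{\tau-\tcate}_\infty+\fabs{\beta^*-\tbeta})^{1/2}$, and capping the indicator gap at $1$ does \emph{not} recover the stated linear rate (for small errors the square root dominates any linear bound, and since $g_0-\tbeta$ retains nondegenerate conditional variance on the disagreement event, the square root is tight). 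The paper's own proof has the same slip, bounding the $L_2$ norm of an indicator difference by the probability of disagreement rather than its square root; the second display of the lemma should therefore be read with $\magd{\tau-\tcate}_\infty^{1/2}$ and $\fabs{\beta^*-\tbeta}^{1/2}$ in place of the linear terms. As you observe, this is immaterial downstream: \cref{thm:asympnormal} only uses this inequality to show $\magd{\phi-\phi_0}_2=o_p(1)$ for the cross-fitted empirical-process remainder, and the square-root form delivers that under the assumed rates.
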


\begin{proof}
First, we compute:
\begin{align*}
\E\phi(X,A,\Yobs;\tprop,\tmu,\tcate,\tbeta)=
\tbeta+\frac1\alpha\E\biggl[
&
\indic{\tcate(X)\leq\tbeta}
\\&\times
\prns{
\tmu(X,1)-\tmu(X,0)+\frac{e(X)}{\tprop(X)}(\mu(X,1)-\tmu(X,1))-\frac{1-e(X)}{1-\tprop(X)}(\mu(X,0)-\tmu(X,0))-\tbeta
}
\biggr].
\end{align*}

We proceed to show the first inequality by bounding each of the following:
\begin{align}
\label{eq:orthb 1}&\fabs{\E\phi(X,A,\Yobs;\tprop,\tmu,\tcate,\tbeta)-\E\phi(X,A,\Yobs;e,\tmu,\tcate,\tbeta)},\\
\label{eq:orthb 1b}&\fabs{\E\phi(X,A,\Yobs; e,\tmu,\tcate,\tbeta)-\E\phi(X,A,\Yobs;e, \mu,\tcate,\tbeta)},\\
\label{eq:orthb 2}&\fabs{\E\phi(X,A,\Yobs; e,\mu,\tcate,\tbeta)-\E\phi(X,A,\Yobs;e, \mu,\tau,\tbeta)},\\
\label{eq:orthb 3}&\fabs{\E\phi(X,A,\Yobs; e, \mu,\tau,\tbeta)-\E\phi(X,A,\Yobs;e, \mu,\tau, \beta^*)}.
\end{align}

We begin with \cref{eq:orthb 1}. We have
\begin{align*}
\fabs{\E\phi(X,A,\Yobs;\tprop,\tmu,\tcate,\tbeta)-\E\phi(X,A,\Yobs;e,\tmu,\tcate,\tbeta)}\leq~&\frac{1}{\alpha}\Eb{\indic{\tcate(X)\leq\tbeta}\frac1{\tprop(X)}\abs{e(X)-\tprop(X)}\abs{\mu(X,1)-\tmu(X,1)}}
\\&+\frac{1}{\alpha}\Eb{\indic{\tau(X)\leq\tbeta}\frac1{1-\tprop(X)}\abs{e(X)-\tprop(X)}\abs{\mu(X,0)-\tmu(X,0)}}\\
\leq&\frac{1}{\alpha\bar e}\magd{e-\tprop}_2\prns{\magd{\mu(\cdot,1)-\tmu(\cdot,1)}_2+\magd{\mu(\cdot,0)-\tmu(\cdot,0)}_2}
.
\end{align*}

Next, we observe that \cref{eq:orthb 1b} is exactly $0$.

Next, we tackle \cref{eq:orthb 2}.
By \cref{asm:regularity}, there exists $c>0$ such that $\cate-\beta^*$ has a density on $(-c,c)$ bounded by $F'_{\cate}(F_{\cate}^{-1}(\alpha))+1$.
Therefore, provided that $\fabs{\tbeta-\beta^*}\leq c/3,\,\fmagd{\tcate(X)-\tau(X)}_\infty\leq c/3$,
we have
\begin{align*}\notag
&\abs{\E\phi(X,A,\Yobs; e,\mu,\tcate,\tbeta)-\E\phi(X,A,\Yobs;e, \mu,\tau,\tbeta)}
\\&\qquad=
\frac1{\alpha}\abs{\Eb{
\fprns{\findic{\cate-\beta^*\leq \tbeta-\beta^*+\tau(X)-\tcate(X)}-\findic{\cate-\beta^*\leq \tbeta-\beta^*}}
\prns{
\cate-\beta^*
}
}}
\\&\qquad\leq
\frac1\alpha\Eb{\fabs{\cate-\beta^*}\,\findic{\fabs{\cate-\beta^*}\leq \fabs{\tbeta-\beta^*}+\fabs{\tcate(X)-\tau(X)}}}
\\&\qquad\leq
\frac1\alpha\Eb{\fabs{\cate-\beta^*}\,\findic{\fabs{\cate-\beta^*}\leq \fabs{\tbeta-\beta^*}+\fmagd{\tcate(X)-\tau(X)}_\infty}}
\\&\qquad\leq
\frac1\alpha\fprns{F'_{\cate}(F_{\cate}^{-1}(\alpha))+1}\prns{\fabs{\tbeta-\beta^*}+\fmagd{\tcate(X)-\tau(X)}_\infty}^2
.
\end{align*}

We now tackle \cref{eq:orthb 3}.
Let
\begin{equation}\label{eq:orth 3 bound}
f(\beta)=\E[\phi(X,A,\Yobs;e,\mu,\tau,\beta))]
=
\beta+\frac1\alpha\Eb{
\prns{
\cate-\beta
}_-
}.
\end{equation}
By assumption $f'(\beta^*)=0$ and $\abs{f''(\beta)}\leq \frac1\alpha\fprns{F'_{\cate}(F_{\cate}^{-1}(\alpha))+1}$ for $\beta\in(\beta^*-c,\beta^*+c)$. Therefore, provided $\fabs{\tbeta-\beta^*}\leq c/3$, Taylor's theorem yields that \cref{eq:orthb 3} is bounded by $\frac1{2\alpha}\fprns{F'_{\cate}(F_{\cate}^{-1}(\alpha))+1}\fprns{\tbeta-\beta^*}^2$.

We proceed to show the second inequality by bounding each of the following:
\begin{align}
\label{eq:orthc 1}&\fmagd{\phi(X,A,\Yobs;\tprop,\tmu,\tcate,\tbeta)-\phi(X,A,\Yobs;e,\tmu,\tcate,\tbeta)}_2,\\
\label{eq:orthc 1b}&\fmagd{\phi(X,A,\Yobs; e,\tmu,\tcate,\tbeta)-\phi(X,A,\Yobs;e, \mu,\tcate,\tbeta)}_2,\\
\label{eq:orthc 2}&\fmagd{\phi(X,A,\Yobs; e,\mu,\tcate,\tbeta)-\phi(X,A,\Yobs;e, \mu,\tcate,\beta^*)}_2,\\
\label{eq:orthc 3}&\fmagd{\phi(X,A,\Yobs; e, \mu,\tcate,\beta)-\phi(X,A,\Yobs;e, \mu,\tau, \beta^*)}_2.
\end{align}

By writing $\phi(X,A,\Yobs;\tprop,\tmu,\tcate,\tbeta)-\phi(X,A,\Yobs;e,\tmu,\tcate,\tbeta)=\frac1\alpha\indic{\tcate(X)\leq\tbeta}(Y-\tmu(X,A))(A(\tprop^{-1}(X)-e^{-1}(X))-(1-A)((1-\tprop(X))^{-1}-(1-e(X))^{-1})$ and noting that $\|A(\tprop^{-1}(X)-e^{-1}(X))\|_2\leq\frac1{\bar e^{3/2}}\|\tprop-e\|_2$, we see that \cref{eq:orthc 1} is bounded by $\frac{4B}{\alpha\bar e^{3/2}}\|\tprop-e\|_2$.

By writing $\phi(X,A,\Yobs; e,\tmu,\tcate,\tbeta)-\phi(X,A,\Yobs;e, \mu,\tcate,\tbeta)=\frac1\alpha\indic{\tcate(X)\leq\tbeta}
\prns{1-\frac{A}{e(X)}}(\tmu(X,1)-\mu(X,1))
+
\frac1\alpha\indic{\tcate(X)\leq\tbeta}
\prns{\frac{1-A}{1-e(X)}-1}(\tmu(X,0)-\mu(X,0))
$, we see that \cref{eq:orthc 1b} is bounded by $\frac{1}{\alpha\bar e}\prns{\|\tmu(X,0)-\mu(X,0)\|_2+\|\tmu(X,1)-\mu(X,1)\|_2}$.

To bound \cref{eq:orthc 2} let us first write $\phi(X,A,\Yobs; e,\mu,\tcate,\tbeta)-\phi(X,A,\Yobs;e, \mu,\tcate,\beta^*)=(\tbeta-\beta^*)(1-\frac1\alpha\findic{\tcate(X)\leq\tbeta})+\frac1\alpha(\findic{\tcate(X)\leq\tbeta}-\findic{\tcate(X)\leq\beta^*})(\mu(X,1)-\mu(X,0)+\frac{A-e(X)}{e(X)(1-e(X))}(Y-\mu(X,A))-\beta^*)$. By \cref{asm:regularity}, we have $\abs{\beta^*}\leq 2B$. Therefore, \cref{eq:orthc 2} is bounded by $\frac1\alpha\fabs{\tbeta-\beta^*}+\frac{6B}{\alpha\bar e}\fmagd{\findic{\tcate(X)\leq\tbeta}-\findic{\tcate(X)\leq\beta^*}}_2$. And, provided $\fabs{\tbeta-\beta^*}\leq c/3,\,\fmagd{\tcate(X)-\tau(X)}_\infty\leq c/3$, we have
 $\fmagd{\findic{\tcate(X)\leq\tbeta}-\findic{\tcate(X)\leq\beta^*}}_2\leq \Prb{\abs{\cate-\beta^*}\leq \fabs{\tbeta-\beta^*}+\fmagd{\tcate(X)-\tau(X)}_\infty}\leq 2\fprns{F'_{\cate}(F_{\cate}^{-1}(\alpha))+1}\fprns{\fabs{\tbeta-\beta^*}+\fmagd{\tcate(X)-\tau(X)}_\infty}$

Finally, \cref{eq:orthc 2} is bounded by $\frac{6B}{\alpha\bar e}\fmagd{\findic{\tcate(X)\leq\beta^*}-\findic{\tau(X)\leq\beta^*}}_2$.
And, provided $\fmagd{\tcate(X)-\tau(X)}_\infty\leq c/3$, we have
$\fmagd{\findic{\tcate(X)\leq\beta^*}-\findic{\tau(X)\leq\beta^*}}_2\leq \Prb{\abs{\cate-\beta^*}\leq \fmagd{\tcate(X)-\tau(X)}_\infty}\leq 2\fprns{F'_{\cate}(F_{\cate}^{-1}(\alpha))+1}\fmagd{\tcate(X)-\tau(X)}_\infty$.
\end{proof}


\subsection{Proof of \cref{lemma:betalemma}}

\begin{proof}
Set
$\mathcal I_{-k}=\braces{i\not\equiv k-1~\text{(mod $K$)}}$.
For any function $f(x)$, let us denote
$Q_\alpha(f)=
\inf\fbraces{\beta:\Eb{\indic{f(X)\leq\beta}-\alpha}\geq0}
$ 
and
$\hat Q^{(k)}_\alpha(f)=\inf\fbraces{\beta:\sum_{i\in\mathcal I_{-k}}\prns{\indic{f(X_i)\leq\beta}-\alpha}\geq0}$
so that $\hat\beta^{(k)}=\hat Q^{(k)}_\alpha(\hat\catef^{(k)})$
and $\beta^*=F_{\cate}^{-1}(\alpha)=Q_\alpha(\catef)$.

First we consider the case $r=\infty$. We have $\abs{\hat Q^{(k)}_\alpha(\hat\catef^{(k)})-\hat Q_\alpha(\catef)}\leq \sup_{i\in\mathcal I_{-k}}\abs{\hat\catef^{(k)}(X_i)-\catef(X_i)}=O_p(\fmagd{\hat\catef^{(k)}-\tau}_\infty)$.
By \cref{asm:regularity} and delta method, $\abs{\hat Q^{(k)}_{\alpha}(\catef)-Q_{\alpha}(\catef)}=O_p(n^{-1/2})$, giving the result for $r=\infty$.

Consider now $r<\infty$.
Let $\delta=\magd{\catef-\hat\catef}_r^{\frac{r}{r+1}}$. By union bound with respect to the empirical distribution (conditioning on the data),
$$
\hat Q^{(k)}_\alpha(\hat\catef^{(k)})\leq 
\hat Q^{(k)}_{\alpha+\delta}(\catef)
+
\hat Q^{(k)}_{1-\delta}(\hat\catef^{(k)}-\catef).
$$
By \cref{asm:regularity} and delta method, $\hat Q^{(k)}_{\alpha+\delta}(\catef)\leq Q_{\alpha}(\catef)+O_p(\delta)+O_p(n^{-1/2})$.
By Markov's inequality with respect to the empirical distribution (conditioning on the data),
$$
\hat Q^{(k)}_{1-\delta}(\hat\catef^{(k)}-\catef)\leq 
\delta^{-1/r}\prns{\frac1{\abs{\mathcal I_{-k}}}\sum_{i\in\mathcal I_{-k}}\abs{\hat\catef^{(k)}(X_i)-\catef(X_i)}^r}^{1/r}
= O_p(\delta^{-1/r}\magd{\catef-\hat\catef}_r)=O_p(\delta).
$$
A wholly symmetric argument for a lower bound gives the conclusion.
\end{proof}

\subsection{Proof of Theorem~\ref{thm:asympnormal}}

\begin{proof}
Let
$\mathcal I_k=\braces{i\equiv k-1~\text{(mod $K$)}}$,
$\mathcal I_{-k}=\braces{i\not\equiv k-1~\text{(mod $K$)}}$,
$\hat\E_{k}f(X,A,Y)=\frac{1}{\abs{\mathcal I_k}}\sum_{i\in\mathcal I_k}f(X_i,A_i,Y_i)$, and $\E_{\mid-k}f(X,A,Y)=\E[f(X,A,Y)\mid \{(X_i,A_i,Y_i):i\in \mathcal I_{-k}\}]$.
\begin{align}
\notag&\hat\E_{k}\phi(X,A,\Yobs;\hat e^{(k)},\hat\mu^{(k)},\hat\tau^{(k)},\hat\beta^{(k)})-\hat\E_{k}\phi(X,A,\Yobs;e,\mu,\tau,\beta^*)\\
&=\label{eq:asymp a}\E_{\mid-k}\phi(X,A,\Yobs;\hat e^{(k)},\hat\mu^{(k)},\hat\tau^{(k)},\hat\beta^{(k)})-\E_{\mid-k}\phi(X,A,\Yobs;e,\mu,\tau,\beta^*)\\&\label{eq:asymp b}\phantom{=}+(\hat\E_{k}-\E_{\mid-k})(\phi(X,A,\Yobs;\hat e^{(k)},\hat\mu^{(k)},\hat\tau^{(k)},\hat\beta^{(k)})-\phi(X,A,\Yobs;e,\mu,\tau,\beta^*)).
\end{align}
We proceed to show that each of \cref{eq:asymp a,eq:asymp b} are $o_p(1/\sqrt{n})$.

By \cref{lemma:EphiErr}, we have that \cref{eq:asymp a} is
$$O_p(\fmagd{e-\hat e^{(k)}}_2\fmagd{\mu-\hat \mu^{(k)}}_2
+\fmagd{\tau-\hat\tau^{(k)}}_\infty^2+(\hat\beta^{(k)}-\beta^*)^2
).$$
So, by our nuisance-estimation assumptions and \cref{lemma:betalemma}, \cref{eq:asymp a} is $o_p(1/\sqrt{n})$.

By Chebyshev's inequality conditioned on $\mathcal I_{-k}$, we obtain that \cref{eq:asymp b} is
$$
O_p(\abs{\mathcal I_k}^{-1/2}\fmagd{\phi(X,A,\Yobs;\hat e^{(k)},\hat\mu^{(k)},\hat\tau^{(k)},\hat\beta^{(k)})-\phi(X,A,\Yobs;e,\mu,\tau,\beta^*)}_2).
$$
By our nuisance-estimation assumptions, \cref{lemma:betalemma}, and \cref{lemma:EphiErr}, we have that $\fmagd{\phi(X,A,\Yobs;\hat e^{(k)},\hat\mu^{(k)},\hat\tau^{(k)},\hat\beta^{(k)})-\phi(X,A,\Yobs;e,\mu,\tau,\beta^*)}_2=o_p(1)$.
Thus, \cref{eq:asymp b} is $o_p(1/\sqrt{n})$.

The rest is concluded by noting that $\frac1n\sum_{i=1}^n\phi(X_i,A_i,Y_i;e,\mu,\tau,\beta^*)=\frac1K\sum_{k=1}^K\frac{\abs{\mathcal I_k}}{n/K}\hat\E_k \phi(X,A,Y;e,\mu,\tau,\beta^*)$ and by the central limit theorem.
\end{proof}

\section{Proofs for Section~\ref{sec:caterobust}}

\subsection{Preliminary lemma}

\begin{lemma}\label{lemma:EphiErr2}
Fix any $\sprop,\smu,\scate$ with
$\bar e\leq \sprop\leq 1-\bar e$, $\fmagd{\smu}_\infty\leq B$, $\fmagd{\scate}_\infty\leq 2B$.
Suppose that either $\sprop=e$ or $\smu=\mu$.
Suppose \cref{asm:regularity} holds with $\tau$ replaced by $\scate$, and set $\sbeta=F^{-1}_{\scate(X)}(\alpha)$.
Set $\kappa=1$ if $\scate=\tau$ and otherwise set $\kappa=0$.
Then, there exists constants $c_1>0,c_2>0,c_3>0$ such that
for any $\alpha\in(0,1]$ and any $\tprop,\tmu,\tcate,\tbeta$ with 
$\bar e\leq\tprop\leq 1-\bar e$, 
$\|\tmu\|_\infty\leq B$,
$\|\scate-\tcate\|_{\infty}\leq c_1$, 
and $\fabs{\sbeta-\tbeta}\leq c_1$,
we have
\begin{align*}
\abs{\E[\phi(X,A,\Yobs;\tprop,\tmu,\tcate,\tbeta)]
-
\E[\phi(X,A,\Yobs; \sprop,\smu,\scate,\sbeta)]}
&\leq \frac{c_2}{\alpha}
\bigl(
\fmagd{\sprop-\tprop}_2\fmagd{\smu-\tmu}_2
+\fmagd{\sprop-\tprop}_2\fmagd{\smu-\mu}_2
\\&\phantom{\leq \frac{c_2}{\alpha}\bigl(}+\fmagd{\sprop-e}_2\fmagd{\smu-\tmu}_2+\fmagd{\scate-\tcate}^{1+\kappa}_\infty+\fabs{\sbeta-\tbeta}^{1+\kappa}\bigr),
\\
\magd{\phi(X,A,\Yobs;\tprop,\tmu,\tcate,\tbeta)
-
\phi(X,A,\Yobs; \sprop,\smu,\scate,\sbeta)}_2&\leq\frac{c_3}{\alpha}
\prns{\fmagd{\sprop-\tprop}_2+\fmagd{\smu-\tmu}_2+\fmagd{\scate-\tcate}_\infty+\fabs{\sbeta-\tbeta}}
.
\end{align*}
\end{lemma}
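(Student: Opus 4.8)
The plan is to mirror the two-part structure of the proof of \cref{lemma:EphiErr}, now comparing the perturbed arguments $(\tprop,\tmu,\tcate,\tbeta)$ against the \emph{reference} nuisances $(\sprop,\smu,\scate,\sbeta)$ in place of the true $(e,\mu,\catef,\beta^*)$. The bias bound (first display) is the substantive part; the variance bound (second display) is routine. The new ingredient is the hypothesis that \emph{either} $\sprop=e$ \emph{or} $\smu=\mu$. Writing the doubly-robust pseudo-CATE
\[
g(\tprop,\tmu)=\tmu(X,1)-\tmu(X,0)+\tfrac{e(X)}{\tprop(X)}\prns{\mu(X,1)-\tmu(X,1)}-\tfrac{1-e(X)}{1-\tprop(X)}\prns{\mu(X,0)-\tmu(X,0)},
\]
a direct computation gives the bilinear residual $g(\tprop,\tmu)-\cate=\tfrac{e(X)-\tprop(X)}{\tprop(X)}(\mu(X,1)-\tmu(X,1))+\tfrac{e(X)-\tprop(X)}{1-\tprop(X)}(\mu(X,0)-\tmu(X,0))$, so under either hypothesis $g(\sprop,\smu)=\cate$ \emph{exactly}.

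For the bias bound I first record, as in \cref{lemma:EphiErr}, that unconfoundedness and iterated expectations give $\E\phi(X,A,\Yobs;\tprop,\tmu,\tcate,\tbeta)=\tbeta+\tfrac1\alpha\E[\indic{\tcate(X)\leq\tbeta}(g(\tprop,\tmu)-\tbeta)]$, and then split the target into two moves. \textbf{Move~A} sends $(\tprop,\tmu)\to(\sprop,\smu)$ with $(\tcate,\tbeta)$ fixed; the indicator is unchanged, so the increment is $\tfrac1\alpha\E[\indic{\tcate(X)\leq\tbeta}(g(\tprop,\tmu)-\cate)]$, which Cauchy--Schwarz bounds by $\tfrac1{\alpha\bar e}\magd{e-\tprop}_2(\magd{\mu(\cdot,1)-\tmu(\cdot,1)}_2+\magd{\mu(\cdot,0)-\tmu(\cdot,0)}_2)$. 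Converting true-nuisance gaps into reference-relative ones by the triangle inequality, separately under the two hypotheses ($\sprop=e$ turns $\magd{e-\tprop}_2$ into $\magd{\sprop-\tprop}_2$ and splits $\mu-\tmu=(\mu-\smu)+(\smu-\tmu)$; $\smu=\mu$ turns $\magd{\mu-\tmu}_2$ into $\magd{\smu-\tmu}_2$ and splits $e-\tprop=(e-\sprop)+(\sprop-\tprop)$), produces exactly the three product terms $\magd{\sprop-\tprop}_2\magd{\smu-\tmu}_2$, $\magd{\sprop-\tprop}_2\magd{\smu-\mu}_2$, and $\magd{\sprop-e}_2\magd{\smu-\tmu}_2$.

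\textbf{Move~B} sends $(\tcate,\tbeta)\to(\scate,\sbeta)$ with $(\sprop,\smu)$ fixed; using $g(\sprop,\smu)=\cate$ this is the increment of $h(c,\beta)=\beta+\tfrac1\alpha\E[\indic{c(X)\leq\beta}(\cate-\beta)]$, which I split as $[h(\tcate,\tbeta)-h(\scate,\tbeta)]+[h(\scate,\tbeta)-h(\scate,\sbeta)]$. The first sub-increment equals $\tfrac1\alpha\E[(\indic{\tcate(X)\leq\tbeta}-\indic{\scate(X)\leq\tbeta})(\cate-\tbeta)]$, supported on $\{|\scate(X)-\tbeta|\leq\magd{\tcate-\scate}_\infty\}$, an event of probability $O(\magd{\tcate-\scate}_\infty)$ because \cref{asm:regularity} applied to $\scate$ (together with $|\sbeta-\tbeta|\leq c_1$) gives $\scate(X)$ a bounded density near $\tbeta$. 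The second sub-increment is handled by Taylor-expanding $\beta\mapsto h(\scate,\beta)$ about $\sbeta$, whose derivative there is $\tfrac{1}{\alpha}f_{\scate(X)}(\sbeta)\prns{\E[\cate\mid\scate(X)=\sbeta]-\sbeta}$. This is where the exponent $1+\kappa$ and the only genuine subtlety appear: when $\scate=\catef$ ($\kappa=1$) the factor $\cate-\tbeta$ is itself $O(\magd{\tcate-\catef}_\infty)$ on the disagreement event (there $\tbeta$ separates $\tcate(X)$ from $\cate$) and the Taylor derivative vanishes (since $\E[\cate\mid\cate=\sbeta]=\sbeta$ and $\sbeta=\beta^*$ optimizes the CVaR), so both sub-increments become \emph{quadratic}, recovering \cref{lemma:EphiErr}; when $\scate\neq\catef$ ($\kappa=0$) neither economy holds---$\cate-\tbeta$ is only bounded and the derivative need not vanish---so both sub-increments are merely \emph{linear}. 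Uniformly bookkeeping these two regimes as $\magd{\scate-\tcate}_\infty^{1+\kappa}+|\sbeta-\tbeta|^{1+\kappa}$ is the main obstacle.

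The variance bound does not invoke the either/or hypothesis and is the routine half: as in the second part of \cref{lemma:EphiErr}, I bound $\magd{\phi(X,A,\Yobs;\tprop,\tmu,\tcate,\tbeta)-\phi(X,A,\Yobs;\sprop,\smu,\scate,\sbeta)}_2$ by changing one argument at a time, using $|\Yobs|\leq B$, $\magd{\smu}_\infty,\magd{\tmu}_\infty\leq B$, and $\bar e\leq\tprop,\sprop\leq1-\bar e$ for the propensity and outcome pieces and the same bounded-density estimate for the indicator piece, each contribution being linear in the corresponding error. Assembling Moves~A and~B gives the first inequality and this term-by-term bound gives the second.
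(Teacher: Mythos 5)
Your proposal is correct and takes essentially the same route as the paper's proof: the same nuisance-by-nuisance decomposition, the same use of the either/or hypothesis to make the doubly-robust residual vanish at $(\sprop,\smu)$ and yield the three bilinear terms after a triangle inequality, and the same bounded-density argument near $\sbeta$ for the indicator and $\beta$ moves, with the $\kappa$-dependent exponent arising exactly as you describe (vanishing first-order term and small factor on the disagreement event only when $\scate=\catef$). The only cosmetic difference is that you Taylor-expand the $\beta$-increment where the paper bounds it term-by-term for $\kappa=0$; both give the linear rate.
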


\begin{proof}
We will proceed to bound each of
\begin{align}
\label{eq:orth3 1}&\abs{\E\phi(X,A,\Yobs;\tprop,\tmu,\tcate,\tbeta)-\E\phi(X,A,\Yobs;\sprop,\smu,\tcate,\tbeta)},\\
\label{eq:orth3 2}&\abs{\E\phi(X,A,\Yobs; \sprop,\smu,\tcate,\tbeta)-\E\phi(X,A,\Yobs;\sprop, \smu,\scate,\tbeta)},\\
\label{eq:orth3 3}&\abs{\E\phi(X,A,\Yobs; \sprop, \smu,\scate,\tbeta)-\E\phi(X,A,\Yobs;\sprop, \smu, \scate, \sbeta)}.
\end{align}

First, consider the case that $\sprop=e$. We will bound \cref{eq:orth3 1} by bounding each of
\begin{align}
\label{eq:orth3 1 1a}&\abs{\E\phi(X,A,\Yobs;\tprop,\tmu,\tcate,\tbeta)-\E\phi(X,A,\Yobs;e,\tmu,\tcate,\tbeta)},\\
\label{eq:orth3 1 1b}&\abs{\E\phi(X,A,\Yobs;e,\tmu,\tcate,\tbeta)-\E\phi(X,A,\Yobs;e,\smu,\tcate,\tbeta)}.
\end{align}
We begin with \cref{eq:orth3 1 1a}. We have
\begin{align*}
\E\phi(X,A,\Yobs;\tprop,\tmu,\tcate,\tbeta)-\E\phi(X,A,\Yobs;e,\tmu,\tcate,\tbeta)=~&\frac{1}{\alpha}\Eb{\indic{\tcate(X)\leq\tbeta}\frac1{\tprop(X)}(e(X)-\tprop(X))(\mu(X,1)-\tmu(X,1))}
\\&+\frac{1}{\alpha}\Eb{\indic{\tcate(X)\leq\tbeta}\frac1{1-\tprop(X)}(e(X)-\tprop(X))(\mu(X,0)-\tmu(X,0))}\\
\leq&\frac{1}{\alpha\bar e}\magd{e-\tprop}_2\prns{\fmagd{\smu(\cdot,1)-\tmu(\cdot,1)}_2+\fmagd{\smu(\cdot,0)-\tmu(\cdot,0)}_2}\\
&+\frac{1}{\alpha\bar e}\magd{e-\tprop}_2\prns{\fmagd{\mu(\cdot,1)- \smu(\cdot,1)}_2+\fmagd{\mu(\cdot,0)- \smu(\cdot,0)}_2}
.
\end{align*}
Next, we observe that \cref{eq:orth3 1 1b} is exactly 0.

Second, we consider the case that $\smu=\mu$.
We will bound \cref{eq:orth3 1} by bounding each of
\begin{align}
\label{eq:orth3 1 2a}&\abs{\E\phi(X,A,\Yobs;\tprop,\tmu,\tcate,\tbeta)-\E\phi(X,A,\Yobs;\tprop,\mu,\tcate,\tbeta)},\\
\label{eq:orth3 1 2b}&\abs{\E\phi(X,A,\Yobs;\tprop,\mu,\tcate,\tbeta)-\E\phi(X,A,\Yobs;\sprop,\mu,\tcate,\tbeta)}.
\end{align}
We begin with \cref{eq:orth3 1 2a}. We have
\begin{align*}
\E\phi(X,A,\Yobs;\tprop,\tmu,\tcate,\tbeta)-\E\phi(X,A,\Yobs;\tprop,\mu,\tcate,\tbeta)=~&\frac{1}{\alpha}\Eb{\indic{\tcate(X)\leq\tbeta}\frac1{\tprop(X)}(\tprop(X)-e(X))(\tmu(X,1)-\mu(X,1))}
\\&+\frac{1}{\alpha}\Eb{\indic{\tcate(X)\leq\tbeta}\frac1{1-\tprop(X)}(\tprop(X)-e(X))(\tmu(X,0)-\mu(X,0))}\\
\leq&\frac{1}{\alpha\bar e}\fmagd{\sprop-\tprop}_2\prns{\fmagd{\mu(\cdot,0)-\tmu(\cdot,0)}_2+\fmagd{\mu(\cdot,1)-\tmu(\cdot,1)}_2}\\
&+\frac{1}{\alpha\bar e}\fmagd{\sprop-e}_2\prns{\fmagd{\mu(\cdot,0)-\tmu(\cdot,0)}_2+\fmagd{\mu(\cdot,1)-\tmu(\cdot,1)}_2}
.
\end{align*}
Next, we observe that \cref{eq:orth3 1 2b} is exactly 0.

If $\kappa=1$ then the rest is argued as in \cref{lemma:EphiErr}.

Next, we tackle \cref{eq:orth3 2,eq:orth3 3} supposing $\kappa=0$.
Because either $\sprop=e$ or $\smu=\mu$, we have
\begin{align*}
&\abs{\E\phi(X,A,\Yobs; \sprop,\smu,\tcate,\tbeta)-\E\phi(X,A,\Yobs;\sprop, \smu,\scate,\tbeta)}
\\&\qquad=
\frac1{\alpha}\abs{\Eb{
\fprns{\findic{\scate(X)-\sbeta\leq \tbeta-\sbeta+\scate(X)-\tcate(X)}-\findic{\scate(X)-\sbeta\leq \tbeta-\sbeta}}
\prns{
\cate-\sbeta
}
}}.
\end{align*}
By assumption, there exists $c>0$ such that $\scate(X)-\beta^*$ has a density on $(-c,c)$ bounded by $F'_{\scate}(F_{\scate(X)}^{-1}(\alpha))+1$. Moreover, since $\fmagd{\scate(X)}_\infty\leq 2M$, we have $\fabs{\sbeta}\leq 2M$.
Therefore, provided that $\fabs{\tbeta-\sbeta}\leq c/3,\,\fmagd{\tcate(X)-\scate(X)}_\infty\leq c/3$,
we have that \cref{eq:orth3 2} is bounded by
\begin{align*}
\frac{4M}{\alpha}\prns{F'_{\scate(X)}(F_{\scate(X)}^{-1}(\alpha))+1}\prns{\fabs{\tbeta-\sbeta}+\fmagd{\tcate(X)-\scate(X)}_\infty}.
\end{align*}

Finally, provided that $\fabs{\tbeta-\sbeta}\leq c/3$, \cref{eq:orth3 3} is bounded by
\begin{align*}
&\fabs{\tbeta-\sbeta}+
\frac1\alpha\Eb{
\abs{\findic{\scate(X)\leq\tbeta}-\findic{\scate(X)\leq\sbeta}}\abs{\cate-\sbeta}}
+
\frac1\alpha\Eb{
\findic{\scate(X)\leq\tbeta}\fabs{\tbeta
-
\sbeta}
}
\\
&\leq \prns{1+\frac1\alpha+\frac{8M}{\alpha}\prns{F'_{\scate(X)}(F_{\scate(X)}^{-1}(\alpha))+1}}\fabs{\tbeta-\sbeta}
\end{align*}

The second inequality is proven the same as in \cref{lemma:EphiErr}.
\end{proof}

\subsection{Proof of \cref{lemma:popdoublevalid}}

\begin{proof}
By assumption of continuity on $\scate(X)$, we have that $\fPrb{\scate(X)\leq\sbeta}=\alpha$ and that the right-hand side of \cref{eq:popvalid} is equal to $\Eb{\frac1\alpha\findic{\scate(X)\leq\sbeta}\cate}$. On the other hand, by the dual formulation of CVaR \citep{rockafellar2000optimization}, we have
$$
\cvarat\alpha(\cate)=\inf_{0\leq Z\leq\frac1\alpha,\,\E Z=1}\E[Z\cate].
$$
Since $Z=\frac1\alpha\findic{\scate(X)\leq\sbeta}$ is feasible, the conclusion follows.
\end{proof}

\subsection{Proof of Theorem~\ref{thm:doublyrobust}}

\begin{proof}
The proof proceeds as in \cref{thm:asympnormal} but using \cref{lemma:EphiErr2} and noting that 
$\E\phi(X,A,Y;\sprop,\smu,\tau,\beta^*)=\Psi$ provided that either $\sprop=e$ or $\smu=\mu$.
\end{proof}

\subsection{Proof of Theorem~\ref{thm:doublyvalid}}

\begin{proof}
The proof proceeds as in \cref{thm:asympnormal}  but using \cref{lemma:EphiErr2} and noting that 
\cref{lemma:betalemma} applies analogously for $\scate$, that we have
$\E\phi(X,A,Y;\sprop,\smu,\scate,\sbeta)=\sbeta+\frac1\alpha\Efb{\findic{\scate(X)\leq\sbeta}(\cate-\sbeta)}$ provided that either $\sprop=e$ or $\smu=\mu$, and finally that this upper bounds $\Psi$ by \cref{lemma:popdoublevalid}.
\end{proof}

\end{document}